\def\subsectiontitle{}
\def\subsubsectiontitle{}
\def\ps@pprintTitle{%
	\let\@oddhead\@empty
	\let\@evenhead\@empty
	\def\@oddfoot{\emph{Very preliminary version}\hfill\emph{This draft: \today}}%
	\let\@evenfoot\@oddfoot}
\newsavebox\extrainfobox
\let\oldfootnote\footnote
\renewcommand\footnote[1]{\oldfootnote{\hspace{.4mm}#1}}
\renewenvironment{proof}[1][\proofname] {\par\pushQED{\qed}\normalfont\topsep6\p@\@plus6\p@\relax\trivlist\item[\hskip\labelsep\bfseries#1\@addpunct{.}]\ignorespaces}{\popQED\endtrivlist\@endpefalse}
\let\oldFootnote\footnote
\newcommand\nextToken\relax
\renewcommand\footnote[1]{%
	\oldFootnote{#1}\futurelet\nextToken\isFootnote}
\newcommand\isFootnote{%
	\ifx\footnote\nextToken\textsuperscript{,}\fi}
\DeclareFontFamily{U}{mathb}{\hyphenchar\font45}
\DeclareFontShape{U}{mathb}{m}{n}{
	<-6> mathb5 <6-7> mathb6 <7-8> mathb7
	<8-9> mathb8 <9-10> mathb9
	<10-12> mathb10 <12-> mathb12
}{}
\DeclareSymbolFont{mathb}{U}{mathb}{m}{n}
\DeclareMathSymbol{\llcurly}{\mathrel}{mathb}{"CE}
\DeclareMathSymbol{\ggcurly}{\mathrel}{mathb}{"CF}
\newtheorem{definition}{Definition}
\newtheorem{theorem}{Theorem}
\newtheorem*{theorem*}{Theorem}
\newtheorem{proposition}{Proposition}
\newtheorem{lemma}{Lemma}
\newtheorem{example}{Example}
\newtheorem{corollary}{Corollary}
\newenvironment{customex}[1]
{\innercustomex}
{\endinnercustomex}
\def\citeapos#1{\citeauthor{#1}'s (\citeyear{#1})}
\def\w{\omega}
\def\bw{\bar{\omega}}
\def\C{\mathcal{C}}
\def\O{\mathcal{O}}
\begin{document}

\title{Cores in discrete exchange economies with complex endowments\thanks{First draft: March 2020. I am grateful to the editor and two anonymous referees for comments. I acknowledge the financial support by National Natural Science Foundation of China (Grant 71903093 and Grant 72033004). Other acknowledgments are to be added.
}}

\author{Jun Zhang\\
	\small 
	Institute for Social and Economic Research, Nanjing Audit University, China\\ \small zhangjun404@gmail.com}

\date{April 28, 2021}

\maketitle

\begin{abstract}
    The core is a traditional and useful solution concept in economic theory. But in discrete exchange economies without transfers, when endowments are complex, the core may be empty. This motivates \cite{balbuzanov2019endowments} to interpret endowments as exclusion rights and propose a new concept called exclusion core. Our contribution is twofold. First, we propose a rectification of the core to solve its problem under complex endowments. Second, we propose a refinement of Balbuzanov and Kotowski's exclusion core to improve its performance. Our two core concepts share a common idea of correcting the misused altruism of unaffected agents in blocking coalitions. We propose a mechanism to find allocations in the two cores.
\end{abstract}

\noindent \textbf{Keywords}: discrete exchange economy; complex endowments; core; coalition blocking

\noindent \textbf{JEL Classification}: C71, C78, D78

\thispagestyle{empty}
\setcounter{page}{0}
\newpage

\section{Introduction}

We study a model of discrete exchange economies where side payments are forbidden and endowments are complex. It is a generalization of \citeapos{shapley1974cores} housing market model and \citeapos{HZ1979} house allocation model. The two models are well studied in the market design literature. 
But, as \cite{balbuzanov2019endowments} (BK hereafter) recently emphasize, the endowment structures in the two models are too simple to describe complex property rights in practice.\footnote{In the housing market model, every agent owns a distinct object, while in the house allocation model, agents collectively own all objects. See \cite{balbuzanov2019endowments} for examples of complex property rights in practice that cannot be described by the two simple models.} A model that places complex endowments at the forefront is of theoretical interest and of practical relevance. In the model we study, which is attributed to BK,  an agent may own multiple objects, none at all, or co-own objects with the others in a complex manner. 

In models with simple endowments, desirable allocations are well defined and found by mechanisms such as top trading cycle and serial dictatorship. But when endowments are complex, it is not immediately clear what kind of allocations are desirable and reflect agents' complex property rights. Given this, as a traditional and useful concept in economic theory, the core becomes a natural solution. The core predicts the set of allocations that will appear when agents can freely form coalitions to improve upon their assignments by reallocating their endowments among themselves. Also, if we recall its success and relation with desirable mechanisms in models with simple endowments, we may expect that the core can hint at desirable mechanisms to solve our model.\footnote{In the housing market model, the core is a singleton and coincides with the outcome of the top trading cycle mechanism \citep{roth1977weak}. In the house allocation model, the core equals the set of Pareto efficient allocations, which are the outcomes of the serial dictatorship mechanism. In a generalized indivisible goods allocation model, \cite{sonmez1999strategy} further relates the existence of individually rational, Pareto efficient and strategy-proof mechanisms to the single-valuedness of the core.} However, an unpleasant fact is that, the core may be empty under complex endowments.\footnote{To be precise, there are two definitions of the core in discrete exchange economies without transfers. One is based on weak domination and called the \textit{strong core}, and the other is based on strong domination and called the \textit{weak core}. We will define the two notions in the paper, but in Introduction we use ``the core'' to mean the strong core.} 

Our \textbf{first} contribution is to find the root of the problem with the core under complex endowments and to rectify it. In the definition of the core, a group of agents can form a coalition to block an allocation if by reallocating their endowments among themselves, they are no worse off and some of them are strictly better off. Assuming that unaffected agents are willing to join a blocking coalition to help the others in the coalition without harming themselves is crucial for the core to be efficient, and this is often called altruism. But we find that this altruism argument is problematic or misused under complex endowments. Our finding is related to \citeauthor{shapley1974cores}'s observation that, in the housing market model if agents' preferences are not strict, the core may be empty.  In our model although agents' preferences are strict, when a group of agents owns more endowments than their demands and they are satisfied by some of their endowments, they will be indifferent about the allocations of their remaining endowments, but they have the rights to determine the allocations of their remaining endowments. With this resulting indifferent preferences, they may join too many blocking coalitions to make the core empty. This is much like \citeauthor{shapley1974cores}'s observation in the housing market model.

We ascribe the above problem to the altruism argument, because we find that there is a fundamental difference between complex endowments and simple endowments regarding the incentive of unaffected agents to join a blocking coalition. We call a group of unaffected agents \textbf{self-enforcing} if their assignments come from their endowments. Such a group can ensure their assignments without relying on any others. In our model, when a self-enforcing group of unaffected agents has more endowments than their assignments, if they are willing to help the others obtain their remaining endowments, this is regarded as pure altruism. Intuitively, pure altruism should be neutral to the others. That is, there is no reason to expect that self-enforcing unaffected agents want to be biased towards some others. But the definition of the core admits biased altruism: self-enforcing unaffected agents are willing to join a blocking coalition to help those in the coalition and harm those outside the coalition. By contrast, the traditional altruism argument does not result in a problem in the housing market model, because self-enforcing unaffected agents in the model are never purely altruistic. For any group of unaffected agents in a blocking coalition, if they are self-enforcing, they must contribute nothing to the others in the coalition. If they are not self-enforcing, it means that they rely on the others in the coalition to ensure their assignments. So their participation in the coalition cannot be regarded by pure altruism. This explains why their so-called altruism may be biased and favor those in the coalition.

After identifying the root of the problem with the core, we rectify its definition by adding a neutrality assumption on self-enforcing unaffected agents. When a blocking coalition includes self-enforcing unaffected agents, we require that they cannot reallocate any of their endowments that have been allocated to the agents outside the coalition to help those in the coalition. We call the concept in the new definition \textbf{rectified core}. The rectified core supersedes the core, and equals the core in many special cases (Section \ref{Section:special:economies}). Most importantly, it is efficient and nonempty.

Our \textbf{second} contribution is to propose a refinement of BK's solution. After observing the problem with the core, BK choose a non-standard interpretation of endowments: agents have the right to evict any others who occupy their endowments, and this exclusion right can be extended to the endowments of those who occupy their endowments, and so on. To see its difference with the standard interpretation that agents have the right to consume or exchange endowments, let us consider a simple example in which two agents $ \{1,2\} $ co-own an object $ a $ and another agent $ 3 $ owns an object $ b $. Suppose agents $ 1 $ and $ 3 $ prefer $ a $ to $ b $, and agent $ 2 $ prefers $ b $ to $ a $. In an allocation where $ a $ is assigned to $ 3 $ and $ b $ is assigned to $ 1 $, by BK's definition, $ \{1,2\} $ get direct control of their endowment $ a $ and indirect control of $ b $, because $ b $ is the endowment of the agent who occupies $ a $. So $ \{1,2\} $ are able to evict $ 3 $ from $ a $ to reach the allocation where $ a $ is assigned to $ 1 $ and $ b $ is assigned to $ 2 $. Both $ 1 $ and $ 2 $ become better off. This blocking cannot happen in the standard interpretation of endowments because $ \{1,2\} $ cannot make $ 3 $ worse off but occupy $ 3 $'s endowment. BK define the \textbf{exclusion core} as the set of allocations where no coalition can make themselves better off by evicting the others from objects they directly or indirectly control. The exclusion core is nonempty and rules out many unintuitive allocations.

BK emphasize that excluding unaffected agents from blocking coalitions they define is crucial for the exclusion core to be nonempty. But they also notice that in some economies the exclusion core fails to rule out unintuitive allocations that the standard interpretation of endowments can easily rule out. For example, suppose two agents $ \{1,2\} $ co-own an object $ a $, but $ a $ is assigned to another agent $ 3 $ who owns nothing, while nothing is assigned to $ 1 $ and $ 2 $. Because $ \{1,2\} $ cannot be better off simultaneously, they cannot evict $ 3 $ from $ a $, and this unintuitive allocation belongs to the exclusion core. BK turn to a model with more intricate exclusion rights to rule out this unintuitive allocation. Differently, we solve this inadequacy of the exclusion core by allowing unaffected agents to join blocking coalitions. Like our idea for the rectified core, we modify the altruism of self-enforcing unaffected agents to require that if they participate in a blocking coalition, their purpose is only to help the others in the coalition use their joint exclusion rights; they never harm the others outside the coalition by using their own exclusion rights.\footnote{When unaffected agents are self-enforcing, they cannot be evicted by any others, and so their participation in an exclusion blocking coalition can be regarded as altruism.} In the above example, $ \{1,2\} $ will be able to evict $ 3 $ and assign $ a $ of one of them. The other agent in the coalition is unaffected and self-enforcing (by obtaining nothing), but the eviction reflects their joint exclusion right. The \textbf{refined exclusion core} we propose remains to be nonempty.

The rectified core and the refined exclusion core are independent concepts and based on different interpretations of endowments. But they share a common idea of correcting the misused altruism of unaffected agents in blocking coalitions. We treat our proposal of the rectified core as a response to BK by showing that dropping the standard interpretation of endowments is unnecessary if the purpose is only to solve the problem with the core. On the other hand, we acknowledge that BK's new interpretation of endowments is of independent merit and has different power than the standard interpretation. Interestingly, the two interpretations are compatible. We prove that the rectified core and the refined exclusion core do not include each other, but they have a nonempty intersection. We propose a mechanism whose outcomes belong to their intersection. So we may regard it as a desirable mechanism to solve our model. It is a generalization of the so-called ``you request my house - I get your turn'' mechanism, which is originally proposed by \cite{AbduSonmez1999} to solve a hybrid of the house allocation model and the housing market model. 

To further explore the  relations between our solutions and those they are to replace, we analyze several special cases of our model. These special cases are also of independent interest because they are natural generalizations of the models with simple endowments in the literature. Some of the solutions will become equal in these special cases. In particular, if every group of agents never has more endowments than their endowments, then the core will equal the rectified core, which means that the core is nonempty.  

We organize the paper as follows. After presenting the model in Section \ref{Section:model}, we analyze the problem with the core and present our rectification in Section \ref{Section:rectified:core}. We then propose our refinement of BK's exclusion core in Section \ref{Section:refine:exclusion core}. We present our mechanism in Section \ref{Section:YRMH}, and clarify the relations between the several solutions in Section \ref{Section:relations}. We further study several special types of economies in Section \ref{Section:special:economies}. We discuss two applications in Section \ref{Section:application}, and discuss related literature in Section \ref{Section:discussion}. Appendix includes all proofs.

\section{The complex endowments model}\label{Section:model}

In the complex endowments model of \cite{balbuzanov2019endowments},\footnote{BK call the model here \textit{simple economies}. They also consider an extension called \textit{relational economies}. See the related literature in Section \ref{Section:discussion}.} an economy is represented by a tuple $ \Gamma=(I,O,\succ_I,\{C_o\}_{o\in O}) $, where $ I $ is a finite set of agents, $ O $ is a finite set of indivisible heterogeneous objects, $ \succ_I $ is a preference profile of agents, and $ \{C_o\}_{o\in O} $ is the endowment system. We use $ i $ or $ j $ to denote an agent, and use $ a $, or $ b $, or $ o $ to denote an object. Every $ i $ demands at most one object, and has strict preferences represented by a linear order $ \succ_i $ on $ O\cup \{o^*\} $, where $ o^* $ is a null object. An object $ o $ is acceptable to $ i $ if $ o\succ_i o^* $. For any two objects $ a$ and $b $, we write $ a\succsim_i b $ if $ a=b $ or $ a\succ_i b $. Every nonempty subset of agents is called a \textbf{coalition}. A coalition $ C' $ is a \textbf{sub-coalition} of another coalition $ C $ if $ C'\subsetneq C $.  Every $ o $ is owned by a nonempty subset of agents $ C_o\subset I$. Every $ i\in C_o $ is called an owner of $ o $. If $ C_o $ is a singleton, $ o $ is called privately owned by the agent in $ C_o $, while if $ C_o=I $, $ o $ is called publicly owned by all agents. The set of \textbf{endowments} owned by each coalition $ C $, denoted by $ \w(C) $, consists of objects that are owned by $ C $ or by sub-coalitions of $ C $. So $ \w(C)=\{o\in O:C_o\subset C\} $.

An \textbf{allocation} is a mapping $ \mu:I\rightarrow O\cup \{o^*\} $ such that $ |\mu(i)|=1 $ for all $ i\in I $ and $ |\mu^{-1}(o)|= 1 $ for all $ o\in O $. For every $ i $, $ \mu(i) $ is the object assigned to $ i $, and if $ \mu(i)=o^* $, it means that $ i $ receives nothing. For every coalition $ C$, let $ \mu(C)=\cup_{i\in C}\mu(i) $ denote the set of objects assigned to $ C $.  An allocation $ \mu $ is Pareto dominated by another allocation $ \sigma $ if $ \sigma(i)\succsim_i \mu(i) $ for all $ i$, and $ \sigma(j)\succ_j \mu(j) $ for some $ j$. An allocation is \textbf{Pareto efficient} if it is not Pareto dominated by any other allocation. We use \textbf{PE} to denote the set of Pareto efficient allocations in any economy.

Let $ \mathcal{E} $ denote the set of economies and $ \mathcal{A} $ the set of allocations. A solution is a correspondence $ f:\mathcal{E}\rightarrow 2^\mathcal{A}$, which finds a set of allocations $ f(\Gamma) $ for every $ \Gamma\in \mathcal{E} $. Note that we allow $ f(\Gamma) $ to be empty for some $ \Gamma $.

The house allocation model and the housing market model are two special cases of the complex endowments model. In the house allocation model, all objects are publicly owned. In the housing market model, there are equal numbers of agents and objects, and every agent owns a distinct object. In both models, it is often assumed that agents regard all objects as acceptable.

\section{Rectification of the core}\label{Section:rectified:core}

\subsection{The problem with the core under complex endowments}

In discrete exchange economies without transfers, \cite{shapley1974cores} and \cite{roth1977weak} have observed that it is necessary to distinguish between two different notions of the core: the weak core and the strong core. An allocation belongs to the weak core if no coalition can make all members strictly better off by reallocating their endowments among themselves. An allocation belongs to the strong core if no coalition can make all members no worse off and at least one member strictly better off by reallocating their endowments among themselves.

\begin{definition}\label{Definition:weakblock}
	An allocation $ \mu $ is \textbf{weakly blocked} by a coalition $ C $ via another allocation $ \sigma $ if
	\begin{enumerate}
		\item $ \forall i\in C $, $ \sigma(i) \succsim_i \mu(i) $ and $ \exists j\in C $, $ \sigma(j)\succ_j\mu(j) $;
		
		\item $ \sigma(C)\subset \w(C)\cup \{o^*\}$.
	\end{enumerate}
	In the above definition, if $\forall i\in C $, $ \sigma(i) \succ_i \mu(i) $, then $ \mu $ is \textbf{strongly blocked} by $ C $ via $ \sigma $.

	The \textbf{strong core} consists of allocations that are not weakly blocked, and the \textbf{weak core} consists of allocations that are not strongly blocked.
\end{definition}

We call a coalition $ C $ \textbf{self-enforcing} in an allocation $ \sigma $ if $ \sigma(C)\subset \w(C)\cup \{o^*\} $. A self-enforcing coalition can ensure their assignments by making agreements among themselves, without relying on those outside the coalition. The two blocking notions in Definition \ref{Definition:weakblock} require a blocking coalition to be self-enforcing, but they make different assumptions on agents' desire to enforce blocking. In strong blocking, agents are selfish. They are willing to join a blocking coalition if and only if they benefit from the blocking. While in weak blocking, agents are willing to help the others if they are not harmed. The participation of unaffected agents in a blocking coalition is often called altruism. Since \cite{shapley1974cores}, it has been well understood that the weak core may be inefficient, and the participation of unaffected agents in weak blocking is crucial for the strong core to be efficient.\footnote{If an allocation $ \mu $ is Pareto dominated by another allocation $ \sigma $, then $ \mu $ is weakly blocked by the grand coalition $ I $ via $ \sigma $.}  So the strong core is more appealing than the weak core.

In the housing market model, the strong core is a singleton. In the house allocation model, the strong core equals the set of Pareto efficient allocations. However, when we allow for complex endowments defined in Section \ref{Section:model}, the strong core may be empty. See the following  example of BK.

\begin{example}\label{Example:Kingdom}
	Consider three agents $ \{1,2,3\} $ and two objects $ \{a,b\} $. Both objects are privately owned by agent $ 1 $. All agents regard both objects as acceptable and prefer $ a $ to $ b $. We use the following tables to represent endowments, preferences, and three allocations $ \{\mu,\sigma, \delta\}$ under our consideration.
	\begin{center}
		\begin{tabular}{ccc}
			& $ a $ & $ b $  \\ \hline
			$ C_o $: & $ 1 $ & $ 1 $ \\ \hline
			$ \mu $: & $ 1 $ &  \\
			$ \sigma $: & $ 1 $ & $ 2 $\\
			$ \delta $: & $ 1 $ & $ 3 $ 
		\end{tabular}
		\quad \quad
		\begin{tabular}{ccc}
			$ \succ_1 $	& $ \succ_2 $ & $ \succ_3 $  \\ \hline
			$ a $ & $ a $ & $ a $ \\
			$ b $ & $ b $ & $ b $ \\
			& \\
			&
		\end{tabular}
	\end{center}
	
	Because $ 1 $ owns all objects, it is intuitive that $ 1 $ should obtain the best object $ a $. Given that, if we do not want $ b $ to be wasted, one of the other two agents should obtain $ b $. So $ \sigma $ and $ \delta $ are intuitive allocations in this example. Both $ \sigma $ and $ \delta $ Pareto dominate $ \mu $, in which $ b $ is wasted.
	
    Now we examine the two core notions. The weak core in this example equals $ \{\mu,\sigma, \delta\}$. If $ 1 $ does not obtain $ a $, $ 1 $ alone will strongly block the allocation. Yet though $ \mu $ is inefficient, it cannot be strongly blocked because any strong blocking coalition has to include $ 1 $, but because $ 1 $ cannot be improved she is not willing to join any coalition. 
	
	However, the strong core is empty. Clearly, $ \mu $ is weakly blocked by the grand coalition via $ \sigma $ or $ \delta $. What strange is that, $ \sigma $ and $ \delta $ can be weakly blocked via each other: $ \{1,3\} $ can weakly block $ \sigma $ via $ \delta $ by reallocating $ b $ from $ 2 $ to $ 3 $, while $ \{1,2\} $ can weakly block $ \delta $ via $ \sigma $ by reallocationg $ b $ from $ 3 $ to $ 2 $. Because $ 1 $ owns all objects, the two coalitions are self-enforcing, and because $ 1 $ is unaffected, she is willing to join each of the two coalitions to  help the other agent obtain $ b $.
\end{example}

Example \ref{Example:Kingdom} reminds us of the early observation of \cite{shapley1974cores} that, in the housing market model, if agents' preferences are not strict, then the strong core may be empty. They present a simple example where three agents $ 1,2,3 $ respectively own three objects $ a,b,c $. $ 1 $ and $ 3 $ most prefer $ b $ and then prefer their own endowment to the other's endowment, while $ 2 $ is indifferent between all objects. There are two Pareto efficient allocations. In one allocation, $ 1 $ and $ 2 $ exchange endowments, and $ 3 $ remains with her endowment. In the other allocation, $ 2 $ and $ 3 $ exchange endowments, and $ 1 $ remains with her endowment. In each of the two allocations, $ 2 $ is willing to form a coalition with the agent who does not obtain $ b $ to reallocate $ b $ to the agent. So the strong core is empty.

What happens in Example \ref{Example:Kingdom} is very similar. After agent $ 1 $ obtains her favorite object $ a $, which is one of her endowments, she is indifferent about the allocation of her remaining endowment $ b $, but she has the right to determine it. Under the altruism assumption  in weak blocking, $ 1 $ is willing to form a coalition with the agent who does not obtain $ b $ to reallocate $ b $ to the agent. This makes the strong core empty. In general, although agents have strict preferences over objects in our model, when a group of agents owns more endowments than their demands, indifferent preferences over allocations can implicitly appear to make the strong core empty. Section \ref{Section:special:economies} proves that if no group of agents owns more endowments than their demands, the strong core must be nonempty.

The contrast between the failure of the strong core under complex endowments and its success under simple endowments motivates us to compare the altruism of unaffected agents in the two environments. In the house allocation model, because all objects are public endowments, every blocking coalition must include all agents, so that the blocking must be a Pareto improvement. While in the housing market model, although the participation of unaffected agents in a blocking coalition is often explained by altruism, we argue that this altruism is not pure. To see this, suppose that a blocking coalition $ C $ includes a group of unaffected agents $ C' $. If $ C' $ is self-enforcing, then their endowments must be fully allocated to themselves so that they contribute nothing to the others in $ C $. So it is without loss to remove $ C' $ from $ C $. Then we can assume that every group of unaffected agents $ C' $ in  $ C $ is not self-enforcing. This means that $ C' $ relies on the others in $ C $ to ensure their assignments. So their participation in $ C $ can be due to the concern that they may  lose their assignments if they do not join the coalition. In this case, their so-called altruism can be biased and favor those in the coalition.

Differently, under complex endowments, if a blocking coalition $ C $ includes a group of unaffected agents $ C' $ that is self-enforcing, it means that $ C' $ does not rely on the others in $ C $ to ensure their assignments. So their participation in $ C $ is purely altruistic. When $ C' $ has more endowments than their demands, the traditional altruism assumes that $ C' $ is willing to help the agents in $ C $ and harm those outside $ C $. In other words, the pure altruism of $ C' $ is not neutral to the other agents. In Example \ref{Example:Kingdom}, when agent $ 1 $ joins the coalition $ \{1,3\} $ to block $ \sigma $ via $ \delta $, $ 1 $ helps $ 3 $ but harms $ 2 $, while when $ 1 $ joins the coalition $ \{1,2\} $ to block $ \delta $ via $ \sigma $, $ 1 $ helps $ 2 $ but harms $ 3 $. We see that this misused altruism assumption is too permissive towards the formation of blocking coalitions. It also contradicts our intuitive understanding of altruism. Below we rectify the definition of the strong core  by requiring that the altruism of self-enforcing unaffected agents be neutral to the other agents.

\subsection{Our rectification}

To see our idea in the rectification of the strong core, consider a coalition $ C $ that weakly blocks an allocation $ \mu $ via another allocation $ \sigma $. For notational ease, we define
\[
C_{\sigma=\mu}= \{i\in C:\sigma(i)=\mu(i)\} \text{ and }C_{\sigma>\mu}= \{i\in C:\sigma(i)\succ_i\mu(i)\}.
\]

Suppose that a sub-coalition $ C'\subset C_{\sigma=\mu} $ is self-enforcing in $ \sigma $. If $ \w(C')\backslash \sigma(C') $ is nonempty, we call this set the \textbf{redundant endowments} of $ C' $ in $ \sigma $. In the blocking, the agents in $ C' $ help those in $ C\backslash C' $ in three ways:
\begin{enumerate}
	\item If some agents in $ C\backslash C' $ obtain redundant endowments of $ C' $ in $ \mu $ and they wish to reallocate such objects in $ \sigma $, then $ C' $ reallocates such objects as they wish.	
	
	\item If in $ \sigma $ some agents in $ C\backslash C' $ wish to obtain redundant endowments of $ C' $ that are assigned to nobody in $ \mu $, then $ C' $ reallocates such objects as they wish.
	
	\item If in $ \sigma $ some agents in $ C\backslash C' $ wish to obtain redundant endowments of $ C' $ that are assigned to $ I\backslash C $ in $ \mu $, then $ C' $ reallocates such objects as they wish.
\end{enumerate}
The first two ways enhance the efficiency of the allocation of $ \w(C')\backslash \sigma(C') $ and do not harm any agents. But the third way has only a redistribution effect. It cannot be justified by altruism because it helps $ C\backslash C' $ but may harm $ I\backslash C $. 
In our rectification, we assume that every self-enforcing $ C'\subset C_{\sigma=\mu} $ is not willing to reallocate their redundant endowments from $ I\backslash C $  to $ C\backslash C' $. This is formalized as a condition that will be added to Definition \ref{Definition:weakblock}:
\begin{center}
	\textit{For every self-enforcing $ C'\subset C_{\sigma=\mu} $ and $ i\in C\backslash C' $, $ \sigma(i)\in  \omega(C') \implies \sigma(i)\notin \mu(I\backslash C) $.}
\end{center}
In words, if any $ i\in C\backslash C' $ obtains any redundant endowment of $ C' $ in $ \sigma $, then the object cannot be the assignment obtained by any agent outside $ C $ in $ \mu $.

\begin{definition}\label{Definition:induction:block}
	An allocation $ \mu $ is \textbf{rectification blocked} by a coalition $ C $ via another allocation $ \sigma $ if 
	\begin{enumerate}
		\item $ \forall i\in C $, $ \sigma(i) \succsim_i \mu(i) $ and $ \exists j\in C $, $ \sigma(j)\succ_j\mu(j) $;
		
		\item $ \sigma(C)\subset \w(C)\cup \{o^*\} $;
		
		\item for every self-enforcing $ C'\subset C_{\sigma=\mu} $ and $ i\in C\backslash C' $, $ \sigma(i)\in  \omega(C') \implies \sigma(i)\notin \mu(I\backslash C)$.
	\end{enumerate}
The \textbf{rectified core} consists of allocations that are not rectification blocked.
\end{definition}

We will write rectification blocking as \textbf{r-blocking} for short.  
In Example \ref{Example:Kingdom}, agent $ 1 $ will not be willing to join any coalition to block $ \sigma $ or $ \delta $, but $ 1 $ will be willing to join a coalition to block $ \mu $. So the rectified core equals $ \sigma $ and $ \delta $. We present another example to further illustrate the rectified core.


 \begin{example}
 	Consider five agents $ \{1,2,3,4,5\} $ and five objects $ \{a,b,c,d,e\} $. Agents' endowments and preferences are shown in the following tables. We only list agents' acceptable objects in their preferences. This convention is followed by other examples in the paper.
 	\begin{center}
 		\begin{tabular}{cccccc}
 			& $ a $ & $ b $ & $ c $ & $ d $ & $ e $ \\ \hline
 			$ C_o $: & $ 2 $ & $ 4 $ & $ 1 $ & $ 4 $ & $ 2 $\\ \hline
 			$ \mu $: & $ 2 $ & $ 1 $ & $ 3 $ & $ 4 $ & $ 5 $\\
 			$ \sigma $ & $ 5 $ & $ 1 $ & $ 2 $ & $ 4 $ & $ 3 $
 		\end{tabular}
 		\quad \quad
 		\begin{tabular}{ccccc}
 			$ \succ_1 $	& $ \succ_2 $ & $ \succ_3 $ & $ \succ_4 $ & $ \succ_5 $\\ \hline
 			$ b $ & $ c $ & $ e $ & $ d $ & $ e $\\
 			$ c $ & $ a$ & $ c $ & $ b $ & $ d $\\
 			$ a $ & $ e $ & $ d  $& $ a $ & $ a $
 		\end{tabular}
 	\end{center}
 
  We show that $ \{1,2,3,4\} $ r-blocks $ \mu $ via $ \sigma $. First, the coalition is self-enforcing. Second, in the coalition $ 1,4 $ are unaffected in the blocking. Both $ \{4\} $ and $ \{1,4\} $ are self-enforcing. The redundant endowment of $ \{4\} $ is $ b $, which is assigned to $ 1 $ in both $ \mu $ and $ \sigma $. The redundant endowment of $ \{1,4\} $ is $ c $, which is assigned to $ 2 $ in $ \sigma $ and to $ 3 $ in $ \mu $. So the third condition in Definition \ref{Definition:induction:block} is satisfied. Clearly, $ \sigma $ belongs to the rectified core, because all of $ 1,2,3,4 $ obtain their favorite objects, and $ 5 $ does not own any object.
 \end{example}

The rectified core supersedes the strong core. First, the rectified core is a superset of the strong core, because r-blocking is a restriction of weak blocking. Second, the rectified core is a subset of the weak core, because strong blocking is a special case of r-blocking. Third, the rectified core is Pareto efficient. If an allocation $ \mu $ is Pareto dominated by another allocation $ \sigma $, then $ \mu $ is r-blocked by the grand coalition $ I $ via $ \sigma $. Last, and most importantly, the rectified core is nonempty. 

\begin{theorem}\label{thm:extend core}
	The rectified core is nonempty.
\end{theorem}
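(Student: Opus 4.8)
The plan is to prove nonemptiness constructively, by exhibiting an explicit allocation and checking that it survives every r-blocking attempt. Note first that I cannot get existence for free: the rectified core sits between the strong core (which may be empty, so its nonemptiness is useless here) and the weak core, and Pareto efficiency only rules out grand-coalition blocking. I therefore need an algorithm that builds in the ownership structure, and I would adapt the top trading cycles / ``you request my house --- I get your turn'' idea to complex endowments. Fix once and for all a strict priority order $\pi$ on $I$; this order will both break ties among co-owners and drive the reassignment of freed endowments.

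The algorithm runs in rounds on the remaining agents and objects. In each round: every remaining agent points to her most preferred remaining acceptable object, or declares herself satisfied by pointing to $o^*$ if nothing remaining is acceptable; every remaining object $o$ with at least one remaining owner points to its highest $\pi$-priority remaining owner in $C_o$; and an object all of whose owners have already departed is treated as commonly available and points to the highest $\pi$-priority remaining agent who finds it acceptable. In the resulting functional digraph a cycle always exists, so I clear every cycle by assigning each object on it to the agent pointing at it, assign $o^*$ to the agents pointing to $o^*$, remove the matched agents and objects, and repeat. I would first check that this is well defined, terminates, and returns an allocation $\mu$ in the sense of Section~\ref{Section:model}; the delicate bookkeeping is the treatment of objects whose owners leave before the object does, which is exactly the YRMH-IGYT feature that prevents endowments from being wasted.

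The heart of the proof is to show $\mu$ is not r-blocked. I would argue by contradiction: suppose a coalition $C$ r-blocks $\mu$ via $\sigma$, and consider the first round at which the blocking ``bites'', i.e.\ the agent in $C_{\sigma>\mu}$ whose $\mu$-assignment left the market earliest. The algorithm's invariant is that a matched agent receives the most preferred object still present, so any object $o$ that some $j\in C$ strictly prefers to $\mu(j)$ must have departed strictly earlier on some cycle. To give $j$ such an $o$, the allocation $\sigma$ must take $o$ from its $\mu$-recipient, and feasibility $\sigma(C)\subset\omega(C)\cup\{o^*\}$ forces the entire chain of displacements to consist of objects the coalition owns. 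Tracing this chain of ``who is displaced to benefit whom'' back along the algorithm's cycles, I expect to reach either a $\pi$-priority or preference violation inside $C$, or an object that under $\mu$ was held by an agent of $I\setminus C$.

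This last case is where condition~3 of Definition~\ref{Definition:induction:block} does the work, and it is the step I expect to be the main obstacle. Relative to the ordinary strong-core argument, r-blocking is strictly more permissive: a self-enforcing unaffected subgroup $C'\subset C_{\sigma=\mu}$ may still hand its redundant endowments $\omega(C')\setminus\sigma(C')$ to the rest of $C$, so I cannot simply delete such agents as contributing nothing. Condition~3 restricts exactly this help --- any redundant endowment handed to $C\setminus C'$ cannot have been held by $I\setminus C$ under $\mu$ --- so the objects that genuinely enter $C$ from outside can only be objects that were unassigned or internally held under $\mu$. I would show that this is precisely what is needed to close the displacement chain without crossing the boundary of $C$ in a way that contradicts $\pi$-priority at the round where the first strict improver was matched, yielding the contradiction. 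Handling co-owned objects (the highest-priority-owner tie-break) and the interaction among several nested self-enforcing subgroups $C'$ will require care, but the crux is the single displacement-chain contradiction.
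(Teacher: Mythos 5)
Your overall strategy --- exhibit an allocation via a priority-driven trading-cycles algorithm and show it survives r-blocking --- is the same as the paper's (the paper proves Theorem~\ref{thm:extend core} by showing every YRMH-IGYT outcome is in the rectified core, Theorem~\ref{thm:YRMH-IGYT}). But your algorithm has a concrete, fatal gap: it lacks the paper's \emph{sharing ownership} device, and your substitute rule --- an object whose owners have all departed becomes ``commonly available'' and points to the highest-$\pi$-priority remaining agent --- is exactly the variant the paper shows fails. Run your algorithm on Example~\ref{Example:seconddesign}: four agents, $C_a=\{1,2,3\}$, $C_b=\{2\}$, $C_c=\{3\}$, $C_d=\{4\}$, with priority $4\rhd 2\rhd 3\rhd 1$. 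In round one, $a$ points to its highest-priority owner $2$, $2$ points to $a$, and the cycle $(2,a)$ clears. Now $b$'s sole owner $2$ is gone, so under your rule $b$ points to the highest-priority remaining agent who accepts it, namely $4$; since $4$'s favorite remaining object is $b$, the cycle $(4,b)$ clears and $4$ takes $b$. Continuing, $3$ gets $d$ and $1$ gets $c$, i.e., your algorithm outputs precisely the allocation $\mu$ of that example --- which is r-blocked by $\{1,2,3\}$ via $\sigma$ (condition 3 of Definition~\ref{Definition:induction:block} is vacuous there because no subset of $C_{\sigma=\mu}=\{1,2\}$ is self-enforcing). The paper's fix is that when the cycle $(2,a)$ is removed, the remaining owners $1,3$ of $a$ inherit \emph{shared ownership} of $2$'s leftover endowment $b$, so $b$ points to $3$, who takes it, yielding $\sigma$ instead. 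Without this inheritance, the redundant endowments of a satisfied co-owner leak to outsiders by raw priority, and self-enforcing coalitions like $\{1,2,3\}$ can r-block exactly because condition 3 does \emph{not} protect objects held by agents outside any self-enforcing unaffected subgroup.

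Beyond the algorithm itself, your blocking argument is also only a heuristic where the paper has to do real work. The paper's proof of the rectified-core half of Theorem~\ref{thm:YRMH-IGYT} takes the earliest-removed strict improver $i_0$ with $o_1=\sigma(i_0)$, and in the case where $o_1$ points to no agent it constructs the orbit $C'=\cup_k C^k$ with $C^0=C_{o_1}$, $C^k=\cup_{o\in\sigma(C^{k-1})}C_o$, then proves by induction --- using the sharing-ownership bookkeeping $S_o(t)$ at every step --- that all of $C'$ is removed before $o_1$, so $C'$ is a self-enforcing subset of $C_{\sigma=\mu}$ and condition 3 forces $o_1\notin\mu(I\backslash C)$, contradicting $\mu(i')=o_1$ with $i'\notin C$. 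Your ``displacement chain back to a priority or preference violation'' sketch never identifies this self-enforcing $C'$, which is the only way condition 3 can be invoked; and since your algorithm does not maintain the sets $S_o(t)$, the induction that makes $C'$ well-behaved is unavailable to you. So the proposal needs both the sharing-ownership modification to the algorithm and the orbit-induction argument before it can close.
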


Example \ref{Example:Kingdom} has shown that the rectified core can be strictly between the strong core and the weak core. The strong core in Example \ref{Example:Kingdom} is empty. Below, Example \ref{Example:strongcore=subsetof:inductioncore} shows that the rectified core can be a strict superset of the strong core when the strong core is nonempty. Example \ref{Example:co-ownership} shows that the rectified core can be a strict subset of the intersection of the weak core and the set of Pareto efficient allocations.
 
  \begin{example}[Nonempty strong core $ \subsetneq $ rectified core]\label{Example:strongcore=subsetof:inductioncore}
 	This example is a modification of Example \ref{Example:Kingdom}. We add an agent $ 4 $ and a publicly owned object $ c $. Agents' preferences are new.
 	\begin{center}
 		\begin{tabular}{cccc}
 			& $ a $ & $ b $ & $ c $  \\ \hline
 			$ C_o $: & $ 1 $ & $ 1 $ & $ 1,2,3,4 $ \\ \hline
 			$ \mu $: &  & $ 2 $ & $ 1 $ \\
 			$ \sigma $ & $ 1 $ & $ 3 $ & $ 2 $\\
 			$ \delta $ & $ 1 $ & $ 4 $ & $ 2 $ 
 		\end{tabular}
 		\quad \quad
 		\begin{tabular}{cccc}
 			$ \succ_1 $	& $ \succ_2 $ & $ \succ_3 $ & $ \succ_4 $ \\ \hline
 			$ c $ & $ c $ & $ b $ & $ b $\\
 			$ a $ & $ b$ & \\
 			& \\
 			&
 		\end{tabular}
 	\end{center}
 	
 	 Among the three allocations $ \{\mu,\sigma,\delta\} $, only $ \mu $ belongs to the strong core. $ \{1,4\} $ can weakly block $ \sigma $ via $ \delta $ by reallocating $ b $ from $ 3 $ to $ 4 $, and $ \{1,3\} $ can weakly block $ \delta $ via $ \sigma $ by reallocating $ b $ from $ 4 $ to $ 3 $. 
 	 
 	 All of the three allocations belong to the rectified core. $ \{1,4\}  $ cannot r-block $ \sigma $ via $ \delta $ because $ 1 $ is unaffected and self-enforcing and she cannot reallocate $ b $ from $ 3 $ to $ 4 $. For the same reason, $ \{1,3\} $ cannot r-block $ \delta $ via $ \sigma $.
 \end{example}

\begin{example}[Rectified core $ \subsetneq $ (weak core $ \cap $ PE)]\label{Example:co-ownership}
	Consider three agents $ \{1,2,3\} $ and one object $ \{a\}$. The only object is co-owned by $ \{1,2\} $. Every agent prefers $ a $ to $ o^* $.
	\begin{center}
		\begin{tabular}{cc}
			& $ a $   \\ \hline
			$ C_o $: & $ 1,2 $ \\ \hline
			$ \mu $: & $ 1 $  \\
			$ \sigma $ & $ 2 $  \\
			$ \delta $ &  $ 3 $ \\
			$ \eta $ & 
		\end{tabular}
		\quad \quad
		\begin{tabular}{ccc}
			$ \succ_1 $	& $ \succ_2 $ & $ \succ_3 $  \\ \hline
			$ a $ & $ a $ & $ a $ \\
			& \\
			&\\
			&\\
			&
		\end{tabular}
	\end{center}

There are four possible allocations. The weak core equals $ \{\mu,\sigma,\delta,\eta\} $. $ \delta $ and $ \eta $ are not strongly blocked by $ \{1,2\} $ because they cannot be better off simultaneously. The set of Pareto efficient allocations is $ \{\mu,\sigma,\delta\} $.

 The rectified core coincides with the strong core, both being $ \{\mu,\sigma\} $. $ \delta $ is r-blocked by $ \{1,2\} $ via $ \mu/ \sigma $, because one of them will be unaffected and self-enforcing (by obtaining $ o^* $) but she does not have private endowments.

\end{example}

\section{Refinement of the exclusion core }\label{Section:refine:exclusion core}

\subsection{The exclusion core and its inadequacy in some economies}

In BK's interpretation, endowments give their owners the right to evict the others from the objects they directly or indirectly control. The objects directly controlled by a coalition $ C $ are their endowments $ \w(C) $. In an allocation $ \mu $, $ C $ gets indirect control of the endowments of the agents who occupy $ \w(C) $ and, inductively, $ C $ gets the indirect control of the endowments of those who occupy objects directly or indirectly controlled by $ C $. So the set of objects controlled by $ C $ is defined to be $ \Omega(C|\w,\mu)=\w(\cup_{k=0}^\infty C^k) $, where $ C^0=C $ and $ C^k=C^{k-1}\cup (\mu^{-1}\circ \w)(C^{k-1}) $ for every $ k\ge 1 $. $ C $ is said to exclusion block $ \mu $ via another allocation $ \sigma $ if all members of $ C $ are strictly better off in $ \sigma $, and any agent who is worse off in $ \sigma $ is evicted from the objects controlled by $ C $. 

\begin{definition}[\cite{balbuzanov2019endowments}]\label{Definition:exclusion:core}
	An allocation $ \mu $ is \textbf{exclusion blocked} by a coalition $ C $ via another allocation $ \sigma $ if 
	\begin{enumerate}
		\item $ \forall i\in C $, $ \sigma(i) \succ_i \mu(i) $;
		
		\item $\mu(j)\succ_j \sigma(j)\implies \mu(j)\in  \Omega(C|\w,\mu)$.
	\end{enumerate}
	The \textbf{exclusion core} consists of allocations that are not exclusion blocked.
\end{definition}

We will write exclusion blocking as \textbf{e-blocking} for short.

There are two notable features of e-blocking. First, an e-blocking coalition needs not to be self-enforcing. It can happen that a coalition evicts an agent but occupies the agent's endowments in the new allocation (recall the example in Introduction). This cannot happen when agents only have the right to consume or exchange endowments.  A merit of this freedom is that it ensures the exclusion core to be Pareto efficient. If an allocation $ \mu $ is Pareto dominated by another allocation $ \sigma $, then $ \mu $ is e-blocked by $ I_{\sigma>\mu} $ (the set of strictly better off agents in $ \sigma $) via $ \sigma $. The coalition $ I_{\sigma>\mu} $ may not be self-enforcing and $ \Omega( I_{\sigma>\mu} |\w,\mu) $ can be empty. But because no agent is worse off in $ \sigma $, the condition 2 of Definition \ref{Definition:exclusion:core} is satisfied. BK have proved that the exclusion core is nonempty and included by the intersection of the weak core and PE. But the exclusion core is not a superset of the strong core. It can rule out an allocation in the strong core when the latter is nonempty. 

Second, every member in an e-blocking coalition must be strictly better off in the new allocation. BK emphasize that excluding unaffected agents from e-blocking coalitions is necessary for the exclusion core to be nonempty. Otherwise, weak blocking will imply e-blocking, which means that the exclusion core will be a subset of the strong core and thus it will have the same problem with the strong core.\footnote{Suppose that a coalition $ C $ weakly blocks an allocation $ \mu $ via another allocation $ \sigma $. Because $ C $ is self-enforcing, if we define an allocation $ \delta $ where for all $ i\in C $, $ \delta(i)=\sigma(i) $, for all $ j\in I\backslash C $ with $ \mu(j)\in \sigma(C) $, $ \delta(j)=o^* $, and for all remaining $ j' $, $ \delta(j')=\mu(j') $, then every agent who is worse off in $ \sigma $ is evicted from $ \w(C) $.} However, as BK have noticed, excluding unaffected agents makes the exclusion core fail to rule out unintuitive allocations in some economies. We use Examples \ref{Example:Kingdom} and \ref{Example:co-ownership} to illustrate these points.

\begin{customex}{Examples \ref{Example:Kingdom} and \ref{Example:co-ownership} revisited}
	In Example \ref{Example:Kingdom}, suppose we allow e-blocking coalitions to include unaffected agents. Then $ \{1,3\} $ will be able to block $ \sigma $ via $ \delta $  by evicting $ 2 $ from $ b $, and $ \{1,2\} $ will be able to block $ \delta $ via $ \sigma $  by evicting $ 3 $ from $ b $. So the exclusion core will be empty.
	
	On the other hand, when unaffected agents are excluded from e-blocking coalitions, in Example \ref{Example:co-ownership}, the exclusion core equals the intersection of the weak core and PE. So $ \delta $ is in the exclusion core. But $ \delta $ is unintuitive, because the only object $ a $ is not assigned to any of its owners. $ \{1,2\} $ cannot e-block $ \delta $ because they cannot be better off simultaneously. But $ \delta $ is easily ruled out by the strong core and by the rectified core.
\end{customex}

BK ascribe the inadequacy of the exclusion core in Example \ref{Example:co-ownership} to the inflexibility of exclusion rights generated from endowments. BK argue that each of $ \{1,2\} $ should have the right to evict $ 3 $ from their endowment $ a $, but they should not be able to evict each other. So BK introduce relational economies by replacing endowments with priorities to enrich the structure of exclusion rights, and extend the exclusion core.

Instead, we believe that including unaffected agents in e-blocking coalitions is a natural and convenient solution to the above inadequacy of the exclusion core. Our key idea is similar to Section \ref{Section:rectified:core}. To motivate our idea, we distinguish between the two types of unaffected agents in Examples \ref{Example:Kingdom} and \ref{Example:co-ownership}.
 
In Example \ref{Example:Kingdom}, suppose that $ \{1,3\} $ can e-block $ \sigma $ via $ \delta $ by evicting $ 2 $ from $ b $, and $ \{1,2\} $ can block $ \delta $ via $ \sigma $  by evicting $ 3 $ from $ b $. In both coalitions $ 1 $ is self-enforcing and unaffected. By being self-enforcing, it means that $ 1 $ cannot be evicted by any others. So her participation in the two coalitions can be explained by pure altruism. But then we note that, because $ b $ is $ 1 $'s private endowment, both blocking essentially reflects only $ 1 $'s exclusion right. $ 1 $ helps the agent in the blocking coalition but harms the agent not in the coalition. 

Differently, in Example \ref{Example:co-ownership}, suppose that $ \{1,2\} $ can e-block $ \delta $ via $ \sigma $ by evicting $ 3 $ from $ a $. In the coalition $ 1 $ is unaffected and self-enforcing (because $ 1 $ obtains $ o^* $), but because $ \{1,2\} $ co-own $ a $, the blocking reflects their joint exclusion right. So the exclusion core will be $ \{\mu, \sigma\} $, which are the two intuitive allocations in Example \ref{Example:co-ownership}. 
The contrast between the two examples hints at the correct role that unaffected agents should play: unaffected agents should join an e-blocking coalition only to help the others in the coalition use their joint exclusion rights; they should not use their own exclusion rights.

\subsection{Our refinement}

Presenting our idea above, we formally define our refinement of the exclusion core. Recall that a coalition $ C $ is self-enforcing in an allocation $ \sigma $ if $ \sigma(C)\subset \w(C)\cup \{o^*\} $. We further call $ C $ \textbf{minimal self-enforcing} in $ \sigma $ if it is self-enforcing and every sub-coalition of $ C $ is not self-enforcing. 

\begin{definition}\label{Definition:refined}
	An allocation $ \mu $ is \textbf{refined exclusion blocked} by a coalition $ C $ via another allocation $ \sigma $ if 
	\begin{enumerate}
		\item $ \forall i\in C $, $ \sigma(i) \succsim_i \mu(i) $ and $ \exists j\in C $, $ \sigma(j)\succ_j\mu(j) $;

		\item $\mu(j)\succ_j \sigma(j)\implies \mu(j)\in  \Omega(C|\w,\mu)$
		
		\item If $ C_{\sigma=\mu}\neq \emptyset $, then one of the following two cases holds:
		\begin{enumerate}
			\item $ C_{\sigma=\mu} $ is self-enforcing in $ \sigma $ and $\mu(j)\succ_j \sigma(j)\implies \mu(j)\notin \Omega(C_{\sigma=\mu}|\w,\mu) $;
			
			\item $ C $ is minimal self-enforcing in $ \sigma $.
		\end{enumerate}
	\end{enumerate}
	The \textbf{refined exclusion core} consists of allocations that are not refined exclusion blocked.
\end{definition}

It is clear that the refined exclusion core is included by the exclusion core, because e-blocking is a special case of refined e-blocking.  The condition 3 of Definition \ref{Definition:refined} restricts the participation of unaffected agents in a refined e-blocking coalition. We explain the two cases in the condition 3 carefully. 

In case 3(a), the requirement $\mu(j)\succ_j \sigma(j)\implies \mu(j)\notin \Omega(C_{\sigma=\mu}|\w,\mu) $ formalizes our idea that $ C_{\sigma=\mu} $ should not harm any others by using their own exclusion rights. Yet we need to explain why $ C_{\sigma=\mu} $ needs to be self-enforcing when we do not regard endowments as the rights to consume or exchange. When $ C_{\sigma=\mu} $ is self-enforcing, it means that they cannot be evicted by any others. So their participation in $ C $ can be explained by pure altruism. Similar to the idea in Section \ref{Section:rectified:core}, we restrict their altruism by requiring that they do not harm any others by using their own exclusion rights. The power of this restriction has been illustrated by Examples \ref{Example:Kingdom} and \ref{Example:co-ownership}. On the other hand, if $ C_{\sigma=\mu} $ is not self-enforcing, the freedom for $ C $ to evict an agent but occupy her endowment in the new allocation can result in a problem and make the refined exclusion core empty. This is illustrated by Example \ref{Example:why:1}.

\begin{example}[Why $ C_{\sigma=\mu} $ needs to be self-enforcing in condition 3(a)]\label{Example:why:1}
	Consider three agents $ \{1,2,3\} $ and two objects $ \{a,b\} $. Object $ a $ is owned by $ \{1,2\} $, and object $ b $ is privately owned by $ 3 $.
	
	\begin{center}
		\begin{tabular}{ccc}
			& $ a $ & $ b $  \\ \hline
			$ C_o $: & $ 1,2 $ & $ 3 $ \\ \hline
			$ \mu $: & $ 3 $ & $ 1 $ \\
			$ \sigma $: & $ 3 $ & $ 2 $\\
			$ \delta_1 $: &$ 2 $ & $ 1 $\\
			$ \delta_2 $: &$ 1 $ & $ 2 $
		\end{tabular}
		\quad \quad
		\begin{tabular}{ccc}
			$ \succ_1 $	& $ \succ_2 $ & $ \succ_3 $  \\ \hline
			$ b $ & $ b $ & $ a $ \\
			$ a $ & $ a $ & $ b $ \\
			&\\
			&\\
			&
		\end{tabular}
	\end{center}
	
	Suppose we do not require $ C_{\sigma=\mu} $ to be self-enforcing in the condition 3(a) of Definition \ref{Definition:refined}. If the refined exclusion core in this new definition is nonempty, it is still a subset of the exclusion core. So it is Pareto efficient. In this example, because agents accept all objects, a Pareto efficient allocation must assign both objects to agents. Because $ 3 $ privately owns $ b $, if $ 3 $ obtains nothing, $ 3 $ will request $ b $ by evicting any other agent who occupies $ b $. So an unblocked allocation must assign one object to $ 3 $. Finally, because $ 1,2 $ prefer $ b $ to $ a $ while $ 3 $ prefers $ a $ to $ b $, a Pareto efficient and unblocked allocation must assign $ a $ to $ 3 $ and $ b $ to one of $ 1,2 $. So we only need to verify whether the two allocations $ \mu $ and $ \sigma $ belong to the refined exclusion core.
	
	However, $ \mu $ will be refined e-blocked by $ \{1,2\} $ via $ \delta_1 $. $ \{1,2\} $ can evict $ 3 $ from $ a $ because $ a\in \Omega(1,2|\w,\mu)=\{a,b\} $. Because $ 1 $ does not have any private endowments, $ a\notin\Omega(1|\w,\mu)=\emptyset $. Note that $ 1 $ is unaffected in the coalition but she is not self-enforcing. She occupies $ 3 $'s private endowment $ b $. Note also that $ \{1,2\} $ is not self-enforcing, so it does not satisfy the case 3(b). 
	
	Symmetrically, $ \sigma $ will be refined e-blocked by $ \{1,2\} $ via $ \delta_2 $. So the refined exclusion core after dropping the requirement of $ C_{\sigma=\mu} $ being self-enforcing in 3(a) will be empty.
\end{example}

Case 3(b) is the only exception where we do not require $ C_{\sigma=\mu} $ to be self-enforcing, that is, when the coalition $ C $ itself is minimal self-enforcing in $ \sigma $. In this case there are two choices regarding the exclusion rights of $ C_{\sigma=\mu} $. The first choice, which is used by Definition \ref{Definition:refined}, allows $ C_{\sigma=\mu} $ to evict the others by using their own exclusion rights. The second choice does not allow them to do so, and thus it will add the requirement $\mu(j)\succ_j \sigma(j)\implies \mu(j)\notin \Omega(C_{\sigma=\mu}|\w,\mu) $ to 3(b). In the second choice, the condition 3 of Definition \ref{Definition:refined} can be replaced by:

\begin{enumerate}
	\item[\textit{3'.}] \textit{If $ C_{\sigma=\mu}\neq \emptyset $, then $\mu(j)\succ_j \sigma(j)\implies \mu(j)\notin \Omega(C_{\sigma=\mu}|\w,\mu) $, and either $ C_{\sigma=\mu} $ is self-enforcing in $ \sigma $, or $ C $ is minimal self-enforcing in $ \sigma $.}
\end{enumerate}
We present an example to explain the difference between the two choices and why we use the first choice.

\begin{example}[Difference between two choices regarding exclusion rights of $ C_{\sigma=\mu} $ in 3(b)]\label{Example:case2}
	Consider three agents $ \{1,2,3\} $ and two objects $ \{a,b\} $. Agent $ 1 $ privately owns $ b $, and co-owns $ a $ with $ 2 $. All agents accept both objects and prefer $ a $ to $ b $. 	
	\begin{center}
		\begin{tabular}{ccc}
			& $ a $ & $ b $  \\ \hline
			$ C_o $: & $ 1,2 $ & $ 1 $ \\ \hline
			$ \mu $: & $ 1 $ & $ 2 $ \\
			$ \delta $: & $ 1 $ & $ 3 $
		\end{tabular}
		\quad \quad
		\begin{tabular}{ccc}
			$ \succ_1 $	& $ \succ_2 $ & $ \succ_3 $  \\ \hline
			$ a $ & $ a $ & $ a $ \\
			$ b $ & $ b $ & $ b $ \\
			&
		\end{tabular}
	\end{center}

   Consider the two allocations $ \mu $ and $ \delta $. In both allocations $ a $ is assigned to $ 1 $, yet $ b $ is assigned to $ 2 $ in $ \mu $ and assigned to  $ 3 $ in $ \delta $. We argue that $ \delta $ is an unintuitive allocation. Although $ 2 $ is not an owner of $ b $, she at least co-owns $ a $ with $ 1 $, while $ 3 $ owns nothing. But in the allocation $ \delta $, $ 3 $ obtains $ b $ while $ 2 $ obtains nothing. When $ 1 $ obtains $ a $, which is co-owed by $ 2 $, we think that $ 2 $ should be able to ask $ 1 $ to evict $ 3 $ from $ b $.  This is what Definition \ref{Definition:refined} allows.  $ \{1,2\} $ can refined e-block $ \delta $ via $ \mu $, because $ \{1,2\} $ is a minimal self-enforcing coalition.
   
   But if we replace condition 3 by 3', then $ \{1,2\} $ cannot refined e-block $ \delta $ via $ \mu $, because $ b $ is privately owned by the unaffected agent $ 1 $, who is not willing to evict 3 by using her own exclusion right. So $ \delta $ will belong to the refined exclusion core in the new definition.
\end{example}

Example \ref{Example:case2} explains why the first choice is more intuitive than the second. That said, there is no difficulty in our results if we replace condition 3 by 3'. It will make it harder to form blocking coalitions and make the refined exclusion core larger. We prove that the refined exclusion core in Definition \ref{Definition:refined} is nonempty.

\begin{theorem}\label{thm:refined EC}
	The refined exclusion core is nonempty.
\end{theorem}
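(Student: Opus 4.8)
The plan is to prove existence by construction: I would build a single allocation $\mu^{*}$ by a priority-driven top-trading-cycles procedure that respects the endowment structure --- essentially the generalization of the ``you request my house --- I get your turn'' mechanism announced in the Introduction --- and then show directly that $\mu^{*}$ cannot be refined e-blocked. Fix once and for all a strict priority order $\pi$ on $I$. The procedure runs in rounds on the set of not-yet-assigned agents and not-yet-allocated objects. In each round every remaining agent points to her most preferred remaining acceptable object, or to $o^{*}$ if she has none; an agent pointing to $o^{*}$ is assigned $o^{*}$ and removed. Every remaining object that still has a remaining owner points to its $\pi$-highest remaining owner (its \emph{active owner}); an object all of whose owners have left points to the $\pi$-highest remaining agent and is treated as freely assignable. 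Because the pointing correspondence is single-valued on a finite directed graph, at least one cycle exists; I execute every cycle (each agent on a cycle receives the object she points to) and remove the matched agents and objects, iterating until no agent remains. Standard top-trading arguments show that $\mu^{*}$ is Pareto efficient and yield the invariant I will exploit: if an agent $i$ is assigned object $o$ in round $t$, then at the start of round $t$ every object $i$ strictly prefers to $o$ has already been allocated, and it was allocated through a chain originating at one of its owners.

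I would organize the verification that $\mu^{*}$ lies in the refined exclusion core as a proof by contradiction that tracks the procedure's chains. Suppose a coalition $C$ refined e-blocks $\mu^{*}$ via some $\sigma$. Let $i^{*}$ be the $\pi$-highest agent in $C_{\sigma>\mu^{*}}$ (nonempty by condition 1), and let $t^{*}$ be the round in which the procedure matches $i^{*}$. Since $\sigma(i^{*})\succ_{i^{*}}\mu^{*}(i^{*})$, the object $\sigma(i^{*})$ was already gone when $i^{*}$ was processed, so the procedure handed it to some agent through an ownership chain. In $\sigma$ agent $i^{*}$ takes $\sigma(i^{*})$ away from its $\mu^{*}$-holder; if that holder is strictly worse off, condition 2 forces $\sigma(i^{*})\in\Omega(C|\w,\mu^{*})$, so the object is controlled by $C$. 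Following the procedure's chain that allocated $\sigma(i^{*})$ backwards --- object to active owner to the object that owner received, and so on --- I would locate either an owner of a controlled object who is outside $C$ and is made worse off without being evicted from $\Omega(C|\w,\mu^{*})$ (contradicting condition 2), or a $\pi$-higher member of $C_{\sigma>\mu^{*}}$ (contradicting the choice of $i^{*}$), or a cycle among agents of $C$ that the procedure would already have realized in $\mu^{*}$ (so $\sigma$ cannot strictly improve anyone along it).

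The step I expect to be the genuine obstacle is reconciling this chain argument with condition 3, which is exactly the clause separating the refined exclusion core from the (possibly larger) exclusion core. The difficulty is twofold. First, in case 3(a) the unaffected agents $C_{\sigma=\mu^{*}}$ must be self-enforcing while not using their own exclusion rights, i.e.\ $\mu^{*}(j)\succ_{j}\sigma(j)\implies \mu^{*}(j)\notin\Omega(C_{\sigma=\mu^{*}}|\w,\mu^{*})$; I must show that whatever controlled objects are used to evict worse-off agents genuinely come from the improving agents' joint rights, not from a self-enforcing altruist acting alone, which means carefully comparing $\Omega(C|\w,\mu^{*})$ with $\Omega(C_{\sigma=\mu^{*}}|\w,\mu^{*})$ and $\Omega(C\setminus C_{\sigma=\mu^{*}}|\w,\mu^{*})$ along the procedure's chains. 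Second, the exceptional case 3(b), where $C$ is minimal self-enforcing, has to be handled separately because there the indirect-control restriction is dropped; Example \ref{Example:case2} warns that the conclusion is sensitive to precisely which owners may evict, so the argument must use minimality of $C$ to guarantee that the blocking trade is a single indivisible cycle that the priority order $\pi$ cannot already have realized. Co-ownership --- the fact that an object may have several owners and that control is defined through the iterated operator $\Omega(\cdot|\w,\mu^{*})$ rather than through single endowments --- is what makes both points nonroutine, and getting the bookkeeping of $\Omega$ right along the trading chains is where the real work lies.
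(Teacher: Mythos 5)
Your high-level plan is the paper's: construct an outcome of a generalized YRMH-IGYT/TTC procedure and verify it cannot be refined e-blocked (the paper proves \cref{thm:refined EC} as a corollary of \cref{thm:YRMH-IGYT}). But your construction has a genuine gap: you omit the \emph{sharing ownership} feature, and the rule you substitute --- an object all of whose owners have left points to the $\pi$-highest remaining agent and is freely assignable --- produces allocations outside the refined exclusion core. In the paper's algorithm, when a cycle is cleared, the remaining owners of objects in the cycle inherit shared ownership of the removed agents' remaining endowments, and an object points to nobody only when both its owners and its sharing owners are exhausted. \cref{Example:seconddesign} is exactly a counterexample to your variant: with order $4\rhd 2\rhd 3\rhd 1$, after agent $2$ trades away the co-owned object $a$, your rule lets $4$ take $2$'s endowment $b$ (either directly or via a self-cycle $4\to b\to 4$), yielding the allocation $\mu$ there, which is refined exclusion blocked (and rectification blocked) by $\{1,2,3\}$ via $\sigma$. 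So the verification step you sketch is not merely hard for your procedure --- it is false; the co-owners $1,3$ must inherit a claim on $b$, and without that bookkeeping no chain-tracing argument can succeed.

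Two further points where the paper resolves what you flag as obstacles. For case 3(b) of \cref{Definition:refined} there is a slick reduction you miss: if the blocking coalition $C$ is minimal self-enforcing in $\sigma$, then refined e-blocking already satisfies all conditions of rectification blocking (\cref{Definition:induction:block}), so this case is killed outright by the rectified-core half of \cref{thm:YRMH-IGYT}; likewise the case $C_{\sigma=\mu}=\emptyset$ is plain e-blocking, excluded by BK's result that Generalized TTC outcomes lie in the exclusion core. What remains is case 3(a), and there the paper's real work is a two-lemma induction (\cref{thm3:lemma1,thm3:lemma2}) showing that for the earliest-removed strictly-worse-off agent $i_0$, if $\mu(i_0)\in \Omega(C|\w,\mu)$ then necessarily $\mu(i_0)\in \Omega(C_{\sigma=\mu}|\w,\mu)$, contradicting 3(a); the induction runs over the layers $C^k$ of the control operator and tracks precisely how shared ownership propagates along the algorithm's cycles --- machinery your procedure cannot supply because it has no shared ownership to track. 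You correctly identified the comparison of $\Omega(C|\w,\mu)$ with $\Omega(C_{\sigma=\mu}|\w,\mu)$ as the crux, but left it unresolved, and with your algorithm it is unresolvable.
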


\section{The ``you request my house -- I get your turn'' mechanism}\label{Section:YRMH}

We propose a generalization of the ``you request my house - I get your turn'' (YRMH-IGYT) mechanism, and prove that all of its outcomes belong to the rectified core and to the refined exclusion core. It means that the two cores are nonempty, and they have a nonempty intersection. This proves Theorem \ref{thm:extend core} and Theorem \ref{thm:refined EC}.

\cite{AbduSonmez1999} propose YRMH-IGYT  to solve the house allocation with existing tenants (HET) model, which is a hybrid of the house allocation model and the housing market model. In the model a subset of objects are privately owned by a subset of agents who are called existing tenants, and the remaining objects are publicly owned by all agents.\footnote{Formally, there is a subset of objects $ O'$ and a subset of agents $ I' $ such that $ |O'|=|I'| $ and every $ i\in I' $ privately owns a distinct object in $ O' $. All objects in $ O\backslash O' $ are publicly owned.} YRMH-IGYT is a generalization of the top trading cycle (TTC) mechanism and the serial dictatorship mechanism.\footnote{In the housing market model, TTC proceeds as follows. In every step, let every agent point to her favorite object and every object point to its owner. Let the agents in each resulting cycle exchange endowments and then remove them. In the house allocation model, the serial dictatorship mechanism proceeds as follows. Choose an order of all agents and then let agents sequentially choose their favorite objects among the remaining.} It proceeds as follows in the HET model. First, fix a linear order of agents, and let private endowments point to their owners. Then, let the first agent in the order receive her favorite object and then remove them, let the second agent in the order receive her favorite object among the remaining ones and then remove them, and so on, until an agent requests an object that is privately owned by an existing tenant. In that case, if the existing tenant has been removed, let the agent in the order receive the object she requests and remove them. But if the existing tenant is not removed, move the existing tenant to the top of the order of the remaining agents, and then proceed as before. If in some step a cycle forms, the cycle must consist of several existing tenants who request each other's private endowments. Let them exchange private endowments as indicated by the cycle, and then remove them.

Our generalization of YRMH-IGYT is different from the original definition in two respects. First, after a cycle is removed in a step, if any agents in the cycle have endowments that have not been removed and any objects in the cycle have owners that have not obtained assignments, in the following steps we treat the remaining endowments of the former agents as owned by the remaining owners of the latter objects. We call this feature \textbf{sharing ownership}. Second, because an object may have several owners in our model, we let every object point to the owner who is ranked highest in the order of agents. After an owner is removed, an object points to the next owner (or a sharing owner) who is ranked highest in the order, if such a (sharing) owner exists. For convenience, we still call our algorithm YRMH-IGYT.

\begin{center}
	\textbf{YRMH-IGYT}
\end{center}

\begin{quote}
	\textbf{Notation}: Fix a linear order of agents $ \rhd $. Let every object point to its owner ranked highest in $ \rhd $.	
	After each step $ t $, let $ I(t) $ denote the set of remaining agents, $ O(t) $ the set of remaining objects, $ C_o(t) $ the set of remaining owners of $ o\in O(t) $, and $ S_o(t) $ the set of agents who obtain shared ownership of $ o\in O(t) $. Let
	$ I(0)=I $, $ O(0)=O $, and for each $ o\in O $, $ C_o(0)=C_o $ and $ S_o(0)=\emptyset $. 
	
	\textbf{Step} $ t\ge 1 $: Maintain remaining arcs, if any, between remaining agents and objects from the previous step. If the agent pointed by any $ o\in O(t-1) $ is removed in the previous step, then:
	\begin{itemize}
		\item If $ C_o(t-1)\neq \emptyset $, let $ o $ point to the $ \rhd $-highest agent in $ C_o(t-1) $;\footnote{Note that although we update the order of agents in the algorithm, the order $ \rhd $ is exogenously fixed.}
		
		\item If $ C_o(t-1)= \emptyset $ and $ S_o(t-1)\neq \emptyset $, let $ o $ point to the $ \rhd $-highest agent in $ S_o(t-1) $;
		
		\item If $ C_o(t-1)= \emptyset $ and $ S_o(t-1)=\emptyset $, let $ o $ not point to any agent.
	\end{itemize} 
	
	Let the highest-ranked agent $ i $ in the current order point to her favorite  object $ o $ among the remaining objects.
	\begin{itemize}
		\item If $ o $ does not point to any agent, or $ o=o^* $, let $ i $ obtain $ o $ and remove them.
		
		\item If $ o $  points to an agent and a cycle forms, let every agent in the cycle obtain the object she points to, and then remove them. For every agent $ j $ in the cycle and every remaining object $ a $ such that $ j\in C_a(t-1)\cup S_a(t-1) $, and for every object $ b $ in the cycle and every remaining agent $ j' $ such that $ j'\in C_b(t-1)\cup S_b(t-1) $, if $ j'\notin C_a(t-1)\cup S_a(t-1) $, let $ j' $ obtain the shared ownership of $ a $ in following steps. 
		
		\item If $ o $  points to an agent but a cycle does not form, move the agent pointed by $ o $ to the top of the current order. 		
	\end{itemize}
    In each step, either an agent is removed, or the order of agents is updated.     
    The algorithm will stop in finite steps, because if no agent is removed in a step, after a few steps either an agent will be removed or a cycle will form.
\end{quote}

By choosing different orders of agents, YRMH-IGYT can find different allocations. We call all of these allocations the \textbf{outcomes} of YRMH-IGYT. We will prove that all of these outcomes belong to the two cores proposed in this paper. 

\begin{theorem}\label{thm:YRMH-IGYT}
	Every outcome of YRMH-IGYT belongs to the rectified core and to the refined exclusion core.
\end{theorem}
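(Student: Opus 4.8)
The plan is to fix an arbitrary linear order $\rhd$, let $\mu$ be the outcome of YRMH-IGYT under $\rhd$, and establish the two membership claims separately, exploiting the fact that both blocking notions require a blocking allocation $\sigma$ in which at least one agent strictly improves. The central structural fact I would extract from the algorithm is a \emph{removal order} with an associated ``priority'' interpretation: the step at which an agent is removed (equivalently, obtains her assignment) induces an order on agents, and each agent receives her favorite object among those still available at her removal step. I would first prove the key lemma that $\mu$ is individually rational and Pareto efficient, and more importantly that the mechanism respects ownership in the following sense: whenever an agent $i$ receives an object $o$ that she does not own outright, the true owners (and sharing owners) of $o$ were already removed, i.e.\ served, at an earlier step. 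This ``owners-served-first'' property is the engine behind both parts.

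For the rectified core, suppose toward contradiction that a coalition $C$ r-blocks $\mu$ via $\sigma$. Let $j\in C$ be an agent with the \emph{earliest removal step} among all members of $C_{\sigma>\mu}$ (those strictly better off). The plan is to trace why $j$ did not already receive $\sigma(j)$ in the algorithm: since $\sigma(j)\succ_j\mu(j)$ and $j$ took her favorite remaining object at her step, $\sigma(j)$ must have been removed (assigned) strictly before $j$'s step, say assigned to some agent $k$. Because $C$ is self-enforcing, $\sigma(j)\in\w(C)$, so $\sigma(j)$ is owned by a sub-coalition of $C$; combined with the owners-served-first property, the owner(s) of $\sigma(j)$ inside $C$ were removed before $\sigma(j)$ was assigned, hence before $j$'s step. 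I would then argue that these early-removed owners must lie in $C_{\sigma=\mu}$ (they cannot be strictly-improving members earlier than $j$, by minimality of $j$'s step), that they form (or contain) a self-enforcing sub-coalition $C'\subset C_{\sigma=\mu}$ whose redundant endowment $\sigma(j)$ is reallocated toward $C\setminus C'$, and that $\sigma(j)=\mu(k)$ with $k\in I\setminus C$ — which is exactly the situation forbidden by condition 3 of Definition \ref{Definition:induction:block}. The delicate point is handling the case $k\in C$ and chasing the chain of ``who was served before whom'' so that the contradiction lands on condition 3 rather than merely on self-enforcement; here the sharing-ownership bookkeeping must be tracked carefully, since $\sigma(j)$ may have reached $j$'s hands only through a cycle that redistributed shared ownership.

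For the refined exclusion core, suppose $C$ refined e-blocks $\mu$ via $\sigma$. Again pick $j\in C_{\sigma>\mu}$ removed earliest; then $\sigma(j)$ was assigned before $j$'s step to some agent $k$ who strictly prefers $\mu(k)=\sigma(j)$ to $\sigma(k)$ (otherwise $\sigma$ would not have moved it away from $k$ without $k$'s consent, or $k\in C$ and we contradict minimality). By condition 2 of Definition \ref{Definition:refined}, $\mu(k)=\sigma(j)\in\Omega(C\mid\w,\mu)$, meaning $C$ controls $\sigma(j)$ through the chain of endowments and occupancies recorded in $\mu$. The plan is to show this control chain is incompatible with the algorithm's removal order: controlling $\sigma(j)$ means some member of $C$ owns an object occupied (in $\mu$) by the holder of another such object, terminating at an owner of $\sigma(j)$, and owners-served-first forces this terminal owner to have been removed before $\sigma(j)$ was assigned, i.e.\ before $j$'s step — but an agent in $C_{\sigma>\mu}$ removed before $j$ contradicts minimality, while an agent in $C_{\sigma=\mu}$ who exercises this eviction is precisely what condition 3 (cases 3(a)/3(b)) forbids or constrains. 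I expect the main obstacle to be case 3(b): when $C$ is merely minimal self-enforcing and $C_{\sigma=\mu}$ is allowed to use its own exclusion rights, I must verify that the control set $\Omega(C\mid\w,\mu)$ still cannot reach any object whose owner the algorithm served \emph{after} $j$, which requires a careful induction on the depth of the control chain $C^0\subset C^1\subset\cdots$ synchronized against the algorithm's step counter.
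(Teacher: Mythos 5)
Your first half (rectified core) is, in outline, the same argument as the paper's: pick the earliest-removed member of $C_{\sigma>\mu}$, observe that $\sigma(i_0)$ was assigned earlier to some $i'\notin C$, and split on how $\sigma(i_0)$ left the algorithm. But your key lemma is false as stated: it is \emph{not} true that whenever an agent receives an object she does not own, all owners and sharing owners were removed at an \emph{earlier} step. That holds only when the object points to no agent; when the object is removed inside a cycle, the pointed (sharing) owner is removed \emph{simultaneously}, and other owners may still remain, which is exactly why the algorithm's sharing-ownership bookkeeping exists. The paper therefore argues two cases separately: if $\sigma(i_0)$ points to nobody, one does not take ``the owners of $\sigma(i_0)$'' but the closure $C'=\cup_{k\ge 0}C^k$ with $C^0=C_{\sigma(i_0)}$ and $C^k=\cup_{o\in\sigma(C^{k-1})}C_o$, and proves by induction (propagating shared ownership along chains) that all of $C'$ is removed before $\sigma(i_0)$ is, so that $C'$ is a self-enforcing subset of $C_{\sigma=\mu}$ and condition 3 of Definition \ref{Definition:induction:block} is violated by $\mu(i')=\sigma(i_0)$ with $i'\notin C$; if instead $\sigma(i_0)$ is in a cycle, the contradiction is different --- every agent in the cycle is shown to lie in $C$, contradicting $i'\notin C$ --- and does not land on condition 3 at all. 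The owners of $\sigma(i_0)$ alone need not be self-enforcing in $\sigma$, so your ``form (or contain) a self-enforcing sub-coalition'' step needs this closure construction to be made precise.

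The second half has genuine gaps. First, you never secure $C_{\sigma=\mu}\neq\emptyset$: the paper gets this by invoking BK's theorem that Generalized TTC outcomes (hence YRMH-IGYT outcomes) lie in the exclusion core, so any refined e-block must involve unaffected agents; without that citation you would have to reprove BK's result. Second, you locate the main obstacle in case 3(b), but that case evaporates by reduction to your first half: if $C$ is minimal self-enforcing in $\sigma$, then no self-enforcing $C'\subset C_{\sigma=\mu}\subsetneq C$ exists, so condition 3 of Definition \ref{Definition:induction:block} holds vacuously and the refined e-block is already an r-block --- contradicting the rectified-core part. The hard case is 3(a), where your sketch misreads what must be contradicted: 3(a) forbids $\mu(j)\in\Omega(C_{\sigma=\mu}|\w,\mu)$ for worse-off $j$, i.e.\ control by the \emph{entire} indifferent block, not ``an agent in $C_{\sigma=\mu}$ exercising her own eviction.'' So the needed contradiction is the implication $\mu(i_0)\in\Omega(C|\w,\mu)\implies \mu(i_0)\in\Omega(C_{\sigma=\mu}|\w,\mu)$, and your anchor is wrong for proving it: the paper anchors on the earliest-removed strictly \emph{worse}-off agent $i_0$ (step $t_0$) and first establishes $t_1>t_0$, i.e.\ every strictly better-off member of $C$ is removed after $t_0$ (via the occupier $j'$ of $\sigma(i')$, who must be worse off once $C$ is enlarged, without loss, to contain all improving agents). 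That timing fact is the invariant your ``synchronized induction'' needs --- it guarantees that every member of $C$ removed by step $t_0$ is indifferent, so that ownership and shared-ownership chains at steps $\le t_0$ stay inside $\w(C_{\sigma=\mu})$ and, iterating over the layers $C^k$ of $\Omega(C|\w,\mu)$, transfer control of $\mu(i_0)$ to $C_{\sigma=\mu}$ (the content of the paper's Lemmas \ref{thm3:lemma1} and \ref{thm3:lemma2}). Anchoring on the earliest better-off agent, as you propose, gives no control over agents outside $C$ who occupy chain objects between the two steps, and the induction does not close.
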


We use Example \ref{Example:seconddesign} to illustrate the procedure of YRMH-IGYT. We also use the example to explain why sharing ownership is crucial in our algorithm.

\begin{example}\label{Example:seconddesign}
	Consider four agents $ \{1,2,3, 4\}$ and four objects $ \{a,b,c,d\} $.	
	\begin{center}
		\begin{tabular}{ccccc}
			& $ a $ & $ b $ & $ c $ & $ d $\\ \hline
			$ C_o $: & $ 1,2,3 $ & $ 2 $ & $ 3 $ & $ 4 $\\ \hline
			$ \mu $: & $ 2 $ & $ 4 $ & $ 1 $ & $ 3 $\\
			$ \sigma $: & $ 2 $ & $ 3 $ & $ 1 $ & $ 4 $\\
			&
		\end{tabular}
		\quad \quad
		\begin{tabular}{cccc}
			$ \succ_1 $	& $ \succ_2 $ & $ \succ_3 $ & $ \succ_4 $ \\ \hline
			$ a $ & $ a $ & $ b $ & $ a $\\
			$ b $ & $ c $ & $ d $ & $ b $ \\
			$ c $ & $ b $ & $ c $ & $ d $\\
			$ d $ & $ d $ & $ a $ & $ c $
		\end{tabular}
	\end{center}

	Suppose we choose the order $ 4\rhd 2 \rhd 3 \rhd 1 $. YRMH-IGYT proceeds as follows. Before step one, let $ b,c,d $ point to their private owners, and let $ a $ point to the $ \rhd $-highest owner $ 2 $. In step one, $ 4 $ points to $ a $. Because $ a $ points to $ 2 $, move $ 2 $ to the top of the order. In step two, $ 2 $ points to $ a $ and thus forms a cycle with $ a $. After they are removed, because $ 1 $ and $ 3 $ are remaining owners of $ a $, and $ b $ is $ 2 $'s remaining endowment, $ 1 $ and $ 3 $ obtain the shared ownership of $ b $. In step three, $ 4 $ points to $ b $. Because $ b $ points to $ 3 $, move $ 3 $ to the top of the order. In step four, $ 3 $ points to $ b $ and forms a cycle with $ b $. So they are removed. In step five, $ 4 $ points to $ d $ and forms a cycle with $ d $. So they are removed. In step six, $ 1 $ points to $ c $. Because the owner of $ c $ has been removed, $ 1 $ directly obtains $ c $. So we find the allocation $ \sigma $.
	
	Now suppose we do not have sharing ownership in our algorithm. Then  after we remove the cycle between $ 2 $ and $ a $ in step two, we will not let $ 1 $ and $ 3 $ obtain the shared ownership of $ b $. Then in step three, $ 4 $ will point to $ b $ and directly obtain $ b $. Then the found allocation will be $ \mu $. But $ \mu $ will be rectification blocked and refined exclusion blocked by $ \{1,2,3\}  $ via $ \sigma $.
\end{example}

\section{Relations between solutions}\label{Section:relations}

In this section we clarify the relations between the several solutions we have discussed. For completeness, we also regard YRMH-IGYT as a solution and treat all of its outcomes as the set of allocations it finds for an economy. 
First, Section \ref{Section:rectified:core} has explained that the rectified core includes the strong core and is included by the intersection of the weak core and PE. Second, BK have explained that the exclusion core is included by the intersection of the weak core and PE, but it does not include the strong core. So the exclusion core does not include the rectified core. Third, in Example \ref{Example:co-ownership}, $ \delta $ does not belong to the rectified core, but it belongs to the exclusion core. So the rectified core does not include the exclusion core. Fourth, Example \ref{example:exclusion core not in induction} below shows that the rectified core neither includes the refined exclusion core. So the rectified core and the refined exclusion core do not include each other. Last, Theorem \ref{thm:YRMH-IGYT} proves that YRMH-IGYT is in the intersection of the rectified core and the refined exclusion core. Example \ref{Example:YRMH-IGYT:subset:Refined EC} below shows that the intersection of the two cores can be strictly larger than YRMH-IGYT. So we draw Figure \ref{Figure:relation} to sketch the relations between the several solutions. In Section \ref{Section:special:economies} we will discuss special cases of our model and show that some of these solutions will coincide in these special cases.

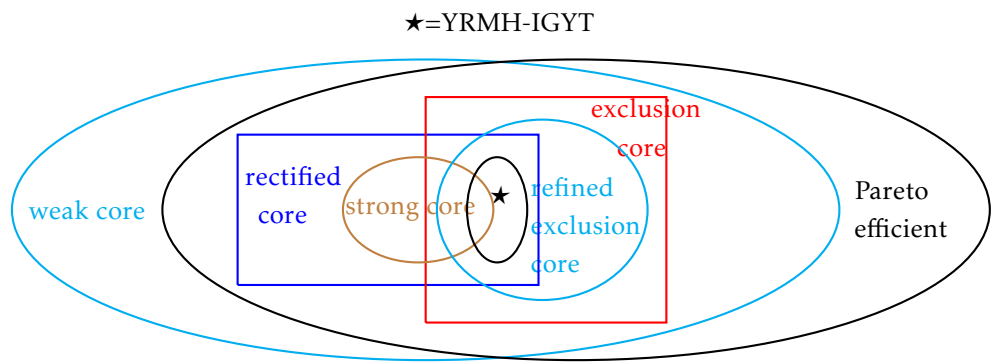
\begin{figure}[!ht]
	\centering
	\begin{tikzpicture}[bend angle=20,xscale=1,yscale=1]
	\node[cyan]  at (-2.5,2) {\footnotesize weak core};
	\node[black,text width=1cm,align=center]  at (8.2,2) {\footnotesize Pareto\\ efficient};
	\node[blue,text width=1cm,align=center]  at (0.1,2.2) {\footnotesize rectified\\ core};
	\node[brown]  at (1.8,2) {\footnotesize strong core};
	\node[cyan,text width=1cm,align=left]  at (3.9,1.8) {\footnotesize refined\\ exclusion core};
	\node[red,text width=1cm,align=right]  at (4.7,3.1) {\footnotesize exclusion\\ core};
	\node[black] at (3,4.5) {$ \star $\footnotesize =YRMH-IGYT};
	\node[black] at (3,2.2) { $ \star $};
	
	\draw[cyan,thick] (2,2) ellipse (5.5cm and 2cm);
	\draw[black,thick] (4,2) ellipse (5.5cm and 2cm);
	\draw[blue,thick] (-.5,1) --(-.5,3) -- (3.5,3) -- (3.5,1) -- (-.5,1);
	\draw[red,thick] (2,.5) --(5.2,.5) -- (5.2,3.5) -- (2,3.5) -- (2,.5);
	\draw[brown,thick] (1.9,2) ellipse (1cm and .7cm);
	\draw[cyan,thick] (3.55,2) ellipse (1.4cm and 1.2cm);
	\draw[black,thick] (2.95,2) ellipse (.4cm and .7cm);
	\end{tikzpicture}
	\caption{Relations between solutions in the paper.}\label{Figure:relation}
\end{figure}

\begin{example}[Refined exclusion core$ \not\subset $rectified core]\label{example:exclusion core not in induction}
	Consider four agents $ \{1,2,3,4\} $ and three objects $ \{a,b,c\} $.	
	\begin{center}
		\begin{tabular}{cccc}
			& $ a $ & $ b $ & $ c $ \\ \hline
			$ C_o $: & $ 1$ & $ 1,2,3 $ & $ 1,2,3 $\\ \hline
			$ \mu $: & $ 1 $ & $ 2 $ & $ 4 $\\
			$ \sigma $: & $ 1 $ & $ 2 $ & $ 3 $
		\end{tabular}
		\quad \quad
		\begin{tabular}{cccc}
			$ \succ_1 $	& $ \succ_2 $ & $ \succ_3 $ & $ \succ_4 $ \\ \hline
			$ a $ & $ b $ & $ c $ & $ c $ \\
			$ b $ & $ c $ & $ b $ & $ b $ \\
			$ c $& $ a $ & $ a $ & $ a $
		\end{tabular}
	\end{center}
	
	The coalition $ \{1,2,3\} $ rectification blocks $ \mu $ via $ \sigma $. In the coalition $ 1, 2 $ are unaffected, and $ 1 $ is self-enforcing but does not have redundant endowments. However, $ \mu $  belongs to the refined exclusion core. Among the four agents, only $ 3 $ can be better off. Obviously, $ \{1,3\} $ cannot be a blocking coalition because $ 1 $ obtains her only endowment in $ \mu $ while $ 3 $ owns nothing. $ \{2,3,4\} $ and their sub-coalitions can neither be blocking coalitions because they own nothing. So the only possible blocking coalition is $ \{1,2,3\} $. Because $ 1 $ and $ 2 $ have obtained their favorite objects in $ \mu $, they must be unaffected. But $ \{1,2\} $ is not self-enforcing.
\end{example}

\begin{example}[YRMH-IGYT$ \subsetneq $(refined exclusion core$ \cap $rectified core)]\label{Example:YRMH-IGYT:subset:Refined EC}
	Consider four agents $ \{1,2,3,4\} $ and three objects $ \{a,b,c\} $.	
	\begin{center}
		\begin{tabular}{cccc}
			& $ a $ & $ b $ & $ c $ \\ \hline
			$ C_o $: & $ 1,4 $ & $ 1,2,3$ & $ 1,2,3 $\\ \hline
			$ \mu $: & $ 1 $ & $ 2 $ & $ 4 $\\
			$ \sigma $: & $ 1 $ & $ 2 $ & $ 3 $
		\end{tabular}
		\quad \quad
		\begin{tabular}{cccc}
			$ \succ_1 $	& $ \succ_2 $ & $ \succ_3 $ & $ \succ_4 $  \\ \hline
			$ a $ & $ b $ & $ c $ & $ c $ \\
			$ b $ & $ c $ & $ b $ & $ b $\\
			$ c $ & $ a $ & $ a $ & $ a $
		\end{tabular}
	\end{center}
	
	We explain that $ \mu $ cannot be found by YRMH-IGYT, but it belongs to the intersection of the refined exclusion core and the rectified core.
	
	First, suppose that $ \mu $ is an outcome of YRMH-IGYT. Because $ 3 $ most prefers $ c $, $ 3 $ cannot be removed earlier than $ c $. Thus, when $ 4 $ points to $ c $ in some step, because $ 3 $ is an owner of $ c $, $ c $ must point to one of  $ 1,2,3 $. If $ c $ does not point to $ 3 $ in that step, because $ 1 $ obtains $ a $ and $ 2 $ obtains $ b $ in $ \mu $, after they are removed $ c $ will point to $ 3 $. So in any case $ c $ must point to $ 3 $ in some step. But then $ 3 $ will form a cycle with $ c $ and obtain $ c $, which is a contradiction. So $ \mu  $ cannot be an outcome of YRMH-IGYT.
	
	Second, we explain that $ \mu $ is in the rectified core. Because $ 3 $ is the only agent who can be better off, if there exists a rectification blocking coalition, it must include $ 3 $. Because $ \mu $ is Pareto efficient, it cannot be blocked by the grand coalition. Then, it is easy to verify that every other coalition is not self-enforcing.
	
	Third, we explain that $ \mu $ is in the refined exclusion core. As above, any possible blocking coalition must include $ 3 $, and it cannot be the grand coalition. If the coalition is $ \{1,2,3\} $, then $ 4 $ must be evicted from $ c $. But because both the coalition $ \{1,2,3\} $ and the unaffected agents $ \{1,2\} $ are not self-enforcing, the blocking does not hold. If the coalition is $ \{1,3,4\} $, then $ 2 $ must be evicted from $ b $. But $ \mu(2)=b\notin \Omega(\{1,3,4\}|\w,\mu)=\{a\} $, so the blocking neither holds.
\end{example}

\section{Special economies}\label{Section:special:economies}

In this section we evaluate our solutions in special types of economies that generalize the models with simple endowments in the literature. This exercise will deepen our understanding of the relations between the several solutions and provide a new perspective on the models with simple endowments.

We first analyze a type of economies in which agents may co-own objects, but they never own more endowments than their demands, so that every self-enforcing coalition does not have redundant endowments. Formally, we say an economy is \textbf{no-redundant-ownership} if every agent regards all objects as acceptable, and for every coalition $ C $, $ |\w(C)|\le |C| $.\footnote{If some agents regard their endowments as unacceptable, their endowments will become redundant and the situation in Example \ref{Example:Kingdom} can appear.} We will prove that the strong core is nonempty and coincides with the rectified core in this type of economies. This supports our observation that the emptiness of the strong core is caused by the existence of redundant endowments, not by the existence of co-owned endowments. But the strong core and the (refined) exclusion core still may not include each other. Example \ref{example:exclusion core not in induction} belongs to this type. The allocation $ \mu $ in the example is in the refined exclusion core but not in the strong core. Example \ref{Example:HET} on the next page also belongs to this type. As we will see, in the example all allocations in the exclusion core  can be found by YRMH-IGYT, but some allocations in the strong core cannot be found by YRMH-IGYT. 

\begin{proposition}\label{prop:noredundant}
	In no-redundant-ownership economies, YRMH-IGYT$ \subset $strong core=rectified core, but the strong core and the (refined) exclusion core do not include each other.
\end{proposition}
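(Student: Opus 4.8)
The plan is to split the statement into three claims: (i) $\text{strong core}=\text{rectified core}$; (ii) $\text{YRMH-IGYT}\subseteq\text{strong core}$; and (iii) the strong core and the (refined) exclusion core are incomparable. Claim (ii) would then be immediate: Theorem \ref{thm:YRMH-IGYT} places every outcome of YRMH-IGYT in the rectified core, which by (i) equals the strong core in no-redundant-ownership economies. For (i), recall from Section \ref{Section:rectified:core} that the strong core is always contained in the rectified core (r-blocking is a restriction of weak blocking), so it suffices to prove the reverse inclusion, i.e.\ that in a no-redundant-ownership economy every weakly blocked allocation is r-blocked.

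To establish weakly blocked $\implies$ r-blocked, I would argue that a weak block of minimal coalition size is automatically an r-block. Fix $\mu$ and, among all weak blocks of $\mu$, choose $(C,\sigma)$ with $|C|$ minimal. Suppose toward a contradiction that condition 3 of Definition \ref{Definition:induction:block} fails: there is a self-enforcing $C'\subseteq C_{\sigma=\mu}$ and an $i\in C\setminus C'$ with $o:=\sigma(i)\in\omega(C')$ and $o\in\mu(I\setminus C)$. Since $\sigma$ assigns distinct real objects to distinct agents and $i\notin C'$, the object $o$ is a redundant endowment of $C'$, i.e.\ $o\in\omega(C')\setminus\sigma(C')$. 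Here the hypothesis enters decisively: because $|\omega(C')|\le|C'|$ while the real objects held by $C'$ form a subset of $\omega(C')\setminus\{o\}$ and hence number at most $|\omega(C')|-1\le|C'|-1$, at least one agent $h\in C'$ receives $o^*$ under $\sigma$; as $C'\subseteq C_{\sigma=\mu}$ this forces $\mu(h)=o^*$, and since all objects are acceptable, $o\succ_h o^*$.

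I would then re-route the block onto the smaller coalition $C'$. Let $\sigma'$ be $\mu$ with the single reassignment of $o$ from its $\mu$-holder (the agent $j\in I\setminus C$ with $\mu(j)=o$, set $\sigma'(j)=o^*$) to $h$. Then $(C',\sigma')$ is a weak block: $h$ is strictly better off (acceptable $o$ versus $o^*$) while every other member of $C'$ keeps its unaffected assignment, and $\sigma'(C')=\big(\sigma(C')\setminus\{o^*\}\big)\cup\{o\}\subseteq\omega(C')\cup\{o^*\}$ because $o\in\omega(C')$ and $C'$ was self-enforcing in $\sigma$. Since a weak block requires a strictly-improved member, $C_{\sigma=\mu}\subsetneq C$, whence $C'\subseteq C_{\sigma=\mu}\subsetneq C$ and $|C'|<|C|$, contradicting minimality. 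Thus the minimal weak block satisfies condition 3 and is an r-block, proving (i) (and, via Theorem \ref{thm:YRMH-IGYT}, the nonemptiness asserted in the surrounding text).

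Finally, for (iii) I would invoke two economies, each of which must first be verified to be no-redundant-ownership (all objects acceptable, and $|\omega(C)|\le|C|$ for every $C$, which for Example \ref{example:exclusion core not in induction} reduces to checking $\omega(\{1\})=\{a\}$ and $\omega(\{1,2,3\})=\{a,b,c\}$). Example \ref{example:exclusion core not in induction} exhibits an allocation in the refined exclusion core (hence in the exclusion core) that is r-blocked, so it lies outside the rectified core $=$ strong core; this gives $\text{(refined) exclusion core}\not\subseteq\text{strong core}$. Conversely, Example \ref{Example:HET}, in which every exclusion-core allocation is a YRMH-IGYT outcome while some strong-core allocation is not, yields a strong-core allocation outside the exclusion core, hence outside the refined exclusion core; this gives $\text{strong core}\not\subseteq\text{(refined) exclusion core}$. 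The only delicate step is (i): the counting argument forcing some member of $C'$ onto $o^*$, together with the verification that the re-routed pair $(C',\sigma')$ stays self-enforcing under co-ownership, is precisely where the no-redundant-ownership hypothesis does all the work, so that is the step I would write out most carefully.
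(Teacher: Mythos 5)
Your proposal is correct and takes essentially the same route as the paper's own proof: the paper likewise fixes a minimal weakly blocking coalition, uses the count $|\omega(C')|\le|C'|$ to locate a member of the self-enforcing unaffected sub-coalition who is assigned $o^*$, re-routes the redundant object to that member to obtain a strictly smaller weak block contradicting minimality (so the minimal weak block satisfies condition 3 of Definition \ref{Definition:induction:block}), and then handles the remaining claims exactly as you do, via Theorem \ref{thm:YRMH-IGYT} and Examples \ref{example:exclusion core not in induction} and \ref{Example:HET}. The only cosmetic difference is that you construct the smaller block $\sigma'$ by modifying $\mu$ whereas the paper modifies $\sigma$; since $C'\subseteq C_{\sigma=\mu}$ the two constructions coincide on $C'$, so the arguments are the same.
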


We then analyze a special type of no-redundant-ownership economies that we call \textbf{no-overlapping-ownership}. In every such economy, for every two distinct objects $ a $ and $b $, $ C_a\cap C_b=\emptyset $. So every agent owns at most one object. This type of economies can be regarded as a generalization of the housing market model in which every agent owns one object. For this type, running YRMH-IGYT is equivalent to running TTC by endowing every object only to one of its owners. We prove that YRMH-IGYT can find all allocations in the strong core. The intuition behind is essentially the same as the famous result in the housing market model that TTC finds the unique strong core allocation. We also prove that YRMH-IGYT can find all allocations in the refined exclusion core. But the refined exclusion core can still be a strict subset of the exclusion core. Example \ref{Example:co-ownership} is no-overlapping-ownership. In the example the allocation $ \delta $ is in the exclusion core but not in the refined exclusion core. 

\begin{proposition}\label{prop:nooverlapping}
	In no-overlapping-ownership economies, YRMH-IGYT=strong core= rectified core = refined exclusion core $ \subset  $exclusion core.
\end{proposition}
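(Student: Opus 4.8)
The plan is to close a cycle of inclusions. Because a no-overlapping-ownership economy is in particular a no-redundant-ownership economy, \Cref{prop:noredundant} already yields strong core $=$ rectified core together with YRMH-IGYT $\subseteq$ strong core, while \Cref{thm:YRMH-IGYT} yields YRMH-IGYT $\subseteq$ refined exclusion core $\subseteq$ exclusion core. It therefore suffices to prove the two reverse inclusions strong core $\subseteq$ YRMH-IGYT and refined exclusion core $\subseteq$ YRMH-IGYT, which together with the above collapse the first four sets into one, and then to exhibit a single allocation separating the refined exclusion core from the exclusion core.

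I would prove both reverse inclusions by one top-trading-cycles argument, exploiting the remark recorded before the proposition that under no-overlapping ownership each agent lies in at most one owner set, so that YRMH-IGYT coincides with running TTC on the market in which every object is held by a single effective owner, the sharing-ownership rule merely handing an object to its next co-owner once the current holder leaves. Fix $\mu$ in either the strong core or the refined exclusion core. I would peel $\mu$ off in rounds, in the manner of the classical proof of \cite{roth1977weak} that TTC produces the unique strong-core allocation of a housing market: in each round I identify a minimal self-enforcing coalition $C_t$ each of whose members receives, under $\mu$ restricted to the current sub-economy, her most preferred object among those not yet removed, and I delete $C_t$ together with the objects it consumes, transferring the ownership of any not-yet-removed endowment of a departing agent to its remaining co-owners exactly as the sharing-ownership rule prescribes. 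Listing the agents in their order of departure produces an order $\rhd$ for which YRMH-IGYT returns $\mu$, so $\mu\in$ YRMH-IGYT.

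The engine of the peeling is a blocking dichotomy that treats the two cores simultaneously. Within the current sub-economy, if $\mu$ fails to assign the members of the top cycle $C_t$ their most preferred remaining objects, then $C_t$ can reallocate its jointly owned endowments so that every member is weakly better off and at least one is strictly better off. This reallocation is a weak block of $\mu$, contradicting $\mu\in$ strong core; and, extended by evicting the current holders of the objects in $\w(C_t)\subseteq\Omega(C_t\mid\w,\mu)$, it is also a refined exclusion block of $\mu$, since conditions 1 and 2 of \Cref{Definition:refined} are immediate and condition 3 holds through case 3(b) precisely because $C_t$ is minimal self-enforcing, so that the presence of any unaffected co-owner requires no appeal to her private exclusion rights. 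Hence a strong-core or refined-exclusion-core allocation must respect every round, which is exactly what makes it a YRMH-IGYT outcome.

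I expect the genuine obstacle to lie in making this peeling rigorous across rounds, i.e.\ in the inductive step that the restriction of a strong-core (respectively refined-exclusion-core) allocation to the sub-economy obtained after removing $C_t$ and reassigning ownership by the sharing rule is again a strong-core (respectively refined-exclusion-core) allocation of that smaller economy. This is the co-ownership analogue of the ``remove a top cycle and induct'' step of the housing-market argument, and it is the step for which the sharing-ownership mechanism was designed; the disjointness $C_a\cap C_b=\emptyset$ is what keeps the transferred ownership well defined and blocks the \Cref{Example:Kingdom} pathology, in which a single agent owning several objects could manufacture spurious blocks. Finally, strictness of the last inclusion needs no new work: \Cref{Example:co-ownership} is a no-overlapping-ownership economy in which the allocation $\delta$ lies in the exclusion core but, as shown when that example was revisited, not in the refined exclusion core, so refined exclusion core $\subsetneq$ exclusion core. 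Combining the collapsed equalities with this strict inclusion gives the claimed chain.
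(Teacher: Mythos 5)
Your skeleton coincides with the paper's: the forward inclusions come from \Cref{prop:noredundant} and \Cref{thm:YRMH-IGYT}, strictness comes from the allocation $\delta$ in \Cref{Example:co-ownership}, and the real work is the two reverse inclusions. But the paper's proof of those inclusions is not a peeling induction at all; it consists almost entirely of one lemma that your proposal never proves and in fact silently presupposes: for every object $o$ there is \emph{exactly one} owner $i_o\in C_o$ who receives a real object under $\mu$. At least one exists because otherwise $C_{o'}$, all of whose members hold $o^*$, weakly blocks (respectively refined-exclusion blocks) $\mu$ by handing $o'$ to one member; at most one holds by counting, since Pareto efficiency and universal acceptability force all $|O|$ objects to be assigned, and disjointness of the $C_o$'s makes the selected owners pairwise distinct, so the $|O|$ receivers are exactly the $i_o$'s. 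This lemma is what makes your ``derived housing market on effective owners'' well defined: before you can speak of top cycles you must know which owner each object trades through, and you must dispose of the degenerate case in which \emph{no} owner of some $o'$ receives anything --- a case your engine, which only fires when ``the top cycle is not respected,'' does not reach, and which is precisely the paper's blocking step. With the lemma in hand the paper simply ranks $\{i_o\}_{o\in O}$ at the top of $\rhd$ (disjointness then forces each $o$ to point to $i_o$) and invokes, implicitly, the \cite{roth1977weak}-style uniqueness argument; your peeling is an explicit rendering of that implicit step, so you have expanded what the paper compresses while omitting what the paper actually writes down.

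Two smaller points. First, the inductive step you flag as the ``genuine obstacle'' is avoidable: there is no need to show that the restriction of $\mu$ to a sub-economy stays in the relevant core; instead block directly in the original economy at the \emph{first} round whose cycle $\mu$ violates, lifting the cycle by adjoining all co-owners of the cycle objects (assigned $o^*$), so that the coalition is self-enforcing and every evicted holder's object lies in $\w(C)\subset\Omega(C|\w,\mu)$. Second, your appeal to case 3(b) of \Cref{Definition:refined} is literally false once the lifted coalition contains a co-owner assigned $o^*$, since under the paper's definitions any singleton assigned $o^*$ is itself self-enforcing, destroying minimality; however, the paper's own proof commits the identical move (``Note that $C_{o'}$ is minimal self-enforcing''), so this is evidently the intended reading, and under it your lifted coalition is indeed minimal: any self-enforcing sub-coalition containing a cycle member must absorb the entire cycle together with all co-owners. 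In sum, your route is the right one and closes once you insert the effective-owner lemma and bypass the induction as above; as written, the missing lemma is the one genuine gap.
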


We turn to another type of economies that we call \textbf{private-ownership}. In every such economy, for every $ o $, $ |C_o|=1 $. So agents may have redundant endowments, but they do not co-own objects. We allow agents to regard some objects as unacceptable. This type of economies can also be regarded as a generalization of the housing market model. But because agents can have redundant endowments, the strong core can be empty (e.g., Example \ref{Example:Kingdom}). For this type, we prove that the rectified core and the exclusion core coincide, and YRMH-IGYT can find all allocations in the two cores. BK have proved that the exclusion core equals the strong core whenever the latter is nonempty. So we have the following the result.

\begin{proposition}\label{prop:private:ownership}
	In private-ownership economies, YRMH-IGYT=rectified core=refined exclusion core=exclusion core, and they equal the strong core when it is nonempty.
\end{proposition}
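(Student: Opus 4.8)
The plan is to reduce the four-way equality to two inclusions and then invoke what is already available. From \cref{thm:YRMH-IGYT} we have YRMH-IGYT $\subseteq$ rectified core and YRMH-IGYT $\subseteq$ refined exclusion core, and we also know that the refined exclusion core $\subseteq$ exclusion core. It therefore suffices to establish, in every private-ownership economy, the two statements: \emph{(i)} exclusion core $\subseteq$ YRMH-IGYT, i.e.\ every exclusion-core allocation is an outcome of the mechanism under some order $\rhd$; and \emph{(ii)} rectified core $\subseteq$ exclusion core. Granting these, \emph{(i)} yields exclusion core $\subseteq$ YRMH-IGYT $\subseteq$ refined exclusion core $\subseteq$ exclusion core, so the three coincide, while \emph{(ii)} yields rectified core $\subseteq$ exclusion core $=$ YRMH-IGYT $\subseteq$ rectified core, so the rectified core joins them. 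The closing clause ``they equal the strong core when it is nonempty'' then follows immediately from the cited fact that in this setting the exclusion core equals the strong core whenever the latter is nonempty.

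For \emph{(ii)} I would convert a given exclusion block into a rectification block. Suppose $\mu$ is e-blocked by $C$ via $\sigma$. Because $|C_o|=1$ for every $o$, the controlled set $\Omega(C|\w,\mu)$ is generated by alternately taking the objects owned by the current agent set and the $\mu$-occupants of those objects, and the only agents made worse off in $\sigma$ are occupants of objects in this set. The idea is to enlarge $C$ to a coalition $D$ that also contains the (unique) owners of every object reassigned by $\sigma$ and of every object reached along the control chain, keeping each newly added owner at her $\mu$-assignment (so she is indifferent) while the original blockers take their $\sigma$-objects. This $D$ is self-enforcing and weakly blocks $\mu$. The remaining and more delicate task is to verify the rectification condition of \cref{Definition:induction:block}: for any self-enforcing indifferent subgroup $C'\subseteq D_{\sigma=\mu}$, an object it supplies to $D\backslash C'$ is never taken away from an agent in $I\backslash D$. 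This is exactly what the exclusion condition buys us — objects pulled away from agents outside $D$ are precisely those the control chain reaches, whose owners have been absorbed into $D$, so no redundant endowment is rerouted from an outsider.

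For \emph{(i)} I would build the order $\rhd$ by peeling $\mu$, which is individually rational and Pareto efficient because it lies in the (weak core $\cap$ PE). At each stage, among the remaining agents, I look either for a cycle of agents each occupying (in $\mu$) an object owned by the next in the cycle — such a cycle forms at once in YRMH-IGYT and delivers every member exactly $\mu(i)$ — or for an agent whose $\mu$-object already has no remaining owner and who can thus be served directly; these agents are placed next in $\rhd$, with the owners pointed to moved to the top while a cycle is pending. Redundant endowments of a removed agent are carried forward to the appropriate remaining owners by the sharing-ownership rule, mirroring the fact that $\mu$ leaves those objects to be assigned later. Individual rationality guarantees that no agent is asked to surrender an owned object she prefers to her continuation, and Pareto efficiency guarantees that objects $\mu$ assigns are demanded in a compatible order.

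The main obstacle is \emph{(i)}: I must show that at each stage such a cycle or directly-servable agent consistent with $\mu$ always exists, and, more importantly, that the sharing-ownership bookkeeping makes the cycles formed by YRMH-IGYT reproduce $\mu$ \emph{exactly} when agents may own several objects or none — this is where the interaction of redundant endowments with the ``move-to-top'' and sharing rules must be controlled. The verification of the rectification condition in the conversion for \emph{(ii)} is a secondary but genuine subtlety. Once both are in place, the chain of inclusions above closes the proposition.
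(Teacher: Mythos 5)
Your reduction is logically sound as a skeleton: given \emph{(i)} exclusion core $\subseteq$ YRMH-IGYT and \emph{(ii)} rectified core $\subseteq$ exclusion core, the chain of inclusions you write does close the four-way equality via \cref{thm:YRMH-IGYT}, and the strong-core clause does follow from BK's cited result. But both load-bearing inclusions are left unproven, and your sketch of \emph{(ii)} contains a claim that is false as stated. An e-blocking coalition need not be self-enforcing: condition 2 of \cref{Definition:exclusion:core} constrains only the worse-off agents, so $\sigma$ may assign a blocker an object that is the private endowment of an agent who is strictly worse off in $\sigma$ (this is exactly the ``evict and occupy'' feature BK emphasize). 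That owner cannot be ``absorbed into $D$'' at her $\mu$-assignment, because a blocker has taken her $\mu$-assignment, and she cannot join any weak-blocking coalition at all, since condition 1 of \cref{Definition:induction:block} requires $\sigma(i)\succsim_i\mu(i)$. So your sentence ``objects pulled away from agents outside $D$ are precisely those the control chain reaches, whose owners have been absorbed into $D$'' breaks down precisely when the owner reached by the chain is an evicted agent. Any repair must re-route assignments (e.g., restore such an owner to her $\mu$-assignment and have her trade her endowment instead), which changes both $\sigma$ and the coalition; and even after re-routing, the added indifferent agents can form a self-enforcing subgroup whose redundant endowment is taken, in the new allocation, from an agent outside the coalition --- exactly the configuration that condition 3 of \cref{Definition:induction:block} forbids, so the constructed pair need not be an r-block at all and one has to exhibit a genuinely different one. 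Nothing in your sketch controls this. Notably, the paper never proves rectified core $\subseteq$ exclusion core directly: it avoids \emph{(ii)} entirely by proving rectified core $\subseteq$ YRMH-IGYT with the same construction it uses for the exclusion core, deriving an e-block \emph{and} an r-block simultaneously at each failure point.

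As for \emph{(i)}, you explicitly flag the existence step as ``the main obstacle'' and leave it open --- but that step is the entire content of the paper's proof. The paper builds $\rhd$ by two operations: Operation A clears top-trading cycles (agents pointing to favorite objects, objects to owners) and shows each such cycle must already be executed by $\mu$, since otherwise the strictly improvable members of the cycle e-block (the cycle gives them indirect control of every object in it) while the full cycle, being minimal self-enforcing, r-blocks; Operation B analyzes the residual forest of trees rooted at ownerless objects, and its key step (\cref{lemma:operationB}) shows some root must be $\mu$-assigned to an agent pointing directly at it, constructing blocks from directed paths connecting a root's recipient to the root, and, when no such path exists, from a minimal family of mutually connected roots, with the redundant-endowment condition verified against the previously removed cycles $D_1$. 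Your ``peeling'' plan is this construction in outline, and the sharing-ownership bookkeeping you worry about is handled there, but without these blocking constructions the existence claim is simply asserted. In sum: correct decomposition and a correct final clause, but the two inclusions carrying the proposition are missing their proofs, and the conversion argument for \emph{(ii)} would fail as written.
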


After considering private ownership, it is natural to consider \textbf{public ownership}. When all objects are publicly owned, BK have proved that the exclusion core equals the strong core, and they both equal the set of Pareto efficient allocations. It is clear that for public-ownership economies, YRMH-IGYT coincides with the serial dictatorship mechanism. So all Pareto efficient allocations are the outcomes of YRMH-IGYT. Then it is immediate to have the following result.

\begin{proposition}\label{prop:public:ownership}
	In public-ownership economies, YRMH-IGYT=strong core=rectified core=refined exclusion core=exclusion core=PE. 
\end{proposition}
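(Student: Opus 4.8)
The plan is to show that all six concepts collapse to the set PE of Pareto efficient allocations, by squeezing each one between PE as a lower bound and PE as an upper bound, using only inclusions already at our disposal. The inputs I would invoke are: (i) the result of BK cited in the text preceding the proposition, that in public-ownership economies the strong core and the exclusion core both equal PE; (ii) the general inclusions established earlier, namely strong core $\subseteq$ rectified core $\subseteq$ (weak core $\cap$ PE) and refined exclusion core $\subseteq$ exclusion core; (iii) \Cref{thm:YRMH-IGYT}, which gives that every outcome of YRMH-IGYT lies in the rectified core and in the refined exclusion core; and (iv) the identification of YRMH-IGYT with serial dictatorship in public-ownership economies, which yields PE $\subseteq$ YRMH-IGYT.

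First I would dispose of the rectified core. Since all objects are publicly owned, BK's result gives strong core $=$ PE. Combining this with strong core $\subseteq$ rectified core $\subseteq$ PE (the latter because the rectified core is contained in the weak core $\cap$ PE) yields PE $=$ strong core $\subseteq$ rectified core $\subseteq$ PE, forcing rectified core $=$ PE.

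Next I would handle the chain through YRMH-IGYT, the refined exclusion core, and the exclusion core simultaneously. The serial-dictatorship identification gives PE $\subseteq$ YRMH-IGYT; \Cref{thm:YRMH-IGYT} gives YRMH-IGYT $\subseteq$ refined exclusion core; the general inclusion gives refined exclusion core $\subseteq$ exclusion core; and BK's result gives exclusion core $=$ PE. Reading these off in order yields
\[
\text{PE} \subseteq \text{YRMH-IGYT} \subseteq \text{refined exclusion core} \subseteq \text{exclusion core} = \text{PE},
\]
so every set in the chain equals PE, and in particular YRMH-IGYT $=$ refined exclusion core $=$ exclusion core $=$ PE. Together with strong core $=$ rectified core $=$ PE from the previous step, all six concepts coincide with PE, which is the claim.

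The chaining has essentially no obstacle once the two cited facts are granted, so the only real content lies outside it. The first input is BK's public-ownership characterization strong core $=$ exclusion core $=$ PE, which I would take as given. The second, and the step I would verify most carefully, is PE $\subseteq$ YRMH-IGYT: with $C_o=I$ for every $o$, one has $\w(C)=O$ when $C=I$ and $\w(C)=\emptyset$ otherwise, so at each step every remaining object points to the $\rhd$-highest remaining agent, a cycle forms immediately between that agent and her favorite remaining object, and she simply takes it; thus YRMH-IGYT reduces to serial dictatorship under $\rhd$. Letting $\rhd$ range over all orders and invoking the classical fact that serial-dictatorship outcomes exhaust PE then gives PE $\subseteq$ YRMH-IGYT, closing the argument.
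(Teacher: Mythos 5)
Your proposal is correct and follows essentially the same route as the paper, which proves this proposition in the text immediately preceding it: invoke BK's result that strong core $=$ exclusion core $=$ PE under public ownership, observe that YRMH-IGYT reduces to serial dictatorship so that PE $\subseteq$ YRMH-IGYT, and then squeeze everything between PE bounds using \Cref{thm:YRMH-IGYT} and the general inclusions. Your explicit check that every object points to the $\rhd$-highest remaining agent (so the top agent immediately forms a cycle with her favorite object) just spells out what the paper asserts as ``clear.''
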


Finally, we analyze a hybrid of the above two types that we call \textbf{private-public-ownership}. In such economies, every object is either privately owned or publicly owned. So they can be regarded as an extension of the HET model. Although the several solutions coincide in public-ownership economies, we will show their difference in dealing with public endowments in the presence of private endowments. To motivate our analysis, let us examine the economy in Example \ref{Example:HET}.

\begin{example}\label{Example:HET}
	We first consider an economy of four agents $ \{1,2,3,4\} $ and four objects $ \{a,b,c,d\} $. Object $ d $ is publicly owned. The other objects are privately owned by $ 1,2,3 $ respectively. So it is a HET economy.
	\begin{center}
		\begin{tabular}{ccccc}
			& $ a $ & $ b $ & $ c $ & $ d $ \\ \hline
			$ C_o $: & $ 1 $ & $ 2 $ & $ 3 $ & $ 1,2,3,4 $\\ \hline
			$ \mu $: & $ 1 $ & $ 3 $ & $ 2 $ & $ 4 $\\
			$ \sigma_1 $ & $ 4 $ & $ 1 $ & $ 3 $ & $ 2 $\\
			$ \sigma_2 $ & $ 1 $ & $ 3 $ & $ 4 $ & $ 2 $
		\end{tabular}
		\quad \quad
		\begin{tabular}{cccc}
			$ \succ_1 $	& $ \succ_2 $ & $ \succ_3 $ & $ \succ_4 $ \\ \hline
			$ b $ & $ d $ & $ b $ & $ a $\\
			$ a $ & $ c$ & $ c $ & $ d $\\
			& $ b $ & & $ c $ \\
			&
		\end{tabular}
	\end{center}

In this economy the rectified core coincides with the strong core, both equaling $ \{\mu,\sigma_1,\sigma_2\} $. But the exclusion core coincides with YRMH-IGYT, both equaling $ \{\sigma_1,\sigma_2\} $. The allocation $ \mu $ is exclusion blocked by $ \{1,2,4\} $ via $ \sigma_1 $, and it cannot be found by YRMH-IGYT.

We then consider a second economy that is obtained by adding an agent $ 5 $ to the first economy and letting $ 5 $ privately own $ d $ but most prefer the null object $ o^* $. It means that $ 5 $ does not accept any real object. The other agents' preferences and private endowments remain the same.

\begin{center}
	\begin{tabular}{ccccc}
		& $ a $ & $ b $ & $ c $ & $ d $ \\ \hline
		$ C_o $: & $ 1 $ & $ 2 $ & $ 3 $ & $ 5 $\\ \hline
		$ \mu $: & $ 1 $ & $ 3 $ & $ 2 $ & $ 4 $\\
		$ \sigma_1 $ & $ 4 $ & $ 1 $ & $ 3 $ & $ 2 $\\
		$ \sigma_2 $ & $ 1 $ & $ 3 $ & $ 4 $ & $ 2 $
	\end{tabular}
	\quad \quad
	\begin{tabular}{ccccc}
		$ \succ_1 $	& $ \succ_2 $ & $ \succ_3 $ & $ \succ_4 $ & $ \succ_5 $\\ \hline
		$ b $ & $ d $ & $ b $ & $ a $ & $ o^* $\\
		$ a $ & $ c$ & $ c $ & $ d $\\
		& $ b $ & & $ c $ \\
		&
	\end{tabular}
\end{center}  
	
We argue that the second economy essentially describes the same situation as the first economy. Because in the second economy $ 5 $ most prefers the null object $ o^* $, it is without loss and convenient to ignore $ 5 $ from the economy. After $ 5 $ is removed, her endowment $ d $ becomes unowned. To fit it into our model, it is natural to regard $ d $ as publicly owned by the remaining agents. Then it becomes the first economy.

The first economy is private-public-ownership, while the second economy is private-ownership. Although the two economies essentially describe the same situation, $ \mu $ is not in the rectified core in the second economy: $ \mu $ is rectification blocked by $ \{1,2,4,5\} $ via $ \sigma_1 $. While the exclusion core and YRMH-IGYT still equal $ \{\sigma_1,\sigma_2\} $ in the second economy.
\end{example}

If we imagine public endowments as the leftover private endowments of a leaving agent who demands only the null object $ o^* $, every private-public-ownership economy becomes a private-ownership economy. 
Example \ref{Example:HET} shows that this imagination does not change the exclusion core and the outcomes of YRMH-IGYT, but it makes a difference to the rectified core. Below we formalize the observation from the example as a consistency property of relevant solutions \citep{thomsonconsistency}. 

Let $ \mathcal{E}^{0} $ denote the set of private-public-ownership economies. In every such economy $ \Gamma=(I,O,\succ_I,\{C_o\}_{o\in O}) $, let $ O^P $ denote the set of public endowments. For every $ \Gamma\in \mathcal{E}^{0} $, define an augmented economy $ \Gamma^*=(I\cup \{i^*\},O,\succ_{I\cup \{i^*\}},\{C^*_o\}_{o\in O})  $ by adding an agent $ i^* $ to $ \Gamma $ and letting $ i^* $ privately own $ O^P $ and most prefer $ o^* $. For every $ o\in O\backslash O^p $, let $ C^*_o=C_o $. For every allocation $ \mu $ in $ \Gamma^* $, after removing $ i^* $ and her assignment, we obtain the restriction of $ \mu $ to $ I $, which is an allocation in $ \Gamma $. We denote this restricted allocation in $ \Gamma $ by $ \mu^R $.   

\begin{definition}
	A solution $ f $ is \textbf{consistent} if for every $ \Gamma \in  \mathcal{E}^{0} $, $ f(\Gamma)=\{\mu^R:\mu\in f(\Gamma^*)\} $.
\end{definition}

We prove that the exclusion core and YRMH-IGYT are consistent. The intuition is that, in any $ \Gamma \in  \mathcal{E}^{0} $, if an agent obtains a public endowment, she is impossible to be evicted. So it does not change the exclusion core if we regard public endowments as privately owned by $ i^* $ because $ i^* $ will be unaffected and never join any e-blocking coalitions. While  for any order of agents $ \rhd $, in every step of YRMH-IGYT, after $ i^* $ is removed letting public endowments point to the $ \rhd $-highest agent is equivalent to letting them not point to any agent. So the outcomes of YRMH-IGYT neither change.

\begin{lemma}\label{lemma:consistent}
	The exclusion core and YRMH-IGTY are consistent.
\end{lemma}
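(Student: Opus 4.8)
The plan is to realize consistency through the natural extension/restriction maps and to reduce each set equality to a blocking (respectively execution) correspondence between $\Gamma$ and $\Gamma^*$. For an allocation $\nu$ of $\Gamma$, write $\nu^+$ for its extension to $\Gamma^*$ that fixes everyone in $I$ and sets $\nu^+(i^*)=o^*$; note $(\nu^+)^R=\nu$. For each solution $f\in\{\text{exclusion core},\text{YRMH-IGYT}\}$ I would show that $\nu\mapsto\nu^+$ is a bijection from $f(\Gamma)$ onto $\{\mu\in f(\Gamma^*):\mu(i^*)=o^*\}$ with inverse $\mu\mapsto\mu^R$; since $\mu^R$ is a well-defined allocation of $\Gamma$ exactly when $\mu(i^*)=o^*$, this yields $f(\Gamma)=\{\mu^R:\mu\in f(\Gamma^*)\}$. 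The first preliminary step is therefore to check that $i^*$ receives $o^*$ in the relevant allocations of $\Gamma^*$: for YRMH-IGYT this is immediate, since $i^*$ most prefers $o^*$, so whenever $i^*$ becomes highest-ranked she points to $o^*$ and is removed, and she is never part of a trading cycle on a real object; for the exclusion core it is the only case producing a valid restriction, and the remaining degenerate allocations with $\mu(i^*)\neq o^*$ contribute nothing to $\{\mu^R:\cdot\}$.

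For the exclusion core I would compare the controlled-object sets in the two economies. For a coalition $C\subseteq I$ the private endowments of $C$ and their occupancy dynamics are identical in $\Gamma$ and $\Gamma^*$, while the economies differ only on the public endowments $O^P$: in $\Gamma$ a public object lies in $\Omega(C|\w,\nu)$ only if the control closure of $C$ equals $I$, whereas in $\Gamma^*$ a public object lies in $\Omega(C|\w,\nu^+)$ only if $i^*$ enters the closure, which never happens because $\nu^+(i^*)=o^*$ is not a real object. Hence $\Omega(C|\w,\nu^+)\subseteq\Omega(C|\w,\nu)$, the two agreeing on private endowments. The inclusion $\nu\in\text{ec}(\Gamma)\Rightarrow\nu^+\in\text{ec}(\Gamma^*)$ then follows almost directly: any e-block of $\nu^+$ must use a coalition $D\subseteq I$ (since $i^*$, holding her favorite $o^*$, can never be strictly improved and so cannot belong to an e-blocking coalition), the blocking allocation must leave $i^*$ at $o^*$ (else $i^*$ would be a worse-off agent not evicted from any real object), and the evictions, drawn from $\Omega(D|\w,\nu^+)\subseteq\Omega(D|\w,\nu)$, transfer verbatim to an e-block of $\nu$ in $\Gamma$.

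The reverse inclusion $\mu\in\text{ec}(\Gamma^*),\ \mu(i^*)=o^*\Rightarrow\mu^R\in\text{ec}(\Gamma)$ requires the one genuinely new observation. An e-block of $\mu^R$ in $\Gamma$ could in principle evict a worse-off agent from a public endowment lying in $\Omega(C|\w,\mu^R)$ but not in $\Omega(C|\w,\mu)$, and such an eviction would fail to transfer to $\Gamma^*$. I would rule this out by showing that a public object $o$ with current occupant $j$ belongs to $\Omega(C|\w,\mu^R)$ only if $j\in C$: since $j$'s sole occupancy is the public object $o$, $j$ can enter the control closure only through $o$ itself, and $o$ enters the closure only once the closure already equals $I\ni j$, so by induction $j$ must start in $C$. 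Consequently a worse-off agent, who by condition~1 cannot be a coalition member, is never evicted from a public endowment; every actual eviction is from a private endowment, and the block transfers to $\Gamma^*$. This completes the exclusion-core half.

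For YRMH-IGYT I would couple the two executions. Fix an order $\rhd^*$ of $I\cup\{i^*\}$ and let $\rhd=\rhd^*|_I$; the claim is that the run on $\Gamma^*$ with $\rhd^*$ and the run on $\Gamma$ with $\rhd$ assign the same object to each agent of $I$ (and $o^*$ to $i^*$), and that every $\rhd$ on $I$ arises this way, giving the set equality. In the $\Gamma^*$ run, $i^*$ is removed with $o^*$ the first time she reaches the top of the order, either naturally or because some agent requests a public endowment, which points to its unique owner $i^*$ and pulls her up, after which every public object points to nobody; in the $\Gamma$ run each public object instead points to its $\rhd$-highest remaining owner, namely the $\rhd$-highest remaining agent. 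The main obstacle, and the heart of the argument, is to prove that these two pointing rules produce identical assignments: I would argue by induction on the steps that redirecting a commonly-owned object to the $\rhd$-highest remaining agent rather than to nobody merely reshuffles who is moved to the top among agents that ultimately receive the same objects, so that the sequence of completed cycles, and hence the final matching on $I$, is unchanged. (Alternatively, since $\Gamma^*$ is a private-ownership economy, Proposition~\ref{prop:private:ownership} gives $\text{YRMH}(\Gamma^*)=\text{ec}(\Gamma^*)$, so the exclusion-core half already yields $\{\mu^R:\mu\in\text{YRMH}(\Gamma^*)\}=\text{ec}(\Gamma)$, and it would remain only to identify $\text{YRMH}(\Gamma)$ with $\text{ec}(\Gamma)$.) I expect this coupling to be the most delicate part of the whole proof, the exclusion-core direction being comparatively routine once the occupancy observation about public endowments is in place.
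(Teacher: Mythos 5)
On the exclusion-core half your route coincides with the paper's own proof. Both directions are settled by comparing control sets, and your key occupancy observation is exactly the paper's footnote argument in contrapositive form: you argue that a worse-off occupant $j\notin C$ of a public endowment prevents the control closure from ever reaching $I$, so the public object never enters $\Omega(C|\w,\mu^R)$; the paper argues that if $\mu(j)\in\Omega(C|\w,\mu^R)$ then $\mu(j)\in\w(C^{k-1})$ for the minimal $k$ with $j\in C^k$, and $C^{k-1}\subsetneq I$ rules out public endowments, whence $\Omega(C|\w^*,\mu)=\Omega(C|\w,\mu^R)\backslash O^P$ and the block transfers. One small repair to your bookkeeping: you claim $\mu^R$ is ``a well-defined allocation of $\Gamma$ exactly when $\mu(i^*)=o^*$,'' but the paper's $\mu^R$ (remove $i^*$ together with her assignment) is defined for every $\mu$, so ruling out allocations of $\Gamma^*$ with $\mu(i^*)\neq o^*$ is not a matter of well-definedness; you must invoke that the exclusion core is contained in the set of Pareto efficient allocations (a result of BK used by the paper), which forces $\mu(i^*)=o^*$ for every exclusion-core allocation of $\Gamma^*$.

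On the YRMH-IGYT half your primary plan, coupling the two runs under matched orders, is again the paper's, but two points need attention. First, your parenthetical fallback is circular: Proposition~\ref{prop:private:ownership} applies only to $\Gamma^*$, and the ``remaining'' identification of YRMH-IGYT with the exclusion core in the private-public-ownership economy $\Gamma$ itself is precisely Proposition~\ref{prop:private:public}, which the paper \emph{derives from} Lemma~\ref{lemma:consistent}; Theorem~\ref{thm:YRMH-IGYT} gives only the inclusion of YRMH-IGYT in the (refined, hence ordinary) exclusion core, so the fallback cannot substitute for the coupling. Second, your inductive claim that the two pointing rules ``merely reshuffle who is moved to the top'' is too coarse to close the argument. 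The invariant the paper actually traces is: in $\Gamma$, every ownerless object (a public endowment, and after the first removals also any leftover endowment passed on by the sharing rule) points to the $\rhd$-highest remaining agent, who is always the \emph{origin} of the current request chain; hence a request for such an object immediately closes a cycle consuming the whole chain, whereas in $\Gamma^*$ the same request routes through $i^*$, who exits with $o^*$, after which the object points to nobody and the chain unwinds by direct obtainments, giving identical assignments. Note the asymmetry your induction must carry: cycle removals in the $\Gamma$-run trigger sharing-ownership transfers, while the corresponding direct removals in the $\Gamma^*$-run do not, so leftover endowments point to the $\rhd$-highest remaining agent in $\Gamma$ but to nobody in $\Gamma^*$ -- which is the same invariant again, and is what makes the induction go through.
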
 

Example \ref{Example:HET} has shown that the rectified core is not consistent. It is easy to prove that the rectified core in every $ \Gamma\in  \mathcal{E}^{0} $ is weakly larger than the rectified core in $ \Gamma^* $.

\begin{lemma}\label{lemma:rectified core}
	For every $ \Gamma \in  \mathcal{E}^{0}$, the rectified core in $ \Gamma\supset $ $ \{\mu^R:\mu\in\text{the rectified core in }\Gamma^*\} $. 
\end{lemma}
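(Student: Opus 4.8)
The plan is to prove the inclusion by contraposition at the level of individual allocations: I fix $\Gamma\in\mathcal E^{0}$ and an allocation $\mu$ of $\Gamma^*$, and show that if $\mu^R$ is r-blocked in $\Gamma$ then $\mu$ is r-blocked in $\Gamma^*$; equivalently, $\mu$ in the rectified core of $\Gamma^*$ forces $\mu^R$ into the rectified core of $\Gamma$. Before this I record a preliminary fact that makes $\mu^R$ well defined, namely that $\mu(i^*)=o^*$ whenever $\mu$ is in the rectified core of $\Gamma^*$. Since $\Gamma$ admits allocations we have $|O|\le|I|$, so if $\mu(i^*)=o\neq o^*$ then the $|O|-1$ real objects held by $I$ leave some $k\in I$ with $\mu(k)=o^*$; re-routing $o$ to $k$ and giving $i^*$ the null object yields an allocation $\sigma$ through which $\{i^*\}$ r-blocks $\mu$ (it is self-enforcing, $i^*$ strictly improves, and $C_{\sigma=\mu}=\emptyset$ makes condition 3 vacuous), a contradiction. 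Hence $\mu(i^*)=o^*$, and $\mu^R$ is the allocation of $\Gamma$ agreeing with $\mu$ on $I$.

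Now suppose $C$ r-blocks $\mu^R$ in $\Gamma$ via $\sigma$. Define $\hat\sigma$ on $I\cup\{i^*\}$ by $\hat\sigma|_I=\sigma$ and $\hat\sigma(i^*)=o^*$; as $\sigma$ already assigns every real object inside $I$, $\hat\sigma$ is an allocation of $\Gamma^*$. The only structural difference between the economies is that the public objects $O^P$, owned by all of $I$ in $\Gamma$, are the private endowments of $i^*$ in $\Gamma^*$; hence a public object lies in $\w_\Gamma(C)$ only when $C=I$ and lies in $\w_{\Gamma^*}(C)$ only when $i^*\in C$, whereas the ownership sets of every private object coincide. In particular $\w_{\Gamma^*}(C')=\w_\Gamma(C')$ for every $C'\subsetneq I$, and both sets consist only of real objects — this is the key bookkeeping identity. \textit{Case 1: $\sigma(C)\cap O^P=\emptyset$.} Take $\hat C=C$. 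Condition 1 is immediate, and condition 2 holds since $\sigma(C)\subseteq(\w_\Gamma(C)\setminus O^P)\cup\{o^*\}\subseteq\w_{\Gamma^*}(C)\cup\{o^*\}$. For condition 3 note $\hat C_{\hat\sigma=\mu}=C_{\sigma=\mu^R}$; moreover any relevant $C'$ has $C'\subsetneq I$ (if $C=I$ then $\sigma(C)\supseteq O^P$, impossible here unless $O^P=\emptyset$, in which case the identity is trivial), so ``self-enforcing in $\Gamma^*$'' coincides with ``self-enforcing in $\Gamma$''. For $i\in C\setminus C'$ with $\hat\sigma(i)\in\w_{\Gamma^*}(C')=\w_\Gamma(C')$, the original condition 3 gives $\sigma(i)\notin\mu^R(I\setminus C)$, and since $\hat\sigma(i)$ is a real object it also avoids $\mu(i^*)=o^*$, whence $\hat\sigma(i)\notin\mu^R(I\setminus C)\cup\{o^*\}=\mu(I^*\setminus\hat C)$. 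Thus $\hat C$ r-blocks $\mu$ in $\Gamma^*$.

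\textit{Case 2: $\sigma(C)\cap O^P\neq\emptyset$.} Then some public object lies in $\w_\Gamma(C)$, which forces $C=I$, and r-blocking by the grand coalition amounts to $\sigma$ Pareto dominating $\mu^R$ in $\Gamma$ (condition 3 is vacuous as $I\setminus C=\emptyset$). Extending by $\hat\sigma(i^*)=o^*\succsim_{i^*}\mu(i^*)$ shows $\hat\sigma$ Pareto dominates $\mu$ in $\Gamma^*$, so by the standard remark $\mu$ is r-blocked by $I\cup\{i^*\}$ via $\hat\sigma$. In both cases $\mu$ is r-blocked in $\Gamma^*$, completing the contrapositive and hence the inclusion.

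The routine part is the ownership bookkeeping between $\Gamma$ and $\Gamma^*$; the one genuine obstacle is verifying condition 3 in Case 1, where I must check that passing to $\Gamma^*$ neither enlarges the family of self-enforcing sub-coalitions $C'$ that have to be tested nor introduces a new forbidden object in $\mu(I^*\setminus\hat C)$. The identity $\w_{\Gamma^*}(C')=\w_\Gamma(C')$ for $C'\subsetneq I$ settles the first, while the facts that $i^*$ holds only $o^*$ and that $\w_{\Gamma^*}(C')$ contains no null object settle the second. A secondary point to state carefully is the preliminary $\mu(i^*)=o^*$, since otherwise $\mu^R$ is not an allocation of $\Gamma$ at all.
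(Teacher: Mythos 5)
Your main argument is essentially the paper's proof: the paper likewise takes a hypothetical rectification block of $\mu^R$ in $\Gamma$ by $C$ via $\sigma$ and lifts it to a block of $\mu$ in $\Gamma^*$ via the augmented allocation $\sigma^A$ (your $\hat\sigma$), resting on exactly your bookkeeping identity that $\w_{\Gamma^*}(C')=\w_{\Gamma}(C')$ for $C'\subsetneq I$, so that self-enforcingness and condition 3 transfer, with $\mu(i^*)=o^*$ ensuring no new forbidden object enters $\mu(I\cup\{i^*\}\setminus \hat C)$. The only structural difference is how the grand coalition is dispatched: the paper observes that $\mu$ in the rectified core of $\Gamma^*$ is Pareto efficient, hence $\mu(i^*)=o^*$ and $\mu^R$ is Pareto efficient in $\Gamma$, which forces $C\subsetneq I$ in one stroke; you instead split on whether $\sigma(C)\cap O^P=\emptyset$ and handle the public-object case through the remark that Pareto domination implies r-blocking by the grand coalition. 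Same substance, slightly longer.

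Two of your justificatory asides, however, lean on reading the allocation definition literally as $|\mu^{-1}(o)|=1$, i.e.\ every object must be assigned. The paper's own examples (e.g.\ the allocation $\mu$ in Example \ref{Example:Kingdom}, where $b$ is explicitly ``wasted'') show the intended reading is $|\mu^{-1}(o)|\le 1$. Under that reading: (i) your inference ``$\Gamma$ admits allocations, so $|O|\le|I|$'' is false, and the agent $k$ with $\mu(k)=o^*$ need not exist --- but the fix is a one-liner: in the block by $\{i^*\}$ simply leave $\mu(i^*)$ unassigned in $\sigma$ (or invoke the paper's stated fact that the rectified core is Pareto efficient, which yields $\mu(i^*)=o^*$ directly since $i^*$ top-ranks $o^*$); and (ii) your parenthetical in Case 1, ``if $C=I$ then $\sigma(C)\supseteq O^P$,'' is also false, since $\sigma$ may leave public objects unassigned even when $C=I$ --- but it is unnecessary: any $C'$ relevant to condition 3 satisfies $C'\subseteq C_{\sigma=\mu^R}\subsetneq C\subseteq I$ (some $j\in C$ strictly improves), so $C'\subsetneq I$ holds regardless and your endowment identity applies; one checks, as you effectively do, that Case 1 then goes through even when $C=I$. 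With these two repairs your proof is complete and coincides in all essentials with the paper's.
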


For every $ \Gamma \in  \mathcal{E}^{0} $, $ \Gamma^* $ is a private-ownership economy. With Lemma \ref{lemma:consistent} and Lemma \ref{lemma:rectified core}, we can apply Proposition \ref{prop:private:ownership} to conclude that  YRMH-IGYT equals the exclusion core and they are included by the rectified core in every $ \Gamma $. But Example \ref{Example:HET} shows that the exclusion core may not equal the strong core even when the latter is nonempty. Example \ref{Example:strongcore=subsetof:inductioncore} in Section \ref{Section:rectified:core} is private-public-ownership. In the example the rectified core does not equal the strong core when the latter is nonempty. So we have the following result.

\begin{proposition}\label{prop:private:public}
	In private-public-ownership economies, YRMH-IGYT=refined exclusion core =exclusion core$ \subset $rectified core, but they do not equal the strong core.
\end{proposition}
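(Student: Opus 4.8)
The plan is to reduce everything to the private-ownership case, for which Proposition \ref{prop:private:ownership} already settles the coincidences, and then transport the conclusions back along the consistency correspondence $\mu\mapsto\mu^R$. Fix $\Gamma\in\mathcal{E}^0$ and pass to the augmented economy $\Gamma^*$. Every object of $\Gamma$ is either private (and stays private, with $C^*_o=C_o$ a singleton) or public (and is reassigned to $i^*$ as a private endowment), so every object of $\Gamma^*$ is privately owned; hence $\Gamma^*$ is a private-ownership economy and Proposition \ref{prop:private:ownership} applies, giving
\[
\text{YRMH-IGYT}(\Gamma^*)=\text{rectified core}(\Gamma^*)=\text{refined exclusion core}(\Gamma^*)=\text{exclusion core}(\Gamma^*).
\]

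First I would establish the equalities on the left of the chain. By Lemma \ref{lemma:consistent} both the exclusion core and YRMH-IGYT are consistent, so restricting the identity $\text{exclusion core}(\Gamma^*)=\text{YRMH-IGYT}(\Gamma^*)$ to $I$ yields $\text{exclusion core}(\Gamma)=\text{YRMH-IGYT}(\Gamma)$. The refined exclusion core carries no consistency statement of its own, so I would pin it down by a sandwich: Theorem \ref{thm:YRMH-IGYT} gives $\text{YRMH-IGYT}(\Gamma)\subset\text{refined exclusion core}(\Gamma)$, and since e-blocking is a special case of refined e-blocking we have $\text{refined exclusion core}(\Gamma)\subset\text{exclusion core}(\Gamma)$. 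As the two outer sets coincide, all three solutions are equal.

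Next I would prove the inclusion into the rectified core. Consistency of the exclusion core gives $\text{exclusion core}(\Gamma)=\{\mu^R:\mu\in\text{exclusion core}(\Gamma^*)\}$, and by the private-ownership coincidence above $\text{exclusion core}(\Gamma^*)=\text{rectified core}(\Gamma^*)$; hence $\text{exclusion core}(\Gamma)=\{\mu^R:\mu\in\text{rectified core}(\Gamma^*)\}$. Lemma \ref{lemma:rectified core} states precisely that this latter set is contained in the rectified core of $\Gamma$, so $\text{exclusion core}(\Gamma)\subset\text{rectified core}(\Gamma)$. The inclusion need not be an equality: the first (HET) economy in Example \ref{Example:HET} is private-public-ownership with exclusion core $\{\sigma_1,\sigma_2\}$ strictly inside rectified core $\{\mu,\sigma_1,\sigma_2\}$, which justifies writing $\subset$ rather than $=$.

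Finally, for the separation from the strong core I would invoke two private-public-ownership examples. Example \ref{Example:HET} (first economy) has strong core $\{\mu,\sigma_1,\sigma_2\}$ but exclusion core $=$ YRMH-IGYT $=\{\sigma_1,\sigma_2\}$, so the three left-hand solutions differ from the strong core even when it is nonempty; Example \ref{Example:strongcore=subsetof:inductioncore}, also private-public-ownership, exhibits a rectified core strictly containing a nonempty strong core. The main obstacle here is conceptual rather than computational: the refined exclusion core is not covered by the consistency lemma, so the argument must route around it through the squeeze, and one must read Lemma \ref{lemma:rectified core} as a one-sided inclusion only. Equality with the rectified core of $\Gamma$ genuinely fails, and this failure is simultaneously the reason the rectified core is not consistent and the reason the statement records $\subset$ instead of $=$.
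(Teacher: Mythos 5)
Your proposal is correct and follows essentially the same route as the paper's proof: augment to the private-ownership economy $\Gamma^*$, apply Proposition \ref{prop:private:ownership} there, transport YRMH-IGYT and the exclusion core back via Lemma \ref{lemma:consistent}, squeeze the refined exclusion core between YRMH-IGYT and the exclusion core, and obtain the one-sided inclusion into the rectified core from Lemma \ref{lemma:rectified core}, with Examples \ref{Example:HET} and \ref{Example:strongcore=subsetof:inductioncore} supplying the separation from the strong core. If anything, you are slightly more careful than the paper in spelling out that Lemma \ref{lemma:rectified core} compares the \emph{restrictions} $\{\mu^R\}$ of rectified-core allocations of $\Gamma^*$ to the rectified core of $\Gamma$, a point the paper's appendix states more loosely.
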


\section{Two applications}\label{Section:application}

We discuss two applications of our idea in this paper. First, we provide a cooperative foundation for YRMH-IGYT in the HET model. In the housing market model, the strong core provides a cooperative foundation for TTC: it is the mechanism that finds the unique allocation in the strong core. But from Example \ref{Example:HET} in Section \ref{Section:special:economies} we see that in the HET model, YRMH-IGYT does not find all allocations in the strong core or in the rectified core. Inspired by the consistency property in Section \ref{Section:special:economies}, we propose a refinement of the rectified core, which will characterize YRMH-IGYT. For every HET economy $ \Gamma$, recall that $ \Gamma^* $ is the augmentation of $ \Gamma $. We regard the rectified core in $ \Gamma^* $ as a solution for $ \Gamma $, and call it \textbf{the rectified core*}. So the rectified core* is consistent. It turns out that this concept has a concise and meaningful definition in the HET model.

\begin{definition}\label{Defn:rectified*}
	In every HET economy, an allocation $ \mu $ is \textbf{rectification blocked*} by a coalition $ C $ via another allocation $ \sigma $ if
	\begin{enumerate}
		\item $ \forall i\in C $, $ \sigma(i) \succsim_i \mu(i) $ and $ \exists j\in C $, $ \sigma(j)\succ_j\mu(j) $;
		
		\item $ \sigma(C)\subset \w(C)\cup \{o^*\}\cup [O^P\backslash \mu(I\backslash C)]  $.
	\end{enumerate} 
	The \textbf{rectified core*} consists of allocations that are not rectification blocked*.
\end{definition}

Condition 2 of Definition \ref{Defn:rectified*} means that the objects that a blocking coalition can reallocate in an allocation include their endowments and public endowments that are not assigned to those outside the coalition. The freedom of reallocating public endowments in the aforementioned case distinguishes the rectified core* from the strong core.

Because the rectified core* is consistent, by Proposition \ref{prop:private:ownership} and Lemma \ref{lemma:consistent} in Section \ref{Section:special:economies}, the rectified core* and the exclusion core both equal YRMH-IGYT in the HET model.

\begin{corollary}
	In HET economies, YRMH-IGYT=rectified core*=exclusion core.	
\end{corollary}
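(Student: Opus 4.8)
The plan is to collapse all three solution concepts onto the single private-ownership economy $\Gamma^*$ and then chain together results already proved. The one structural fact doing the real work is that for any HET economy $\Gamma$, the augmentation $\Gamma^*$ is a \emph{private-ownership} economy: each existing tenant still privately owns her distinct object in $O'$, and the added agent $i^*$ privately owns every public endowment in $O^P=O\setminus O'$, so that $|C^*_o|=1$ for every $o\in O$. First I would record this observation, since it is exactly what licenses applying Proposition \ref{prop:private:ownership} to $\Gamma^*$, giving
\[
\text{YRMH-IGYT}(\Gamma^*)=\text{rectified core}(\Gamma^*)=\text{exclusion core}(\Gamma^*).
\]
I would also note that $i^*$ finds every real object unacceptable, so $\Gamma^*$ lies squarely in the domain of Proposition \ref{prop:private:ownership}, which permits unacceptable objects under private ownership.

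Next I would unwind the three solutions on $\Gamma$ through the restriction map $\mu\mapsto\mu^R$. By definition the rectified core* of $\Gamma$ is $\{\mu^R:\mu\in\text{rectified core}(\Gamma^*)\}$. Since every HET economy lies in $\mathcal{E}^{0}$, Lemma \ref{lemma:consistent} applies and tells me that the exclusion core and YRMH-IGYT are consistent, whence
\[
\text{exclusion core}(\Gamma)=\{\mu^R:\mu\in\text{exclusion core}(\Gamma^*)\},\qquad
\text{YRMH-IGYT}(\Gamma)=\{\mu^R:\mu\in\text{YRMH-IGYT}(\Gamma^*)\}.
\]
Each of the three solutions on $\Gamma$ is thus the image under $\mu\mapsto\mu^R$ of a set of allocations in $\Gamma^*$.

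Finally I would combine the two displays: the three sets of allocations in $\Gamma^*$ coincide by the first display, so their images under the common restriction map coincide as well, yielding
\[
\text{rectified core*}(\Gamma)=\text{exclusion core}(\Gamma)=\text{YRMH-IGYT}(\Gamma),
\]
which is the corollary.

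The step I expect to be the only real point of care is not a deep obstacle but a bookkeeping one: I must make sure that the same augmentation $\Gamma^*$ is used throughout, so that all three solutions on $\Gamma$ are restrictions of one common set of allocations on $\Gamma^*$. Concretely, this means checking that the $\Gamma^*$ appearing in the definition of the rectified core* is identical to the $\Gamma^*$ for which Lemma \ref{lemma:consistent} is invoked, and that the restriction operation is applied identically in all three cases. No new combinatorial argument about blocking coalitions or about the mechanism's dynamics is needed; all of that work has been front-loaded into Proposition \ref{prop:private:ownership}, Lemma \ref{lemma:consistent}, and the fact that the rectified core* is consistent by construction.
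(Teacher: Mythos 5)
Your proof is correct and follows essentially the same route as the paper, which likewise derives the corollary by observing that $\Gamma^*$ is a private-ownership economy, applying Proposition \ref{prop:private:ownership} to $\Gamma^*$, and then pulling the resulting equalities back to $\Gamma$ via Lemma \ref{lemma:consistent} together with the built-in consistency of the rectified core*. Your added care about using one common augmentation $\Gamma^*$ throughout, and your check that $\Gamma^*$ falls within the domain of Proposition \ref{prop:private:ownership}, are sound but match what the paper leaves implicit.
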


BK claim that the exclusion core characterizes of YRMH-IGYT in the HET model. The rectified core* provides a new characterization of YRMH-IGYT. Comparing with the exclusion core, it uses the standard interpretation of endowments and relaxes the usage of public endowments.

In Section \ref{Section:rectified:core} we have discussed that the problem with the strong core similarly exists in the housing market model when agents' preferences are not strict. Recall the example that three agents $ 1,2,3 $ respectively own three objects $ a,b,c $, and $ 1 $ and $ 3 $ most prefer $ b $ while $ 2 $ is indifferent between all objects. Because $2 $ is willing to exchange endowments with either of the other two agents, this makes the strong core empty.

In our second application, we will apply our idea in Section \ref{Section:rectified:core} to the housing market model with  weak preferences, and propose a new definition of the strong core by modifying the altruism of unaffected agents in blocking coalitions. We will also explore the relation between the strong core in the new definition and extensions of TTC to weak preferences (e.g., extensions proposed by \cite{alcalde2011exchange} and \cite{jaramillo2012difference}). This study will be conducted and presented in a separate paper \cite{zhanghousingmarketcore}.

\section{Related literature}\label{Section:discussion}

Noticing the incompetence of the exclusion core in economies like Example \ref{Example:co-ownership}, BK propose relation economies (where endowments are replaced by priorities) and extend the exclusion core. The strong exclusion core they define may be empty, while the weak exclusion core they define is always nonempty. Because their extensions are defined for a more general model, we do not conduct a thorough comparison between our refinement and theirs. BK propose the Generalized TTC mechanism to find allocations in the weak exclusion core. It is easy to see that our YRMH-IGYT is a special case of their Generalized TTC. For the complex endowments model, the allocation $ \mu $ in Example \ref{Example:seconddesign} can be found by their mechanism, but cannot be found by our YRMH-IGYT. So $ \mu $ is in their weak exclusion core, but not in our refined exclusion core.

\cite{sun2020core} propose another modification of the strong core to solve its nonexistence problem in the complex endowments model. Their solution, called the \textbf{effective core}, can be obtained by replacing the condition 3 of Definition \ref{Definition:induction:block} with the following condition: for every self-enforcing $ C'\subset C $, if $ o\in \w(C')\backslash \sigma(C') $, then $ o\notin \sigma(C) $ unless $ C=I $.
In words, if a blocking coalition $ C $ does not include all agents, then any agent in $ C $ cannot obtain redundant endowments of any self-enforcing sub-coalition of $ C $. It is clear that if this condition is satisfied, our condition 3 is also satisfied. So the effective core is a superset of the rectified core. Sun et al. propose an extension of TTC to find their solution. Our YRMH-IGYT is a special case of their mechanism. Some outcomes of their mechanism cannot be found by YRMH-IGYT, and do not belong to the rectified core.\footnote{An example is available upon request.}

In Section \ref{Section:special:economies}, we define the rectified core* to provide a cooperative foundation for YRMH-IGYT in the HET model.  \cite{ekici2013reclaim} provides a similar but different characterization. In his result, the objects a blocking coalition can reallocate in an allocation include their endowments and assignments. So a coalition may reallocate the others' private endowments, which is not allowed in our solution. \cite{sonmez2010house} characterize YRMH-IGYT in an axiomatic approach, which is different than ours.

Finally, 
\cite{balbuzanov2019property} build the exclusion right into a production network in which agents and firms interact through input-output relations. It seems that the conventional interpretation of endowments cannot be similarly extended. On the other hand, whether the refinement of the exclusion core we propose can apply to production networks is an interesting question. It is left for future research.


\appendix

\section{Proof of Theorem \ref{thm:YRMH-IGYT}}

	For any economy $ \Gamma=(I,O,\succ_I,\{C_o\}_{o\in O}) $ and an order of agents $ \rhd $, let $ \mu $ denote the allocation found by YRMH-IGYT. 
	
	\paragraph{$ \mu $ belongs to the rectified core.} Suppose by contradiction that $ \mu $ is rectification blocked by a coalition $ C $ via another allocation $ \sigma $. Among the agents in $ C $ who become strictly better off in $ \sigma $, let $ i_0$ be an agent who is removed earliest in  YRMH-IGYT. Suppose that $ i_0 $ is removed in step $ t_0 $, and $ \sigma(i_0) $ is removed in step $ t $. Because $ \sigma(i_0) \succ_{i_0} \mu(i_0) $,  the object $ o_1= \sigma(i_0) $ must be removed earlier than $ i_0 $. That is, $ t<t_0 $. By the definition of $ i_0 $, all agents in $ C $ who are removed before step $ t_0 $ are indifferent between $ \mu $ and $ \sigma $. Let $ i' $ be the agent who obtains $ o_1 $ in step $ t $. Because $ \mu(i')\neq \sigma(i') $, $ i' $ cannot belong to $ C $. There are two cases.
	
	\medskip
	\noindent \textbf{Case 1}: $ o_1 $ does not point to any agent in step $ t $. So $ C_{o_1}(t-1)= \emptyset $ and $ S_{o_1}(t-1)=\emptyset $. Given that $ C $ has to be self-enforcing in $ \sigma $, because $ i_0\in C $ and $ \sigma(i_0)=o_1 $, $ C_{o_1} $ must belong to $ C $, all owners of $ \sigma(C_{o_1}) $ must belong to $ C $, and so on. So if we define $ C'=\cup_{k=0}^\infty C^k $ where $ C^0= C_{o_1}$ and $ C^k=\cup_{o\in \sigma(C^{k-1})}C_o $ for all $ k\ge 1 $, then $ \sigma(C')\subset \w(C')\cup \{o^*\} $ and $ C'\subset C $. Below we prove by induction that all agents in $ C' $ are removed before step $ t $. 	We explain the first several steps carefully to illustrate our idea.
	
	\textbf{Step 1}: If some $ i_1\in C^0= C_{o_1} $ is not removed before step $ t $, then $ i_1 \in C_{o_1}(t-1)$. But this contradicts $ C_{o_1}(t-1)= \emptyset $. So all agents in $ C_{o_1} $ are removed before step $ t $. It means that for all $ i\in C_{o_1} $, $ \mu(i)=\sigma(i) $.
	
	\textbf{Step 2}: If some $ i_2\in C^1= \cup_{o\in \sigma(C_{o_1})}C_o $ is not removed before step $ t $, then there exist $i_1\in C_{o_1} $ and $ o_2\in \sigma(C_{o_1})$ such that $ \sigma(i_1)=o_2 $ and $ i_2\in C_{o_2} $. When $ i_1 $ obtains $ o_2 $ before step $ t $, because $ i_2 $ remains, $ o_2 $ must point to some agent. This means that $ o_2 $ is involved in a cycle. So after the cycle is cleared, $ i_1 $ must share the ownership of $ o_1 $ with the remaining owners of $ o_2 $. In particular, $ i_2 $ obtains the shared ownership of $ o_2 $. So $ i_2\in S_{o_1}(t-1) $, but it contradicts $ S_{o_1}(t-1)=\emptyset $. So all agents in $ C^1 $ must be removed before step $ t $. Thus, for all $ i\in C^1 $, $ \mu(i)=\sigma(i) $.
	
	\textbf{Step 3}: If some $ i_3\in C^2= \cup_{o\in \sigma(C^1)}C_o $ is not removed before step $ t $, then similarly as above, there exist $i_1\in C_{o_1} $, $ o_2\in \sigma(C_{o_1})$, $ i_2\in C_{o_2} $, and $ o_3\in  \sigma(C^1)$ such that $ \sigma(i_1)=o_2 $, $ \sigma(i_2)=o_3$, and $ i_3\in C_{o_3} $. We denote their relations by a chain:
	\[
	o_1\rightarrow i_1 \rightarrow o_2\rightarrow i_2 \rightarrow o_3\rightarrow i_3.
	\]
	
	We want to prove that $ i_3 $ obtains the shared ownership of $ o_1 $. Because $ i_3 $ is not removed before step $ t $, when $ i_2 $ obtains $ o_3 $, which happens before step $ t $, $ o_3 $ must point to some agent. So $ o_3 $ is involved in a cycle and thus $ i_2 $ must share ownerships with $ i_3 $. If $ o_2 $ is removed before $ i_2 $, then $ o_2 $ must be involved in a cycle when it is removed. So $ i_2 $ must obtain the shared ownership of $ o_1 $, and therefore after $ i_2 $ is removed, $ i_3 $ must obtain the shared ownership of $ o_1 $. If $ o_2 $ is removed after $ i_2 $, then after $ i_2 $ is removed, $ i_3 $ must obtain the shared ownership of $ o_2 $. Then after $ o_2 $ is removed, $ i_1 $ must share the ownership of $ o_1 $ with $ i_3 $. Last, if $ o_2 $ and $ i_2 $ are involved in the same cycle, then after the cycle is removed, $ i_1$ will share the ownership of $ o_1 $ with $ i_3 $ directly. So in every case we must have $ i_3\in S_{o_1}(t-1) $, but it contradicts $ S_{o_1}(t-1)=\emptyset $. So all agents in $ C^2 $ must be removed before step $ t $. Thus, for all $ i\in C^2 $, $ \mu(i)=\sigma(i) $.
	
    \textbf{Step $ \ell\ge 4 $}: If all agents in $ C^{\ell-2} $ are removed before step $ t $, but some $ i_\ell\in C^{\ell-1} $ is not removed before step $ t $. Then there must exist a chain of agents and objects
    \[
    o_1\rightarrow i_1\rightarrow o_2\rightarrow i_2\rightarrow o_3 \rightarrow \cdots \rightarrow o_{\ell-1} \rightarrow i_{\ell-1} \rightarrow o_{\ell}\rightarrow i_\ell 
    \]
    such that for every $ 1\le k\le \ell $, $ i_k\in C_{o_k} $ and $ \sigma(i_k)=o_{k+1} $ ($ \sigma(i_\ell) $ is not defined). Before $ i_\ell $ is not removed before step $ t $, when $ i_{\ell-1} $ and $ o_\ell $ are removed, which happens before step $ t $, $ o_\ell $ must point to some agent and thus be involved in a cycle. So $ i_{\ell-1} $ shares ownerships with $ i_\ell $. By similarly applying the argument in previous steps, the ownership of $ o_1 $ will be passed on along the chain to $ i_\ell$. That is, $ i_\ell\in S_{o_1}(t-1)  $, but this contradicts $ S_{o_1}(t-1)=\emptyset $. So all agents in $ C^{\ell-1} $ are removed before step $ t $.
    
    \textbf{By induction}, every $ i\in C' $ is removed before step $ t $. So for all $ i\in C' $, $ \mu(i)=\sigma(i) $. It means that $ C' $ is self-enforcing and $ C'\subset C_{\sigma=\mu} $. Because $ i_0 $ is strictly better off in $ \sigma $, $ i_0\notin C' $. Then because $ \sigma(i_0)=o_1\in \w(C') $, the third condition in Definition \ref{Definition:induction:block} requires that $ o_1\notin \mu(I\backslash C) $. But it contradicts $ \mu(i')=o_1 $ and $ i'\notin C $.
	
	\medskip
	
    \noindent \textbf{Case 2}: $ o_1 $ is involved in a cycle in step $ t $. Without loss of generality, denote the cycle by
	\[
	i'\rightarrow o_1 \rightarrow i_1 \rightarrow o_2 \rightarrow i_2 \rightarrow \cdots \rightarrow o_\ell \rightarrow i'.
	\]
	Then it must be that $ i_1\in C_{o_1}(t-1) $ or $ i_1\in S_{o_1}(t-1) $. If $ i_1\in C_{o_1}(t-1)  $, then $ i_1\in C $, because $ C_{o_1}\subset C $. If $ i_1\in S_{o_1}(t-1)  $, then all owners of $ o_1 $ must be removed before step $ t $. As in Case 1, we define $ C'=\cup_{k=0}^\infty C^k $ where $ C^0= C_{o_1}$ and $ C^k=\cup_{o\in \sigma(C^{k-1})}C_o $ for all $ k\ge 1 $. Because $ i_1 $ obtains the shared ownership of $ o_1 $ before step $ t $, it must be that $ i_1\in C'\subset C $. Because the cycle is removed in step $ t $, by the definition of $ i_0 $, it must be that $ \sigma(i_1)=\mu(i_1) $. So $ \sigma(i_1)=o_2 $. Because $ C $ is self-enforcing, $ o_2\in \w(C) $. Then either $ i_2\in C_{o_2}(t-1) $ or $ i_2\in S_{o_2}(t-1) $. By applying the above argument to $ i_2 $ and  inductively to the remaining agents in the cycle, we conclude that all agents in the cycle belong to $ C $. But it contradicts $ i'\notin C $.

\paragraph{$ \mu $ belongs to the refined exclusion core.} Because YRMH-IGYT is a special case of BK's Generalized TTC algorithm, $ \mu $ cannot be exclusion blocked.\footnote{BK prove that all outcomes of their Generalized TTC algorithm are in the exclusion core.} Suppose by contradiction that $ \mu $ is refined exclusion blocked by a coalition $ C $ via another allocation $ \sigma $. Because $ \mu $ is not exclusion blocked, $ C_{\sigma=\mu}\neq \emptyset $. Because we have proved that $ \mu $ cannot be rectification blocked, $ C $ is not a minimal self-enforcing coalition.\footnote{If $ C $ is a minimal self-enforcing coalition in $ \sigma $, then $ C $ will satisfy the conditions in Definition \ref{Definition:induction:block}.} So $ C_{\sigma=\mu}  $ must be self-enforcing, and for all $ j $ with $\mu(j)\succ_j \sigma(j)$,  $\mu(j)\in \Omega(C|\w,\mu)\backslash\Omega(C_{\sigma=\mu}|\w,\mu) $. Without loss of generality, let $ C $ include all agents who become strictly better off in $ \sigma $. Note that this does not change the set of unaffected agents $ C_{\sigma=\mu}  $.

 Among the agents who are strictly worse off in $ \sigma $, let $ i_0 $ be an agent who is removed earliest in YRMH-IGYT. Suppose $ i_0 $ is removed in step $ t_0 $. Among the agents who are strictly better off in $ \sigma $, let $ i' $ be an agent who is removed earliest in YRMH-IGYT. Suppose $ i' $ is removed in step $ t_1 $. Because $ \sigma(i') $ is better than $ \mu(i') $ for $ i' $, $ \sigma(i') $ must be removed before step $ t_1 $. Let $ j' $ be the agent who obtains $ \sigma(i') $ in YRMH-IGYT, that is, $ \mu(j')=\sigma(i') $. Then $ j' $ cannot belong to $ C $, because otherwise $ j' $ must be strictly better off in $ \sigma $ and this contradicts the definition of $ i' $. So $ j' $ must be strictly worse off in $ \sigma $. Then $ j' $ cannot be removed earlier than $ i_0 $. It means that $ t_1>t_0 $. Thus, all agents who are strictly better off in $ \sigma $ must be removed after step $ t_0 $, and all agents who are removed before step $ t_0 $ must be indifferent between  $ \sigma $ and $ \mu $.

 Let $ \mu(i_0)= o_1$. Because $ i_0 $ is worse off in $ \sigma $, we need to have $o_1\in \Omega(C|\w,\mu)\backslash\Omega(C_{\sigma=\mu}|\w,\mu) $. But below we prove that if $ o_1\in \Omega(C|\w,\mu)$, then $o_1\in \Omega(C_{\sigma=\mu}|\w,\mu) $. This is a contradiction. We prove this result through proving two lemmas.

\begin{lemma}\label{thm3:lemma1}
	For any agent $ i $ who obtains any object $ o $ in any step $ t\le t_0 $ of YRMH-IGYT, (1) if $ i\in C $, then $ i\in C_{\sigma=\mu} $ and $ o\in \w(C_{\sigma=\mu}) $; (2) if $ o\in \w(C) $, then $ o\in \w(C_{\sigma=\mu}) $;\footnote{Note that with only $ o\in \w(C) $, we cannot conclude that $ i\in C_{\sigma=\mu}  $.} (3) if $ i $ and $ o $ are involved in a cycle in step $ t $ and $ i\in C $ or $ o\in C $, then all agents in the cycle belong to $ C_{\sigma=\mu} $ and all objects in the cycle belong to $ \w(C_{\sigma=\mu}) $.
\end{lemma}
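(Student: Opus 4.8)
The plan is to prove the three claims together by strong induction on the step $t\le t_0$ of YRMH-IGYT, since they feed into one another. The single fact that drives everything is a consequence of the set-up established just before the lemma: any agent of $C$ that is removed no later than step $t_0$ must be unaffected, i.e.\ lie in $C_{\sigma=\mu}$. Indeed, condition~1 of refined e-blocking excludes every strictly worse-off agent from $C$, and it was already shown that every strictly better-off agent is removed only after step $t_0$; hence an agent $i\in C$ removed at a step $t\le t_0$ is indifferent, so $i\in C_{\sigma=\mu}$. Because $C_{\sigma=\mu}$ is self-enforcing in $\sigma$, the object $i$ receives equals $\sigma(i)\in\w(C_{\sigma=\mu})\cup\{o^*\}$, which gives the object half of claim~(1) at once. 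This handles claim~(1) without induction.

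For claim~(2) I would show that when an object $o\in\w(C)$ is removed at step $t\le t_0$, all of its \emph{original} owners $C_o\subset C$ have been removed by step $t$. Because original owners take precedence over shared owners in the pointing rule, any original owner still present at step $t$ is the one $o$ points to and is removed in that very step, while every other original owner was removed strictly earlier; and if $o$ is assigned directly, all were removed earlier. In every case each original owner is an agent of $C$ removed by step $t\le t_0$, so claim~(1) places it in $C_{\sigma=\mu}$, giving $C_o\subset C_{\sigma=\mu}$ and hence $o\in\w(C_{\sigma=\mu})$.

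Claim~(3), the cycle-contagion statement, is the crux. Starting from one agent (resp.\ object) of the cycle known to lie in $C$ (resp.\ in $\w(C)$), I would walk around the cycle $i_1\to o_1\to i_2\to\cdots\to o_n\to i_1$: claim~(1) turns a cycle agent in $C$ into a $C_{\sigma=\mu}$ agent whose assigned object is a $C_{\sigma=\mu}$-endowment, and each such object $o_k$ points to its highest-ranked remaining owner $i_{k+1}$. When $i_{k+1}$ is an \emph{original} owner of $o_k\in\w(C_{\sigma=\mu})$ it immediately lies in $C_{\sigma=\mu}$ and the walk continues. The delicate case—and the step I expect to be the main obstacle—is when $i_{k+1}$ is only a \emph{shared} owner of $o_k$. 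Then I must trace the shared ownership back to the earlier cycle (at some step $s<t$) in which the algorithm granted it, and invoke the induction hypothesis for claim~(3) at step $s$ to conclude that that whole earlier cycle sits inside $C_{\sigma=\mu}$, hence that the object whose ownership $i_{k+1}$ held there is a $C_{\sigma=\mu}$-endowment and $i_{k+1}\in C_{\sigma=\mu}$. Making this clean requires carrying the auxiliary invariant that, at every step $\le t_0$, every remaining owner or shared owner of a $\w(C_{\sigma=\mu})$-object is itself in $C_{\sigma=\mu}$; this invariant is preserved precisely because shared ownership is dispensed only inside cleared cycles, and by the induction hypothesis any such cycle that touches $C$ is already contained in $C_{\sigma=\mu}$. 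Matching the algorithm's sharing rule to this invariant—so that the recursion on the granting step always terminates at a genuine original ownership—is the part that needs the most care.
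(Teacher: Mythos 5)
Your claims (1) and (3) track the paper's argument: (1) is proved exactly as in the paper (a $C$-agent removed by step $t_0$ cannot be strictly better off, hence is indifferent, and self-enforcingness of $C_{\sigma=\mu}$ yields $o\in\w(C_{\sigma=\mu})$), and your step-indexed invariant with backward tracing of sharing grants is the same mechanism as the paper's one-shot closure set $C'$ (the closure of $C_o$ under owners of $\mu$-assignments, with $C'\subset C_{\sigma=\mu}$ forced by self-enforcingness), just with different bookkeeping. The genuine gap is in your claim (2). Your assertion that ``any original owner still present at step $t$ is the one $o$ points to and is removed in that very step'' misdescribes the algorithm: $o$ points only to the single $\rhd$-highest agent in $C_o(t-1)$, and when the cycle through $o$ is cleared, only the agents and objects \emph{in the cycle} are removed. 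Any other remaining original owners of $o$ survive step $t$ -- this is precisely the situation the sharing-ownership rule exists to handle. So when $o\in\w(C)$ is removed in a cycle whose pointed owner $i_1\in C_o(t-1)$ is not the last remaining one, your premise ``all of $C_o$ removed by step $t$'' is simply false, and claim (1) cannot be applied to those surviving owners.

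Claim (2) is nevertheless true, and the paper's route shows how to repair your argument: the all-owners-removed reasoning is valid only in the no-pointing case (where $C_o(t-1)=\emptyset$ by construction), while in the cycle case one must first complete the walk of claim (3) all the way around the cycle, concluding that the \emph{recipient} $i$ of $o$ lies in $C_{\sigma=\mu}$; then $o=\mu(i)=\sigma(i)\in\w(C_{\sigma=\mu})$ follows from self-enforcingness of $C_{\sigma=\mu}$, i.e., $C_o\subset C_{\sigma=\mu}$ holds by the definition of $\w$ and the blocking coalition's structure, not because the algorithm has removed those owners. In other words, within your own joint induction you should derive the cycle case of (2) as a corollary of (3) rather than from removal dynamics; note this also matters for the start of your cycle walk, where the shared-owner subcase is safe (there $C_o(t-1)=\emptyset$ genuinely holds, since an object points to a shared owner only after all original owners are gone), but the original-owner subcase must not presuppose that the other members of $C_o$ have left the economy.
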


\begin{proof}[Proof of Lemma \ref{thm3:lemma1}]
	We first consider the case that $ i $ and $ o $ are involved in a cycle in step $ t $. Without loss of generality, denote the cycle by
	\[
	i\rightarrow o\rightarrow i_1 \rightarrow o_2 \rightarrow i_2 \rightarrow \cdots \rightarrow o_\ell \rightarrow i.
	\]
	If $ i\in C $, because  all agents in $ C_{\sigma>\mu} $ are removed after step $ t_0 $, it must be that $ i\in  C_{\sigma=\mu}$. Because $ C_{\sigma=\mu} $ is self-enforcing, $ o\in \w(C_{\sigma=\mu}) $. So to prove the third statement, it is sufficient to prove that if $ o\in \w(C) $, then all agents in the cycle belong to $ C_{\sigma=\mu} $ and all objects in the cycle belong to $ \w(C_{\sigma=\mu}) $. There are two cases. If $ i_1 $ is an owner of $ o $, then it must be that $ i_1\in C_{\sigma=\mu} $, because all agents in $ C_{\sigma>\mu} $ are removed after step $ t_0 $. If $ i_1 $ is not an owner of $ o $, then all owners of $ o $ are removed before step $ t $ and $ i_1 $ obtains the shared ownership of $ o $. Because all agents in $ C_{\sigma>\mu} $ are removed after step $ t_0 $, it must be that $ C_o\subset C_{\sigma=\mu} $. Let $ C' $ consist of $ C_o $, the owners of $ \mu(C_{o}) $ (i.e., $ \cup_{o'\in \mu(C_{o})} C_{o'}$), the owners of $ \mu(\cup_{o'\in \mu(C_{o})} C_{o'}) $, and so on. Because $ C_{\sigma=\mu} $ is self-enforcing, it must be that $ i_1\in C'\subset C_{\sigma=\mu} $. So in every case we must have $ i_1\in C_{\sigma=\mu} $. Thus, $ o_2=\mu(i_1)\in \w(C_{\sigma=\mu}) $. Applying the above argument to $ i_2 $ and inductively  to the remaining agents and objects in the cycle, we will conclude that all agents in the cycle belong to $ C_{\sigma=\mu} $ and all objects in the cycle belong to $ \w(C_{\sigma=\mu}) $.
	
   Now suppose that $ o $ does not point to any agent in step $ t $. If $ i\in C $, as proved above, it must be that $ i\in  C_{\sigma=\mu}$ and $ o\in \w(C_{\sigma=\mu}) $. If $ o\in \w(C) $, then all owners of $ o $ are removed before step $ t$. Because all agents in $ C_{\sigma>\mu} $ are removed after step $ t_0 $, all owners of $ o $ belong to $ C_{\sigma=\mu} $. So $ o\in \w(C_{\sigma=\mu}) $.
\end{proof}

Recall that $ \Omega\big(C|\w,\mu\big)=\w(\cup_{k=0}^\infty C^\ell) $ where $ C^0=C $ and $ C^k=C^{k-1}\cup (\mu^{-1}\circ \w)(C^{k-1}) $ for every $ k\ge 1 $.

\begin{lemma}\label{thm3:lemma2}
	For any $ k \ge 1 $ and $ i\in C^k\backslash C^{k-1} $, if $ i $ obtains an object $ o $ in any step $ t\le t_0 $ of YRMH-IGYT, then $ o\in \Omega(C_{\sigma=\mu}|\w,\mu) $ and $ o $ does not point to any agent in step $ t $.
\end{lemma}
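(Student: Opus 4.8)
The plan is to prove both assertions of Lemma~\ref{thm3:lemma2} by a single induction on $k$, using as inputs only Lemma~\ref{thm3:lemma1}, the induction hypothesis at levels $k'<k$, and the recursive definition $C^{k'}=C^{k'-1}\cup(\mu^{-1}\circ\w)(C^{k'-1})$. It is convenient to abbreviate the control set of the unaffected agents: write $D=\bigcup_{m\ge 0}D^m$ with $D^0=C_{\sigma=\mu}$ and $D^m=D^{m-1}\cup(\mu^{-1}\circ\w)(D^{m-1})$, so that $\Omega(C_{\sigma=\mu}|\w,\mu)=\w(D)$ and $D$ is closed under taking occupants of its endowments (if $\mu(p)\in\w(D)$ then $p\in\mu^{-1}(\w(D))\subset D$). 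Within a fixed level I would establish statement (b) first and then (a), since the proof of (a) needs the fact, supplied by (b), that all owners of $o$ have already been removed before step $t$. For motivation, once the lemma is in hand it closes the main argument by application to the earliest-removed strictly-worse-off agent $i_0$: setting $o_1=\mu(i_0)$, the hypothesis $o_1\in\Omega(C|\w,\mu)$ means $C_{o_1}\subset\bigcup_k C^k$, so $C_{o_1}\subset C^{K-1}$ for some minimal $K\ge 1$, whence $i_0\in\mu^{-1}(\w(C^{K-1}))\subset C^K$ while $i_0\notin C^0=C$; thus $i_0\in C^K\backslash C^{K-1}$ and part (a) delivers $o_1\in\Omega(C_{\sigma=\mu}|\w,\mu)$, contradicting $o_1\notin\Omega(C_{\sigma=\mu}|\w,\mu)$.

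The key preliminary fact, and the step I expect to be the main obstacle, is that \emph{every} owner and sharing owner of $o$ present just before step $t$ lies in $C^{k-1}$. The inclusion $C_o\subset C^{k-1}$ is immediate from $o=\mu(i)\in\w(C^{k-1})$, but the sharing owners in $S_o(t-1)$ must be tracked through the cycles that created them. For this I would run an inner induction on the order in which shared ownership of $o$ is transferred. Whenever a cycle at a step before $t$ removes a current owner or sharer $p$ of $o$ — which by the inner hypothesis already lies in $C^{k-1}$ — and thereby grants shared ownership of $o$ to the owners of that cycle's objects, I locate $p$ in the hierarchy: if $p\in C$, then since $p$ is removed in a cycle at a step $\le t_0$, Lemma~\ref{thm3:lemma1}(3) forces every agent of the cycle into $C_{\sigma=\mu}\subset C\subset C^{k-1}$ and every object of the cycle into $\w(C_{\sigma=\mu})$, so the freshly created sharers lie in $C_{\sigma=\mu}\subset C^{k-1}$; if instead $p\in C^{k'}\backslash C^{k'-1}$ with $k'\le k-1$, then $\mu(p)\in\w(C^{k'-1})$ by definition, so the owners of $\mu(p)$ lie in $C^{k'-1}\subset C^{k-1}$. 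The delicate point is handling nested shared ownership cleanly, i.e.\ making sure that sharers (rather than genuine owners) of the cycle's objects are themselves already captured by an earlier stage of the inner induction; this is where the bookkeeping is heaviest and where Lemma~\ref{thm3:lemma1}(3) does most of the work by collapsing any $C$-touching cycle entirely into $C_{\sigma=\mu}$.

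Granting this containment, part (b) is short. Suppose $o$ pointed to some agent $i_1$ in step $t$; then $i$ would obtain $o$ through a cycle, and $i_1\in C_o(t-1)\cup S_o(t-1)\subset C^{k-1}$. If $i_1\in C=C^0$, then Lemma~\ref{thm3:lemma1}(3) applied to this cycle puts all its agents, in particular $i$, into $C_{\sigma=\mu}\subset C^{k-1}$, contradicting $i\notin C^{k-1}$. If $i_1\in C^{k'}\backslash C^{k'-1}$ with $1\le k'\le k-1$, then $i_1$ obtains $\mu(i_1)$ in this same cycle at step $t\le t_0$, so the induction hypothesis (b) asserts that $\mu(i_1)$ points to no agent in step $t$; but $\mu(i_1)$ is an edge of the cycle and points to the next agent, a contradiction. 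Hence $o$ points to nobody in step $t$ and $i$ obtains $o$ directly.

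Finally, for part (a), statement (b) guarantees $C_o(t-1)=S_o(t-1)=\emptyset$, so every owner $p\in C_o$ was removed before step $t\le t_0$. For each such $p$: if $p\in C$, then Lemma~\ref{thm3:lemma1}(1) gives $p\in C_{\sigma=\mu}=D^0\subset D$; if $p\in C^{k'}\backslash C^{k'-1}$ with $k'<k$, then the induction hypothesis (a) gives $\mu(p)\in\Omega(C_{\sigma=\mu}|\w,\mu)=\w(D)$, and since $p$ occupies $\mu(p)$ the closedness of $D$ yields $p\in D$. Therefore $C_o\subset D$, so $o\in\w(D)=\Omega(C_{\sigma=\mu}|\w,\mu)$, which completes the induction and the lemma.
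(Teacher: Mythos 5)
Your overall route is the paper's: the same outer induction on $k$, the same use of Lemma~\ref{thm3:lemma1}(3) to collapse any cycle containing a member of $C$ into $C_{\sigma=\mu}$, the same pivotal trick of applying the level-$k'$ hypothesis (b) to an agent $i_1\in C^{k'}\backslash C^{k'-1}$ sitting in a putative step-$t$ cycle (its object $\mu(i_1)$ would have to point to the next agent, contradicting (b)), and the same derivation of (a) from (b) by pushing the removed owners of $o$ into the closure of $C_{\sigma=\mu}$ --- your set $D$ merely makes explicit a closure the paper leaves implicit. The one structural difference is that you interpose the claim $C_o(t-1)\cup S_o(t-1)\subset C^{k-1}$, proved by an inner induction on sharing events, whereas the paper never bounds $S_o(t-1)$ by $C^{k-1}$: it shows directly that any sharing owner of $o$ must lie in $C_{\sigma=\mu}$, by classifying the true owners $j_1\in C_o\subset C^{k-1}$ and tracking where their shared ownership can go.

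That is exactly where your write-up has a genuine soft spot, in case (ii) of your inner induction. When $p\in C^{k'}\backslash C^{k'-1}$ ($k'\le k-1$) is removed in a cycle, the freshly created sharers of $o$ are the remaining owners \emph{and sharers} of \emph{all} objects of that cycle; your argument only places the owners of the single object $\mu(p)$ in $C^{k'-1}$, so as written it does not establish the inductive claim. The correct move --- and the paper's --- is that this case is vacuous: since $p$ obtains $\mu(p)$ at a step $\le t_0$, the level-$k'$ hypothesis (b), which you invoke elsewhere, says $\mu(p)$ points to no agent, so $p$ is never removed in a cycle and transfers no shared ownership at all. Relatedly, the nested-sharing delicacy you flag in case (i) is real but is not resolved by your inner induction as formulated, since that induction tracks only sharers of $o$ and not sharers of the other cycle objects; the paper closes this with the observation, already deployed in the proof of Lemma~\ref{thm3:lemma1}, that because $C_{\sigma=\mu}$ is self-enforcing, shared ownership originating inside $C_{\sigma=\mu}$ propagates only along chains of owners of assignments that remain inside $C_{\sigma=\mu}$ --- a short induction over time that your sketch gestures at via Lemma~\ref{thm3:lemma1}(3) but does not actually supply. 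Both repairs are local and run precisely along the paper's lines, after which your proof coincides with it. One last slip: in your closing application, taking $K$ minimal with $C_{o_1}\subset C^{K-1}$ plus $i_0\notin C$ does not yield $i_0\notin C^{K-1}$; take instead $K$ minimal with $i_0\in C^K$, which gives $i_0\in C^K\backslash C^{K-1}$ directly (and then $o_1=\mu(i_0)\in\w(C^{K-1})$ automatically).
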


\begin{proof}[Proof of Lemma \ref{thm3:lemma2}]
	We prove the lemma by induction. 
	
	\textbf{Step 1}: If $ i\in C^1\backslash C^0 $, then $ o\in \w(C^0)=\w(C) $. Suppose by contradiction that $ o $ points to an agent $ i_1 $ in step $ t $. So $ i $ and $ o $ are involved in a cycle. Then either $ i_1 $ is an owner of $ o $, or $ i_1 $ obtains the shared ownership of $ o $. If  $ i_1 $ is an owner of $ o $, then $ i_1\in C $. By Lemma \ref{thm3:lemma1}, all agents in the cycle belong to $ C_{\sigma=\mu} $, which contradicts $ i\in C^1\backslash C^0 $. If $ i_1 $ obtains the shared ownership of $ o $, then all owners of $ o $ are removed before step $ t $. Because all agents in $ C_{\sigma>\mu} $ are removed after step $ t_0 $, it must be that $ o\in \w(C_{\sigma=\mu}) $. By  Lemma \ref{thm3:lemma1}, all agents in the cycle belong to $ C_{\sigma=\mu} $, which contradicts $ i\in C^1\backslash C^0 $. So it must be that $ o $ does not point to any agent in step $ t $. Thus, all owners of $ o $ are removed before step $ t $, which means that $ o\in \w(C_{\sigma=\mu}) $.
	
	\textbf{Step 2}: If $ i\in C^2\backslash C^1 $, then $ o\in \w(C^1) $ but $ o\notin \w(C) $. Suppose by contradiction that $ o $ points to an agent $ i_1 $ in step $ t $. So $ i $ and $ o $ are involved in a cycle.  If $ i_1 $ is an owner of $ o $, then there are two cases. If $ i_1\in C $, then by  Lemma \ref{thm3:lemma1}, all agents in the cycle belong to $ C_{\sigma=\mu} $, which is a contradiction. If $ i_1\in C^1\backslash C $, by the arguments in Step 1, $ \mu(i_1) $ does not point to any agent in step $ t $. So $ \mu(i_1) $ is not involved in a cycle, which is a contradiction. 
	
	If $ i_1 $ obtains the shared ownership of $ o $, then it means that all owners of $ o $ are removed before step $ t $. Note that $ C_o\subset C^1 $. For every $ j_1\in C_o $, if $ j_1\in C $, by Lemma \ref{thm3:lemma1}, $ j_1\in C_{\sigma=\mu}  $ and $ \mu(j_1)\in \w( C_{\sigma=\mu})$. Because $ C_{\sigma=\mu} $ is self-enforcing, if $ j_1 $ shares the ownership of $ o $ with some agents, those agents must belong to $ C_{\sigma=\mu} $. If $ j_1\in C^1\backslash C $, by Step 1, $ \mu(j_1) $ does not point to any agent and $ \mu(j_1)\in \w( C_{\sigma=\mu}) $. So $ \mu(j_1) $ is not involved in a cycle, and this means that $ j_1 $ does not share the ownership of $ o $ with any other agents. Thus, since $ i_1 $ obtains the shared ownership of $ o $, it must be that $ i_1\in C_{\sigma=\mu} $. But by Lemma \ref{thm3:lemma1}, it means that all agents in the cycle involving $ o $ belong to $ C_{\sigma=\mu} $. This is a contradiction. 
	
	So it must be that $ o $ does not point to any agent in step $ t $. This means that all owners of $ o $ are removed before step $ t $. In this case we have proved that, for every $ j_1\in C_o $, $ \mu(j_1)\in \w( C_{\sigma=\mu}) $. So $ o\in \Omega(C_{\sigma=\mu}|\w,\mu) $.

	\textbf{Step $ \ell\ge 3 $}: If $ i\in C^\ell\backslash C^{\ell-1} $, then $ o\in \w(C^{\ell-1}) $ but $ o\notin \w(C^{\ell-2}) $. Suppose by contradiction that $ o $ points to an agent $ i_1 $ in step $ t $. So $ i $ and $ o $ are involved in a cycle.  If $ i_1 $ is an owner of $ o $, then there are two cases. If $ i_1\in C $, then by  Lemma \ref{thm3:lemma1}, all agents in the cycle belong to $ C_{\sigma=\mu} $, which is a contradiction. If $ i_1\in C^{\ell-1}\backslash C $, then by the arguments in previous steps, $ \mu(i_1) $ does not point to any agent in step $ t $. So $ \mu(i_1) $ is not involved in a cycle, which is a contradiction. 
	
	If $ i_1 $ obtains the shared ownership of $ o $, then all owners of $ o $ are removed before step $ t $. Note that $ C_o\subset C^{\ell-1} $. For every $ j_1\in C_o $, if $ j_1\in C $, by Lemma \ref{thm3:lemma1}, $ j_1\in C_{\sigma=\mu}  $ and $ \mu(j_1)\in \w( C_{\sigma=\mu})$. Because $ C_{\sigma=\mu} $ is self-enforcing, if $ j_1 $ shares the ownership of $ o $ with some agents, those agents must belong to $ C_{\sigma=\mu} $. If $ j_1\in C^{\ell-1}\backslash C $, by the arguments in previous steps, $ \mu(j_1) $ does not point to any agent and $ \mu(j_1)\in \Omega(C_{\sigma=\mu}|\w,\mu)  $. So $ \mu(j_1) $ is not involved in a cycle, and this means that $ j_1 $ does not share the ownership of $ o $ with any other agents. Thus, since $ i_1 $ obtains the shared ownership of $ o $, it must be that $ i_1\in C_{\sigma=\mu} $. But by Lemma \ref{thm3:lemma1}, it means that all agents in the cycle involving $ o $ belong to $ C_{\sigma=\mu} $. This is a contradiction. 
	
	So it must be that $ o $ does not point to any agent in step $ t $. This means that all owners of $ o $ are removed before step $ t $. In this case we have proved that for every $ j_1\in C_o $, $ \mu(j_1)\in \Omega(C_{\sigma=\mu}|\w,\mu)  $. So $ o\in \Omega(C_{\sigma=\mu}|\w,\mu) $.
\end{proof}

Now, if $ o_1\in \Omega(C|\w,\mu)$, there exists $ k\ge 1 $ such that $ o_1\in \w(C^{k-1}) $ and $ i_0\in C^k\backslash C^{k-1} $. Because $ i_0 $ obtains $ o_1 $ in step $ t_0 $, by Lemma \ref{thm3:lemma2}, $ o_1\in \Omega(C_{\sigma=\mu}|\w,\mu) $. This is a contradiction.

\section{Proofs of propositions and lemmas}

\begin{proof}[\textbf{Proof of Proposition \ref{prop:noredundant}}]
    We prove that every allocation $ \mu $ in the rectified core belongs to the strong core. Suppose by contradiction that $ \mu $ is weakly blocked via another allocation $ \sigma $. Let $ C $ be a minimal coalition that weakly blocks $ \mu $ via $ \sigma $. So if $ C' $ is a sub-coalition of $ C $ that is self-enforcing in $ \sigma $, then for all $ i\in C' $, $ \mu(i)=\sigma(i) $. Because $ \sigma(C')\subset \w(C')\cup \{o^*\}  $ and $ |\w(C')|\le |C'| $, if there exists $ j\in C\backslash C' $ such that $ \sigma(j)\in \w(C')\backslash \sigma(C')$, then there exists some $ i_0\in C' $ such that $ \sigma(i_0)=o^* $. Therefore, $ \mu(i_0)=o^* $. But then, because agents accept all objects, $ C' $ can weakly block $ \mu $ via another allocation $ \sigma' $ in which $ \sigma'(i)=\sigma(i) $ for all $ i\in C'\backslash \{i_0\} $ and $ \sigma'(i_0)=\sigma(j) $. But it contradicts the definition of $ C $. So there does not exist $ j\in C\backslash C' $ such that $ \sigma(j)\in \w(C')\backslash \mu(C')$. But this means that $ C $ rectification blocks $ \mu $ via $ \sigma $, because the third condition in Definition \ref{Definition:induction:block} is trivially satisfied. But this is a contradiction. So $ \mu $ belongs to the strong core.
    
    Example \ref{example:exclusion core not in induction} and Example \ref{Example:HET} have explained that the strong core and the (refined) exclusion core may not include each other.
\end{proof}

\begin{proof}[\textbf{Proof of Proposition \ref{prop:nooverlapping}}]
	We first prove that all allocations in the strong core can be found by YRMH-IGYT. Let $ \mu $ be any allocation in the strong core. For every $ o\in O $, there must exist one and only one agent $ i_o\in C_o $ who obtains an object in $ \mu $. Otherwise, there must exist some $ o'\in O $ such that all agents in $ C_{o'} $ obtain nothing in $ \mu $. But then $ C_{o'} $ can weakly block $ \mu $ by allocating $ o' $ to one member of $ C_{o'} $, which is a contradiction. Then $ \mu $ can be found by YRMH-IGYT with an order of agents in which all $ \{i_o\}_{o\in O} $ are ranked above the other agents. 
	
	Similarly, let $ \mu $ be any allocation in the refined exclusion core. For every $ o\in O $, there must exist one and only one agent $ i_o\in C_o $ who obtains an object in $ \mu $. Otherwise, there must exist some $ o'\in O $ such that all agents in $ C_{o'} $ obtain nothing in $ \mu $. But then $ C_{o'} $ can refined exclusion block $ \mu $ by allocating $ o' $ to one member of $ C_{o'} $, which is a contradiction. Note that $ C_{o'} $ is minimal self-enforcing, so the third condition in Definition \ref{Definition:refined} is satisfied. Then $ \mu $ can be found by YRMH-IGYT with an order of agents in which all $ \{i_o\}_{o\in O} $ are ranked above the other agents.
\end{proof}

\begin{proof}[\textbf{Proof of Proposition \ref{prop:private:ownership}}]
	Let $ \mu $ be any allocation in the rectified core or in the exclusion core. We prove that $ \mu $ can be found by YRMH-IGYT for some linear order $ \rhd $ of agents. We will repeat two operations to find such an order. Because the proofs for the two core notions share similar steps, we prove them simultaneously.
	
	Start with the set of all objects and the set of all objects.
	
	\paragraph{Operation A} Let all agents point to their favorite objects and all objects point to their owners. Let the null object $ o^* $ point to every agent. There must exist cycles and cycles must be disjoint. These cycles will appear in YRMH-IGYT with any order of agents. We prove that every agent in every cycle must obtain the object she points to in $ \mu $. Suppose by contradiction that in some cycle not all agents obtain the objects they point to in $ \mu $. Denote by $ C $ the set of the agents in the cycle, by $ C_1 $ the subset of $ C $ who obtain objects they point to in $ \mu $, and by $ C_2 $ the set of the remaining agents in $ C $. Then we argue that $ C_2 $ can exclusion block $ \mu $ via another allocation $ \sigma $ in which, for all $ i\in C $, $ \sigma(i) $ is the object pointed by $ i $, for all $ j\in I\backslash C $ with $ \mu(j)\in \sigma(C) $, $ \sigma(j)=o^* $, and for all other $ j $, $ \sigma(j)=\mu(j) $. First, all agents in $ C_2 $ become strictly better off in $ \sigma $. Second, for any $ j\in I\backslash C  $ with $ \mu(j)\succ_j \sigma(j) $, it must be that $ \mu(j)\in \sigma(C) $. Because the agents in $ C $ form a cycle, it is clear that $ \sigma(C)\subset\Omega(C_2|\w,\mu) $. Similarly, we argue that $ C $ can rectification block $ \mu $ via $ \sigma $. The key observation is that, because the agents in $ C $ form a cycle, $ C $ is minimal self-enforcing in $ \sigma $. But these arguments contradict the assumption that $ \mu $ is in the rectified core or in the exclusion core.
	
	Remove all cycles. Let remaining agents point to their favorite objects among the remaining ones. If there exist cycles, these cycles must be disjoint and will appear in YRMH-IGYT with any order of agents. Remove all cycles. Repeat this operation until no cycles appear. By inductively applying the arguments in the above paragraph, we can prove that all of these cycles must appear in YRMH-IGYT with any order of agents, and the agents in every cycle must obtain the objects they point to in $ \mu $. Denote the set of all removed agents by $ D_1 $. Place $ D_1 $ at the bottom of an order $ \rhd $ and rank them arbitrarily. Then remove $ D_1 $ with their assignments.
	
	\paragraph{Operation B} After Operation A, we will obtain a graph in which all remaining agents point to their favorite remaining objects, all remaining objects point to their owners if their owners are not removed, and otherwise they point to nothing, but there are no cycles. Because every agent points to one agent and every object points to at most one agent, the agents and objects in the graph must form disjoint trees. The root of each tree is an object that points to nothing. Denote these roots by $ o_1,o_2,\ldots,o_m $. Every remaining agent must be connected to one root through a unique directed path in the tree. For every $ \ell\in\{1,\ldots,m\} $, let $ I_\ell $ be the set of agents who directly point to $ o_\ell $. We prove a useful lemma.
	
	\begin{lemma}\label{lemma:operationB}
		There exist $ o_\ell\in \{o_1,o_2,\ldots,o_m\} $ and $ i\in I_\ell $ such that $ \mu(i)=o_\ell $.
	\end{lemma}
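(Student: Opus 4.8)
The plan is to argue by contradiction, assuming that no root is held by one of its pointers: for every $\ell$ and every $i\in I_\ell$, $\mu(i)\neq o_\ell$. Throughout I write $f(x)$ for the favorite remaining object of a remaining agent $x$, i.e.\ the object $x$ points to in the Operation~B graph. Since any agent whose favorite among the remaining objects were $o^*$ would already have been removed together with $o^*$ in Operation~A, $f(x)$ is always a genuine object, and the lemma's conclusion is precisely that some remaining $x$ satisfies $\mu(x)=f(x)$ with $f(x)$ a root.

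First I would record two consequences of $\mu$ lying in the rectified core or the refined exclusion core (both are contained in PE). \emph{(i) No improving cycle.} Consider the map sending each remaining $x$ to $\mu^{-1}(f(x))$; it is well defined because under PE no agent's favorite remaining object is wasted (otherwise handing it to $x$ would Pareto-dominate $\mu$). A nontrivial cycle of this map would be a set of agents each strictly preferring the object held by the next, and rotating their holdings Pareto-dominates $\mu$, which is a weak block by the grand coalition and hence impossible. So every cycle is a fixed point, and at least one \textbf{self-satisfied} agent $x$ exists, with $\mu(x)=f(x)$. \emph{(ii) No reclaiming.} If a remaining agent $j$ owns a remaining object $o$ with $o\succ_j\mu(j)$, then $\{j\}$ reclaims $o$; this is a rectification block, and also a refined exclusion block since $o\in\w(\{j\})\subseteq\Omega(\{j\}|\w,\mu)$ and the unaffected set is empty. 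Hence $\mu(j)\succsim_j o$ for every object $j$ owns, so a self-satisfied agent's favorite, if a non-root, is owned by some other remaining agent who strictly prefers its own assignment to it.

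By (i) it suffices to exhibit a self-satisfied agent whose favorite is a root. Here I would exploit the Operation~B forest: its edges run agent $\to f(\cdot)\to$ owner, the roots are the sinks, the graph is acyclic, and so each node has a depth to its root, with objects at even and agents at odd depths. I would choose a self-satisfied $x$ minimizing the depth of $f(x)$ and suppose for contradiction that $f(x)=o$ is a non-root, owned by $j$. By minimality $j$ cannot be self-satisfied (else $(j,f(j))$ would have strictly smaller favorite-depth), so $f(j)\succ_j\mu(j)$; together with (ii) this gives $f(j)\succ_j o$, and since $f(j)\neq o$ (otherwise $j$ would point to its own endowment, forming an Operation~A cycle), $f(j)$ sits at strictly smaller depth. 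The target is to show that iterating toward the root forces a self-satisfied agent with a strictly shallower favorite, contradicting minimality, so that $f(x)$ must in fact be a root.

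\textbf{The main obstacle} is that the matching $\mu$ is not tied to forest depth: the owner/favorite links decrease depth, but ``who holds what'' does not, so the descent along owners must be reconciled with the demand map of~(i). The crux is to glue the no-improving-cycle condition and the no-reclaiming condition so that tracing owners toward the root terminates at a self-satisfied agent pointing to a root, rather than merely at an agent who \emph{wants} a root it does not hold; equivalently, to rule out a configuration in which a root is held by a non-pointer while every owner is content with its own assignment. I expect this extremal bookkeeping, not the two blocking arguments, to be where the real work lies.
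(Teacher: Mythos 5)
Your preparatory facts (i) and (ii) are correct and their blocking justifications are fine, but the proof stops exactly where you say it does, and the obstacle you flag is not closable by the extremal descent you set up: it is the entire content of the lemma. The descent stalls because the two kinds of links are unrelated. Your condition (i) controls the holder map $x\mapsto\mu^{-1}(f(x))$, whose orbits do end at self-satisfied agents; the forest depth decreases along owner links. From the non-self-satisfied owner $j$ of $f(x)$ you can follow the demand map to some self-satisfied agent, but nothing bounds the depth of \emph{that} agent's favorite, so the minimality of $x$ is never contradicted. More fundamentally, (i) and (ii) only exploit blockings that are Pareto improvements (the grand coalition) or singleton reclaimings; the configuration you must exclude --- every root $o_\ell$ held by some $i_\ell\notin I_\ell$ --- is ruled out in the paper by blockings that \emph{strictly harm outsiders} (evicted holders get $o^*$) and are therefore invisible to your two conditions.

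The paper's argument is a coalition construction, not bookkeeping. Suppose every root $o_\ell$ is held by some $i_\ell\notin I_\ell$, and trace the unique path from $i_\ell$ toward the root of the tree containing $i_\ell$. In Case 1 some $i_\ell$ lies in the tree rooted at its own $o_\ell$, yielding a chain $i_\ell\rightarrow o_1\rightarrow i_1\rightarrow\cdots\rightarrow i_k\rightarrow o_\ell$; in $\sigma$ each chain agent takes the object it points to and the outside holders of $o_1,\ldots,o_k$ are evicted. This is an exclusion block by the strictly improved members $C_2$ (since $i_\ell,i_k\in C_2$ and the chain gives $\sigma(C)\subset\Omega(C_2|\w,\mu)$), and a rectification block by $C\cup D_1$: the Operation~A agents $D_1$ are absorbed precisely to make the coalition self-enforcing (the root's owner is in $D_1$), and condition 3 of Definition~\ref{Definition:induction:block} holds because any self-enforcing unaffected subcoalition must sit inside $D_1$ and the only redundant endowment of $D_1$ used is $o_\ell$ itself, whose $\mu$-holder $i_\ell$ belongs to the coalition. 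In Case 2, where no $i_\ell$ sits in its own tree, the paper takes a minimal set of roots closed under the map sending a root to the root of the tree containing its holder, and glues the corresponding chains into a single blocking coalition spanning several trees. The chain construction, the eviction of outside holders, the $D_1$ augmentation with its condition-3 verification, and the cross-tree closure of Case 2 are all absent from your proposal; the gap you acknowledge is genuine and is where the proof actually lives.
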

    
    \renewcommand{\qedsymbol}{$\blacksquare$}
    
    \begin{proof}[Proof of Lemma \ref{lemma:operationB}]
    	Because $ \mu $ is Pareto efficient, every $ o_\ell $ must be assigned to some agent. Suppose by contradiction that the lemma is not true. That is,  every $ o_\ell\in \{o_1,o_2,\ldots,o_m\} $ is assigned to some agent $ i_\ell $ that is not in $ I_\ell $. Then we prove that $ \mu $ is rectification blocked and also exclusion blocked. We consider two cases.
    	
    	\textbf{Case 1:} If there exists some $i_\ell $ who is connected to $ o_\ell $ through a directed path, without loss of generality, denote the path by
    	\[
    	i_\ell \rightarrow o_1\rightarrow i_1 \rightarrow o_2\rightarrow i_2\rightarrow \cdots \rightarrow i_{k-1}\rightarrow o_k \rightarrow i_k \rightarrow o_\ell.
    	\]In the path, denote by $ C $ the set of agents, by $ C_1 $ the set of agents who obtain the objects they point to in $ \mu $, and by $ C_2 $ the set of remaining agents. It is clear that $ i_\ell,i_k\in C_2 $. Then we argue that $ C_2 $ can exclusion block $ \mu $ via another allocation $ \sigma $ in which, for all $ i\in C $, $ \sigma(i) $ is the object pointed by $ i $, for all $ j\notin C $ with $ \mu(j)\in \sigma(C) $, $ \sigma(j)=o^* $, and for all other $ j $, $ \sigma(j)=\mu(j) $. First, all agents in $ C_2 $ are strictly better off in $ \sigma $. Second, for every $ j $ with $ \mu(j)\succ_j \sigma(j)$, it must be that $ \mu(j)\in \sigma(C) $. Because $ i_\ell,i_k\in C_2 $ and the agents in $ C $ form a chain, it is easy to see that $ \sigma(C)\subset \Omega(C_2|\w,\mu) $.
    	
    	We also argue that $ E=C\cup D_1 $ can rectification block $ \mu $ via $ \sigma $. First, because $ D_1 $ consists of agents who form cycles in Operation A, $ D_1 $ is self-enforcing, and for every $ j\in D_1 $, $ \mu(j)=\sigma(j) $. Second, because the agents in $ C $ form a chain and $ i_k\in C_2 $, for every self-enforcing $ E'\subset E_{\sigma=\mu} $, $ E' $ cannot involve any agent in $ C_1 $. So $ E' $ must be a subset of $ D_1 $. Then if there exists $ i\in E\backslash E' $ such that $ \sigma(i)\in \w(E') $, it must be that $ \sigma(i)=o_\ell $. The third condition in Definition \ref{Definition:induction:block} is satisfied because $ \mu(i_\ell)=o_\ell $. 
    	
    	But the above arguments contradict the assumption that $ \mu $ is in the rectified core or in the exclusion core.
    	
    	\textbf{Case 2:} If there does not exist $ i_\ell $ who is connected to $ o_\ell $, then let $ \{\ell_1,\ell_2, \ldots,\ell_x\} $ be a smallest subset (in set inclusion sense) of $ \{1,2,\ldots,m\} $ such that, for every $ \ell_y\in\{\ell_1,\ell_2, \ldots,\ell_x\} $, $ i_{\ell_y} $ is connected to some $ o_{\ell_z}$ with $ \ell_z\in \{\ell_1,\ell_2, \ldots,\ell_x\} $. Such a subset must exist, because in the worst case $ \{1,2,\ldots,m\} $ is such a set. Denote by $ C $ the union of the set of agents in every directed path that connects every $ i_{\ell_y} $ to $ o_{\ell_z} $, by $ C_1 $ the subset of $ C $ who obtain the objects they point to in $ \mu $, and by $ C_2 $ the set of remaining agents in $ C $. Note that $ C_2 $ is nonempty because every $ i_{\ell_y} $ belongs to $ C_2 $, and for every $ o_{\ell_y} $, the agent who points to $ o_{\ell_y} $ also belongs to $ C_2 $. We argue that $ C_2 $ can exclusion block $ \mu $ via another allocation $ \sigma $ in which, for every $ i\in C $, $ \sigma(i) $ is the object pointed by $ i $, for every $ j\notin C $ with $ \mu(j)\in \sigma(C) $, $ \sigma(j)=o^* $, and for every other $ j $,  $ \sigma(j)=\mu(j) $. First, all agents in $ C_2 $ are strictly better off in $ \sigma $. Second, for every $ j $ with $ \mu(j)\succ_j \sigma(j)$, it must be that $ \mu(j)\in \sigma(C) $. As in Case 1, it is not hard to see that $ \sigma(C)\subset \Omega(C_2|\w,\mu) $.
    	
    	We also argue that $ E=C\cup D_1 $ can rectification block $ \mu $ via $ \sigma $. First, because $ D_1 $ consists of agents who form cycles in Operation A, $ D_1 $ is self-enforcing, and for every $ j\in D_1 $, $ \mu(j)=\sigma(j) $. Second, similarly as in Case 1, for every self-enforcing $ E'\subset E_{\sigma=\mu} $, $ E' $ cannot involve any agent in $ C_1 $. So $ E' $ must be a subset of $ D_1 $. Then if there exists $ i\in E\backslash E' $ such that $ \sigma(i)\in \w(E') $, it must be that $ \sigma(i)=o_{\ell_z} $ for some $ \ell_z\in\{\ell_1,\ell_2, \ldots,\ell_x\} $. The third condition in Definition \ref{Definition:induction:block} is satisfied because every $o_{\ell_z} $ is assigned to some $ i_{\ell_y} $ who belongs to $ E $. 
    	
    	But the above arguments contradict the assumption that $ \mu $ is in the rectified core or in the exclusion core.
    \end{proof}
\renewcommand{\qedsymbol}{$\square$}
	
	Denote the set of $ i\in I_\ell $ who obtains $ o_\ell\in \{o_1,o_2,\ldots,o_m\} $ by $ U_1 $. Place $ U_1 $ at the top of the order $ \rhd $ and rank them arbitrarily. Remove $ U_1 $ with their assignments.
	
	\vspace{.5cm}

	For the remaining agents, repeat Operation A and Operation B. After we obtain the set of agents $ D_k $ in Operation A, place them right above $ D_{k-1} $ in the order $ \rhd $ and rank them arbitrarily. After we obtain the set of agents $ U_k $ in Operation B, place them right below $ U_{k-1} $ in the order $ \rhd $ and rank them arbitrarily. It is easy to verify that $ \mu $ is found by YRMH-IGYT with $ \rhd $. This means that all allocations in the rectified core and all allocations in the exclusion core can be found by YRMH-IGYT. Then the proposition holds.
\end{proof}

\begin{proof}[\textbf{Proof of Lemma \ref{lemma:consistent}}]
	In every $ \Gamma \in  \mathcal{E}^0$, for every allocation $ \mu $, we augment it with an agent $ i^* $ who is assigned $ o^* $ and denote the augmented allocation by $ \mu^A $. Then $ \mu^A $ is an allocation in the augmented economy $ \Gamma^* $.

	We first prove that the exclusion core is consistent. For every $ \Gamma \in  \mathcal{E}^0 $ and every allocation $ \mu $ in the exclusion core in $ \Gamma^* $, we prove that $ \mu^R $ is in the exclusion core in $ \Gamma $. Because $ \mu $ is Pareto efficient in $ \Gamma^* $, $ \mu $  must assign $ o^* $ to $ i^* $ and $ \mu^R $ must be Pareto efficient in $ \Gamma $. Suppose that $ \mu^R $ is exclusion blocked by a coalition $ C $ via another allocation $ \sigma $ in $ \Gamma $. Because $ \mu^R $ is Pareto efficient, it must be that $ C\subsetneq I $. For every $ j $ with $\mu(j)\succ_j \sigma(j)$,  because $\mu(j)\in  \Omega(C|\w,\mu^R)$, $ \mu(j) $ cannot be a public endowment in $ \Gamma $.\footnote{Recall that $ \Omega(C|\w,\mu^R)=\w(\cup_{k=0}^\infty C^k) $, where $ C^0=C $ and $ C^k=C^{k-1}\cup ([\mu^{R}]^{-1}\circ \w)(C^{k-1}) $ for every $ k\ge 1 $. Consider any $ j $ with $\mu(j)\succ_j \sigma(j)$, because $ \mu(j)\in \Omega(C|\w,\mu^R) $, there exists $ k\ge 1 $ such that $ j\in C^k\backslash C^{k-1} $. So $ C^{k-1}\subsetneq I $. It means that $ \w(C^{k-1}) $ cannot include public endowments. Because $ \mu(j)\in \w(C^{k-1}) $, $ \mu(j) $ is not a public endowment.} So $ \mu(j)\in \Omega(C|\w,\mu^R)\backslash O^P $. Let $ \w^* $ be the endowment function in $ \Gamma^* $. Then it is clear that $ \Omega(C|\w^*,\mu)= \Omega(C|\w,\mu^R)\backslash O^P$, because $ O^P $ is the private endowment of $ i^* $ in $ \Gamma^* $ and $ i^*\notin C $. But this means that  in $ \Gamma^* $, $ \mu $ is exclusion blocked by $ C $ via $ \sigma^A $, which is a contradiction. 
	
	Symmetrically, consider any $ \mu $ in the exclusion core in $ \Gamma $. We prove that $ \mu^A $ is in the exclusion core in $ \Gamma^* $. Suppose that $ \mu^A $ is exclusion blocked by a coalition $ C $ via another allocation $ \sigma $ in $ \Gamma^* $. Because $ \mu^A $ is Pareto efficient in $ \Gamma^* $, it must be that $ C\subsetneq I $, and therefore $ \Omega(C|\w,\mu)\supset \Omega(C|\w^*,\mu^A) $. But  this means that in $ \Gamma $, $ \mu $ is exclusion blocked by $ C $ via $ \sigma^R $, which is a contradiction.
	
	We then prove that YRMH-IGYT is consistent. Consider any $ \Gamma \in  \mathcal{E}^0 $ and the procedure of YRMH-IGYT. Fix an order of agents $ \rhd $. In the first step, all public endowments point to the $ \rhd $-highest agent (denoted by $ i_0 $). If $ i_0 $'s favorite object is a public endowment, $ i_0 $ will point to the object and directly obtain it. Otherwise, $ i_0 $ will point to a private endowment (denoted by $ o_1 $) of an agent $ i_1 $. Then $ i_1 $ will be moved to the top of the order. In the next step, if $ i_1 $ points to a public endowment, she will form a cycle with $ i_0 $ and the cycle will be removed. Otherwise, $ i_1 $ will point to a private endowment (denoted by $ o_2 $) of an agent $ i_2 $. Continuing this argument, we will find a cycle of the form
	\[
	o_x\rightarrow i_0 \rightarrow o_1 \rightarrow i_1 \rightarrow o_2 \rightarrow i_2 \rightarrow \cdots \rightarrow i_n\rightarrow o_x
	\]
	where $ o_x $ is a public endowment and every other $ o_\ell $ is privately owned by $ i_\ell $. All agents in the cycle will obtain the objects they point to. 
	
	Now in the procedure of YRMH-IGYT in $ \Gamma^* $, if in the first step $ i_0 $ points to a public endowment as she does in $ \Gamma $, because in $ \Gamma^* $ all public endowments point to $ i^* $, after being moved to the top of the order of agents, $ i^* $ will point to $ o^* $ and be removed with $ o^* $. So in the next step, $ i_0 $ will obtain the public endowment directly as she does in $ \Gamma $. If in the first step $ i_0 $ points to the private endowment $ o_1 $ of $ i_1 $ as she does in $ \Gamma $, then $ i_1 $ will be moved to the top of the order. After $ i_1 $ points to the private endowment $ o_2 $ of $ i_2 $, $ i_2 $ and consequently $ i_3,\ldots,i_n$ will be sequentially moved to the top of the order of agents, until $ i_n $ points to a public endowment $ o_x $. Because $ o_x $ points to $ i^* $, after  $ i^* $ is moved to the top of the order and removed with $ o^* $, $ i_n $ will obtain $ o_x $ directly. After this, the remaining agents in the sequence will sequentially obtain the objects they point to. So the first several steps of YRMH-IGYT are essentially the same in $ \Gamma $ and $ \Gamma^* $.
	
	After $ i_0,i_1,\ldots,i_n $ are removed, we can repeat the above argument to prove that the remaining steps of YRMH-IGYT are essentially the same in $ \Gamma $ and $ \Gamma^* $. This proves that YRMH-IGYT is consistent.
\end{proof}  

\begin{proof}[\textbf{Proof of Lemma \ref{lemma:rectified core}}]
	For every $ \Gamma \in  \mathcal{E}^0 $ and every allocation $ \mu $ in the rectified core in $ \Gamma^* $, we prove that $ \mu^R $ is in the rectified core in $ \Gamma $. Suppose by contradiction that $ \mu^R $ is rectification blocked by a coalition $ C $ via another allocation $ \sigma $. Because $ \mu^R $ is Pareto efficient in $ \Gamma $, $ C\subsetneq I $. So $ \w(C) $ does not include public endowments. Then it is clear that $ C $ can rectification block $ \mu $ via $ \sigma^A $ in $ \Gamma^* $, which is a contradiction.
\end{proof}

\begin{proof}[\textbf{Proof of Proposition \ref{prop:private:public}}]
	For every $ \Gamma \in  \mathcal{E}^0 $, because $ \Gamma^* $ is a private-ownership economy, by Proposition \ref{prop:private:ownership}, YRMH-IGYT equals the exclusion core in $ \Gamma^* $. By Lemma \ref{lemma:consistent}, YRMH-IGYT and the exclusion core are consistent. So YRMH-IGYT also equals the exclusion core in $ \Gamma $. Because the refined exclusion core is always between them, the three solutions coincide. On the other hand, by Lemma \ref{lemma:rectified core}, the rectified core in $ \Gamma $ is a superset of the rectified core in $ \Gamma^* $, which by Proposition \ref{prop:private:ownership} equals the exclusion core. So the rectified core  is a superset of the exclusion core in $ \Gamma$.
\end{proof}

\bibliographystyle{ecta}
\bibliography{reference}

\end{document}